\newcommand{\E}{{\mathbb E}}       
\newcommand{\pa}{{\rm pa}}       
\newcommand{\ch}{{\rm ch}}       
\newcommand{\sib}{{\rm sib}}	   
\newcommand{\an}{{\rm an}}	   
\newcommand{\An}{{\rm An}}	   
\newcommand{\de}{{\rm de}}	   
\newcommand{\pspa}{{{\rm pa}_D}}      
\newcommand{\psan}{{{\rm an}_D}}       
\newcommand{\psAn}{{{\rm An}_D}}       
\newcommand{\Ktop}{K_{\rm top}}
\newcommand{\Kbttm}{K_{\rm bttm}}
\newcommand{\irr}{{\rm irr}}
\newcommand{\Irr}{{\rm Irr}}
\newcommand{\independent}{\protect\mathpalette{\protect\independenT}{\perp}}
\def\independenT#1#2{\mathrel{\rlap{$#1#2$}\mkern2mu{#1#2}}}
\newcommand\numberthis{\addtocounter{equation}{1}\tag{\theequation}}
\begin{document}

\title{Causal Discovery with Unobserved Confounding and Non-Gaussian Data}

\author{\name Y. Samuel Wang \email ysw7@cornell.edu \\
       \addr Department of Statistics and Data Science\\
Cornell University\\
       Ithaca, NY 14853, USA
       \AND
       \name Mathias Drton \email mathias.drton@tum.de \\
       \addr Department of Mathematics\\
       Technical University of Munich\\
       85748 Garching bei M\"{u}nchen, Germany}

\editor{}

\maketitle

\begin{abstract}
We consider 
recovering causal structure from multivariate observational data. We assume the data arise from a linear structural equation model (SEM) in which the idiosyncratic errors are allowed to be dependent in order to capture possible latent confounding. Each SEM can be represented by a graph where vertices represent observed variables, directed edges represent direct causal effects, and bidirected edges represent dependence among error terms. Specifically, we assume that the true model corresponds to a bow-free acyclic path diagram; i.e., a graph that has at most one edge between any pair of nodes and is acyclic in the directed part. We show that when the errors are non-Gaussian, the exact causal structure encoded by such a graph, and not merely an equivalence class, can be recovered from observational data. The 
method we propose for this purpose uses estimates of suitable moments, but, in contrast to previous results, does not require specifying the number of latent variables a priori. We also characterize the output of our procedure when the assumptions are violated and the true graph is acyclic, but not bow-free.  
We illustrate the effectiveness of our procedure in simulations and an application to an ecology data set.
\end{abstract}

\begin{keywords}
Causal discovery, Graphical model, Latent variables, Non-Gaussian data, Structural equation model
\end{keywords}
\section{Introduction}
We consider the problem of discovering causal structure from multivariate data when only observational data is available, but latent confounding may exist between the observed variables. In our main result we show that if the data are generated under a recursive linear structural equation model with non-Gaussian idiosyncratic errors, the exact causal structure can be recovered provided the confounding is limited to pairs of variables which
do not have a direct effect on each other. These models correspond to
\emph{bow-free acyclic path diagrams} (BAPs). 
\subsection{Linear structural equation models and graphs}
Structural equation models (SEMs) are multivariate statistical models
that encode causal relationships and are popular in the social and
biological sciences \citep{bollen2014structural, shipley2016cause}. SEMs may be formulated to explicitly include latent unobserved variables, but in this article we consider a setup in which the latent
variables have been marginalized out and the models only explicitly refer to effects of the
unobserved variables.  This approach has, in particular, been fruitful for causal discovery \citep{handbook:evans}.


A linear SEM assumes that we observe a sample comprised of i.i.d. copies of a random vector $Y = (Y_{v} : v \in V)$ that solves the equation system
\begin{equation}
\label{eq:sem}
Y_{v} = \sum_{u \neq v}\beta_{v,u}Y_{u} + \varepsilon_{v}, \quad v\in V.
\end{equation}
The direct effect of $Y_u$ on $Y_v$ is encoded by $\beta_{v,u}$, and $\varepsilon_{v}$ is an idiosyncratic error term of mean zero.  Note that our setup assumes throughout, and without loss of generality,  that $Y$ is \emph{centered}.
We collect the effects $\beta_{v,u}$ into a $p \times p$ matrix $B =
\left(\beta_{v,u}\right)_{u, v \in V}$ and the error terms into a
vector $\varepsilon = (\varepsilon_v)_{v \in V}$.  
Because the matrix $B$ encodes the direct causal effect of $Y_{u}$ onto $Y_{v}$ for all $u, v \in V$, we will use the term \emph{direct effects} to refer to the matrix $B$. Each copy of the error vector 
$\varepsilon$ is drawn i.i.d.\ with expectation 0, but
we allow for unobserved confounding between different variables, say $Y_{v}$ and $Y_{u}$, by allowing the corresponding errors, $\varepsilon_{v}$ and $\varepsilon_{u}$, to be dependant. 
In vector form~\eqref{eq:sem} reads $Y = BY + \varepsilon$, which is uniquely solved by 
\begin{equation}\label{eq:linearTransformErrors}
Y = (I-B)^{-1}\varepsilon    
\end{equation} when
$I-B$ is invertible. Letting $\Omega := \mathbb{E}(\varepsilon
  \varepsilon^T ) =(\omega_{v,u})_{u,v \in V}$ be the covariance matrix of $\varepsilon$, we obtain that the covariance matrix of the observed variables in $Y$ is
  \begin{equation}\label{eq:sigmaDef}
\Sigma \coloneqq \E(YY^T) = (I-B)^{-1}\Omega(I-B)^{-T}.
\end{equation}

Throughout the article, we describe SEMs using the language of
graphical models or path diagrams \citep{handbook}.
We represent 
each SEM
by a mixed graph
$G = (V, E_\rightarrow, E_\leftrightarrow )$, where each vertex
$v \in V$ corresponds to an observed variable, and $E_\rightarrow$ and
$E_\leftrightarrow$ are sets of directed edges and bidirected edges,
respectively. Let $u,v\in V$ be two distinct vertices.  We represent a direct effect of $u$ on  $v$ by the
directed edge $u \rightarrow v \in E_\rightarrow$ and say that $u$ is
a \emph{parent} of \emph{child} $v$. So,
$\beta_{v,u} \neq 0$ only if $u$ is a parent of $v$. If there exists a
sequence of directed edges from $u$ to $v$, we say that $u$ is an
\emph{ancestor} of its \emph{descendant} $v$. Unobserved confounding
between $v$ and $u$ is represented by
$v \leftrightarrow u \in E_\leftrightarrow$, and we say that $v$ and
$u$ are \emph{siblings}. So, $\omega_{v,u} \neq 0$ only if $u$ and $v$ are
siblings.  The sibling relation is symmetric; i.e., $u$ being a sibling of $v$ implies that $v$ is a sibling of $u$.  We denote the sets of parents, children, ancestors, descendants, and siblings of $v$ as $\pa(v)$, $\ch(v)$, $\de(v)$, and $\sib(v)$, respectively.  We let $\An(v):=\an(v) \cup \{v\}$.  The problem of interest is then to infer the 
graph corresponding to a given data-generating SEM.

We assume that the true model is a recursive SEM.  
In graphical terms, this means that the mixed graph $G = (V, E_\rightarrow, E_\leftrightarrow )$ corresponding to the model is acyclic in the sense of not containing any directed cycles.  There then exists a (not necessarily unique) total ordering of $V$ under which $u \prec v$ implies $v \not \in \an(u)$.  If, in addition, the graph $G$ contains no bidirected edges, i.e., $E_\leftrightarrow = \emptyset$, then $G$ is a \emph{directed acyclic graph} (DAG).
A \emph{bow} in $G$ is a subgraph of two vertices $u$ and $v$ that contains both a directed and a bidirected
edge; i.e., $u \leftrightarrow v$ and either $u \rightarrow v$ or $v \rightarrow u$.  In this article, we primarily consider mixed graphs that are acyclic and do not contain bows, though in Section~\ref{sec:modelMiss} we also consider the case where the true graph may contain bows.   
Following \citet{drton2009computing}, we
refer to these graphs as \emph{bow-free acyclic path} diagrams (BAPs).
This class of graphs was also considered, e.g., by
\citet{brito:2002} and \cite{nowzohour2015structure}. 

\subsection{Previous work}

Most work on causal discovery with latent variables 
focuses on recovering causal structure in the form of an \emph{ancestral graph}. 
For settings without selection effects, as considered here, ancestral graphs are special cases of BAPs that satisfy the additional restriction that $\an(v)\cap\sib(v)=\emptyset$ for all nodes $v$. 
By adding bidirected edges to $E_\leftrightarrow$, every ancestral graph $G$ can be transformed into a \emph{maximal ancestral graph} (MAG) while preserving the conditional independence relations in $G$.
Gaussian MAG models can then be entirely characterized by conditional independence \citep{richardson:2002}.
However, for any MAG there are generally other MAGs that are Markov equivalent, i.e., encode the same set of conditional independence relations.  Markov equivalent MAGs have the same adjacencies
but the edges may be of different orientations or types.  The Markov equivalence class of any MAG can be compactly represented by a \emph{partial ancestral graph} (PAG) \citep{ali2009markov}. 


\citet{spirtes2000causation} 
proposed the Fast Causal Inference algorithm (FCI) to estimate the PAG corresponding to the underlying causal graph. \citet{zhang2008completeness} added additional orientation rules such that the output of FCI is complete. \citet{colombo:2012}, \citet{claassen2013learning}, and \cite{chen2021causal}  develop additional variants---RFCI, FCI+, and lFCI, respectively---which only require a polynomial number of conditional independence tests if the degree of the graph is bounded, or in the last case exploit possible local separation properties.  \citet{triantafillou2016score} select a MAG via a greedy search which maximizes a penalized Gaussian likelihood, and \citet{bernstein2020permutation} propose a greedy search over partial orderings of the variables.
Efforts are also underway to refine the picture provided by conditional independence by considering additional non-parametric constraints imposed by SEM \citep{verma1990equivalence,
shpitser2014introduction, evans2016margins}.  
In a different vein, 
\citet{nowzohour2015structure} propose a greedy search which assumes that the true model is a linear SEM with Gaussian errors which corresponds to a BAP.

The previously mentioned methods have enjoyed great success but operate in a regime in which only an equivalence class of graphs (e.g., via the PAG) can be discovered and different graphs in the equivalence class may have conflicting causal interpretations.
%
%
In contrast,  \citet{shimizu2006lingam} show that when the true model is a recursive linear SEM with \emph{non-Gaussian} errors, the exact graph---not just an equivalence class---can be identified from observational data using independent component analysis (ICA).  
Instead of ICA, the subsequent DirectLiNGAM \citep{shimizu2011direct} and Pairwise LiNGAM \citep{hyvarinen2013pairwise} methods use an iterative procedure to estimate a causal ordering. \citet{wang2020hdng} give a modified method that is also consistent in high-dimensional settings in which the number of variables $p$ exceeds the sample size $n$. However, all of the above methods for the linear non-Gaussian acyclic model (LiNGAM) do not allow for possible latent confounding.

\citet{hoyer2008estimation} consider the setting where the data is generated by a LiNGAM model, but some variables are unobserved.
Using existing results from overcomplete ICA, they show that the canonical DAG---roughly a DAG in which all unobserved variables have no parents and at least two children---can be identified when all parent-child pairs in the observed set are unconfounded. However, the result critically requires the number of latent variables in the canonical model to be known in advance and requires all unobserved confounding to be linear. For example, suppose $v \in \sib(u)$, and the confounding is caused by a hidden variable $Y_h$. Then a generative procedure where $\check\varepsilon_v \independent \check \varepsilon_u$, $\varepsilon_v = \check \varepsilon_v + \alpha_v Y_h$, and $\varepsilon_u = \check \varepsilon_u + \alpha_u Y_h$ would be allowed; however, $\varepsilon_u = \check \varepsilon_u + \alpha_u Y_h^2$ would be precluded. Furthermore, even when the model is correctly specified, \citet{shimizu2014bayesian} state ``current versions of the overcomplete ICA algorithms are not very computationally reliable since they often suffer from local optima,'' and indeed \citet{hoyer2008estimation} use a maximum likelihood procedure with mixtures of Gaussians instead of overcomplete ICA in their simulations.

To avoid using overcomplete ICA and improve practical performance, \citet{entner2010discovering} and \citet{tashiro2014parcelingam} both propose procedures which test subsets of the observed variables and seek to identify as many pairwise ancestral relationships as possible; i.e., either (1) $u \in \an(v)$, (2) $v \in \an(u)$, or (3) $v \not \in \an(u)$ and $u \not \in \an(v)$. \citet{entner2010discovering} apply ICA to all subsets of the observed variables which 
do not have latent confounding. \citet{tashiro2014parcelingam} apply an iterative procedure similar to DirectLiNGAM to each subset of variables.  They show that the procedure used for certifying ancestral relationships is sound in the presence of confounding, but do not characterize the class of graphs which can be identified. In the appendix, we show a simple ancestral graph that cannot be discovered using the method of \citet{entner2010discovering}.  For ParcelLiNGAM, we show in Section~\ref{sec:ancestral} that all ancestral relationships can indeed be discovered when the true causal graph itself is ancestral, but that the method will not identify all ancestral relationships for any non-ancestral BAP. 

The identifiability results of Section~\ref{sec:bap} and \ref{sec:algo} are based on Chapter 4 of \citet{wang2018thesis}, the Ph.D.~dissertation of the first author. As we were preparing the manuscript for submission, new work was published by \citet{maeda2020causal} and \cite{salehkaleybar2020learning}. \citet{maeda2020causal} propose Repetitive Causal Discovery (RCD) for discovering mixed graphs. RCD uses a constraint-based algorithm, and similar to our approach---but in constrast to \citet{tashiro2014parcelingam}---RCD iteratively uses previously discovered structure to inform later steps. However, in Section~\ref{sec:ancestral} we show that RCD is not able to identify all BAPs. Similar to \citet{hoyer2008estimation}, \citet{salehkaleybar2020learning} use overcomplete ICA and thus crucially require all confounding to be linear. They extend the results of \citet{hoyer2008estimation} by showing that under weak conditions, the total number of variables (unobserved and observed) in the system can be identified and a causal ordering can be identified from population quantities. However, in order to identify causal effects (i.e., determine the graph beyond just ancestral relations), they require a condition which precludes any non-ancestral graphs \citep[Assumption 2]{salehkaleybar2020learning}.


\subsection{Contribution}
In this work, we show that when the data are generated by a linear non-Gaussian SEM that is based on a BAP, then the exact BAP---not just an equivalence class---can be consistently recovered. This implies that the causal effects can also be identified.
Specifically, we show how to recover the BAP from low-order moments, avoiding the use of overcomplete ICA.
Our result does not require knowledge of the number of latent variables or knowledge about the distribution of the errors.  It also does not require linearity in how the observed variables depend on the unobserved variables.  It does, however, rely on a genericity assumption for the linear coefficients and error moments that, in particular, rules out Gaussian behavior of the considered moments.  

The \textbf{B}ow-free \textbf{A}cyclic \textbf{n}on-\textbf{G}aussian (BANG) method we propose for recovery of BAPs uses a series of independence tests between (suitably estimated) regressors and residuals to certify causal structure. 
When the maximum in-degree (both directed and bidirected edges) is bounded, the total number of tests performed is bounded by a polynomial in the number of variables considered. We also characterize what the BANG procedure will return---given population values---when the model is misspecified and the true generating procedure corresponds to a graph with bows.  In simulations, we confirm that the method reliably discovers exact causal structure when given a large enough sample and outperforms existing methods in some settings.

\subsection{Preliminaries}
Throughout, we often let a node $v \in V$ stand in for the variable $Y_v$. For a set $C \subset V$, we let $Y_C = (Y_c : c \in C)$ be the vector of variables indexed by an element of $C$.  Furthermore, for a matrix $B$ and index sets $R$ and $C$, let $B_{R, C}$ be the submatrix of $B$ corresponding to the $R$th rows and $C$th columns. For some positive integer $z$, we also let $[z]$ indicate the set $\{1, \ldots, z\}$.


The notions of \emph{sound} (i.e., correct) and \emph{complete} (i.e., maximally informative) are often used to describe the output of causal discovery methods based on conditional independence tests~\citep{spirtes2000causation}. We adapt these notions for discovery of ancestral relationships in a mixed graph $G$. Let $\prec_0$ be a (potentially partial) ordering of $V$. We say that the ordering $\prec_0$ is \emph{sound} with respect to ancestral relationships in $G$ if $u \prec_0 v$ implies that $v \not \in \an(u)$ holds. We say that the ordering $\prec_0$ is \emph{complete} with respect to ancestral relationships in $G$ if $u \in \an(v)$ implies $u \prec_0 v$.

\section{Ancestral graphs}\label{sec:ancestral}
Before discussing the main results, we first build intuition for causal discovery with non-Gaussian data by considering the simpler setting of ancestral graphs. We show that given population information, the previously proposed ParcelLiNGAM\footnote{\citet{tashiro2014parcelingam} give two variants of the ParcelLiNGAM algorithm which they label Algorithm 2 and 3. Algorithm 3 requires less computation and applies Algorithm 2 to a subset of the variables.} procedure \citep{tashiro2014parcelingam} is sound and complete for ancestral relationships when the true graph is ancestral. However, it is not complete for certifying ancestral relationships in non-ancestral BAPs. We also give an example of a BAP which the RCD method \citep{maeda2020causal} can not identify.

\subsection{Determining causal relationships in ancestral graphs}
Recall from~\eqref{eq:linearTransformErrors} that $Y = (I-B)^{-1} \varepsilon$ so that $Y_{v}$ is  a linear combination of $\varepsilon_{s}$ for all $s$ such that $\left[(I-B)^{-1}\right]_{v,s} \neq 0$. For generic linear coefficients, this set is equal to $\an(v)$.
Thus, for $c \not \in \sib(v) \cup \de(\sib(v)) \cup \de(v)$, the variable $Y_{c}$ is a linear combination of error terms which are independent of $\varepsilon_{v}$, i.e., $Y_{c} \independent \varepsilon_{v}$. 
%
Thus, for $v \in V$ and a set $C \subseteq V\setminus \{v\}$ such that $\pa(v) \subseteq C \subseteq V \setminus [\sib(v) \cup \de(\sib(v)) \cup \de(v)]$, the population regression coefficients for predicting $v$ from $C$ 
are
\begin{equation}\begin{aligned}\label{eq:ancestralSound}
D_{v, C}^T &= \left[\E\left(Y_{C} Y_{C}^T\right)\right]^{-1}\E\left(Y_{C} Y_{v}\right)\\
&= \left[\E\left(Y_{C} Y_{C}^T\right)\right]^{-1}\E\left(Y_{C} (Y_{C}^T B_{v,C}^T + \varepsilon_{v}) \right)\\
&= \left[\E\left(Y_{C} Y_{C}^T\right)\right]^{-1}\left[\E\left(Y_{C} Y_{C}^T\right) B_{v,C}^T + \E\left(Y_{C} \varepsilon_{v}\right) \right] =  B_{v, C}^T,
\end{aligned}\end{equation}
where $B_{v, C}=(\beta_{v,u})_{u\in C}$ is comprised of the direct effects of $C$ onto $v$. The last equality in~\eqref{eq:ancestralSound} crucially requires that $\E\left(Y_{C} \varepsilon_{v}\right) = 0$, as implied by the independences pointed out above. The regression residual $\eta_{v.C}$ obtained from the coefficients in $D_{v,C}$ then satisfies
\begin{equation}
\eta_{v.C} := Y_{v} - D_{v, C}  Y_{C}  = Y_{v} - B_{v, \pa(v)} Y_{\pa(v)}  = \varepsilon_{v}.
\end{equation}
As noted, $\eta_{v.C}$ is independent of the regressors $Y_C$. 

In contrast, if $C$ contains a descendant of $v$, a sibling of $v$, or a descendant of a sibling of $v$, then in general $\E\left(Y_{C} \varepsilon_{v}\right) \neq 0$, $D_{v,C} \neq B_{v,C}$, and  $\eta_{v.C} \neq \varepsilon_{v}$. It follows, in general, that there exists some $c \in C$ such that $\eta_{v.C} \not \independent Y_c$. Although the first order conditions of the least squares criterion ensure that regressors and residuals are uncorrelated, when the errors are non-Gaussian, dependence can still be detected by using a non-parametric independence test \citep{gretton2005measuring, szekely2009dist, bergsma2014consistent, pfister2018independence} or examining the higher order moments---e.g., $\E(Y_c^k \varepsilon_v)$ for $k > 1$ \citep{wang2020hdng}. Non-Gaussian errors are crucial because for a Gaussian random variable, uncorrelated and independent are equivalent so the residuals are independent of the regressors regardless of $C$. But when the errors are non-Gaussian, the independence of residuals and regressors can be used to certify that $C \subseteq V \setminus [\sib(v) \cup \de(\sib(v)) \cup \de(v)]$.  

This idea can be directly applied to discover a topological ordering of the variables by finding the largest set $C^{(\max)}_v$ such that the residual when regressing $v$ onto $C^{(\max)}_v$ is independent of $Y_{C^{(\max)}_v}$. When $G$ is ancestral then $\an(v) \subseteq C^{(\max)}_v = V \setminus [ \sib(v) \cup \de(\sib(v)) \cup \de(v)]$. Thus, to form $\hat \prec$, an initial estimate of a topological ordering, we can set $c \hat{\prec} v$ for all $c \in C^{(\max)}_v$. When there is no unique total ordering, there may be pairs $u, v$ such that $u \in C^{(\max)}_v \setminus \an(v)$ and $v \in C^{(\max)}_u \setminus \an(u)$. In this case, we can simply remove either (or both) $u \hat{\prec} v$ or $v \hat{\prec} u$ from the initial ordering to obtain a relation that is a valid ordering.

The basic intuition of certifying an ancestral relationship by testing independence of residuals and regressors motivates the DirectLiNGAM \citep{shimizu2011direct} and Pairwise LiNGAM \citep{hyvarinen2013pairwise} procedures. To begin, one can select a root node, one without any parents or latent confounding, by finding a variable which is independent of all the residuals formed by regressing another variable onto it. Once a root is identified, its effect on the remaining variables can be removed and the root finding procedure recurs on the sub-graph of the remaining variables. The sequence of selected roots forms a topological ordering of the variables. An ordering of the nodes can also be identified in the opposite direction by finding sinks---nodes which have no children or latent confounding---by testing whether the residuals of a variable, when regressed onto all other variables, is independent of all other variables. Once a sink is identified, we simply recur onto the sub-graph of the remaining variables.  We use \emph{top-down} to refer to a procedure which successively identifies roots, and we use \emph{bottom-up} to refer to a procedure which successively identifies sinks. 


When we allow for latent confounding, a (certifiable) root or sink may not exist in the graph or in one of subsequent sub-graphs considered, so the Pairwise lvLiNGAM \citep{entner2010discovering} and ParcelLiNGAM \citep{tashiro2014parcelingam} procedures aim to estimate ancestral relationships between pairs of variables rather than a total ordering. We show in the appendix that lvLiNGAM 
may fail to discover even simple ancestral graphs, so we focus our discussion primarily on ParcelLiNGAM. Roughly speaking, ParcelLiNGAM applies both the top-down and bottom-up procedure to all subsets of $V$ and certifies as many ancestral relationships as possible. \citet{tashiro2014parcelingam} show that the certification procedure is sound, but do not characterize a class of graphs for which the entire ParcelLiNGAM procedure is complete. In Lemma~\ref{thm:parceLCorrect}, we show that given population values, ParcelLiNGAM is indeed sound and complete for all ancestral graphs. Details are left for the appendix, but in short, we show that when a graph is ancestral, applying the bottom-up procedure to the subset $\An(v)$ will identify that $\an(v) \prec v$ for all $v$. 


\begin{lemma}\label{thm:parceLCorrect}
	 Suppose $Y$ is generated by a recursive linear SEM that corresponds to an ancestral graph $G$. With generic model parameters and population information (i.e., the distribution of $Y$), the ordering, $\hat{\prec}$, returned by Algorithm 2 of ParcelLiNGAM \citep{tashiro2014parcelingam} is sound and complete for ancestral relationships in $G$. 
\end{lemma}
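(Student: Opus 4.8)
The plan is to split the statement into its soundness and completeness halves and to drive both from the regression identity~\eqref{eq:ancestralSound} together with its genericity-based converse (the observation, recorded right after~\eqref{eq:ancestralSound}, that once a descendant, a sibling, or a descendant of a sibling of the regression target appears among the regressors, the residual is, for generic parameters and non-Gaussian errors, actually \emph{dependent} on some regressor, not merely uncorrelated with it). Soundness needs no new idea: \citet{tashiro2014parcelingam} already show the certification step is sound, and the reason is exactly~\eqref{eq:ancestralSound}. Algorithm~2 records $u\,\hat{\prec}\,v$ only after verifying, on some subset $S$ with $u,v\in S$, that the residual of $Y_v$ regressed on $Y_{S\setminus v}$ is independent of $Y_{S\setminus v}$; by the converse just mentioned this forces $S\setminus v\subseteq V\setminus\{\sib(v)\cup\de(\sib(v))\cup\de(v)\}$, so $u\notin\de(v)$, i.e.\ $v\notin\an(u)$. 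Hence the real work is completeness: $u\in\an(v)\Rightarrow u\,\hat{\prec}\,v$.

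For completeness I would run the bottom-up procedure on the single subset $S=\An(v)$ and argue it must peel off $v$ first, thereby recording that all of $\an(v)$ precedes $v$. This rests on two facts about ancestral graphs that I would check from the definition. First, in the induced subgraph on $\An(v)$ the node $v$ has neither children nor siblings: a child of $v$ lying in $\An(v)$ would be an ancestor of $v$ and create a directed cycle, while a sibling of $v$ lying in $\An(v)$ would lie in $\an(v)\cap\sib(v)=\emptyset$. Second, $\an(v)$ is disjoint from $\de(v)\cup\sib(v)\cup\de(\sib(v))$: cycles rule out $\an(v)\cap\de(v)$, ancestrality rules out $\an(v)\cap\sib(v)$, and a member of $\an(v)\cap\de(\sib(v))$ would make a sibling of $v$ an ancestor of $v$. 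Since also $\pa(v)\subseteq\an(v)$, applying~\eqref{eq:ancestralSound} with $C=\an(v)$ gives residual $\varepsilon_v\independent Y_{\an(v)}$, so $v$ passes the bottom-up sink test on $\An(v)$. Conversely, for any $w\in\an(v)$ a directed path $w\to\cdots\to v$ exhibits a child of $w$ inside $\An(v)\setminus w$, so by the converse of~\eqref{eq:ancestralSound} the residual of $Y_w$ on $Y_{\An(v)\setminus w}$ is dependent on some regressor and $w$ fails the test. Thus $v$ is the \emph{unique} sink certified on $\An(v)$; the bottom-up pass selects it and commits $\an(v)\prec v$, and since Algorithm~2 carries out this procedure on every subset—in particular on $\An(v)$ for each $v$—we obtain $u\,\hat{\prec}\,v$ whenever $u\in\an(v)$, which is completeness. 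Combined with soundness, $\hat{\prec}$ is sound and complete for ancestral relationships in $G$.

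The step I expect to be the main obstacle is the converse of~\eqref{eq:ancestralSound} invoked in both halves: proving that once the regressor set contains a descendant, a sibling, or a descendant of a sibling of the target, the residual is genuinely dependent on some regressor, rather than just correlated, for generic coefficients and error moments. Making this precise is an algebraic/analytic genericity argument about non-vanishing of suitable higher-order moments (e.g.\ quantities of the form $\E(Y_c^k\,\eta)$), and it is the one place where non-Gaussianity is indispensable; I would reuse the genericity hypothesis already built into the lemma. A secondary, purely bookkeeping point worth stating is that Algorithm~2, upon certifying $v$ as a sink of a subset $S$, does commit $S\setminus\{v\}\prec v$ to its output even when the residual subgraph on $S\setminus\{v\}$ is not itself fully orderable—this is immediate from the recursive structure of the bottom-up pass, but it is what guarantees that the single subset $\An(v)$ suffices to certify every ancestor of $v$.
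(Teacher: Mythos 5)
Your proposal is correct and follows essentially the same route as the paper: soundness is delegated to the certification lemmas of \citet{tashiro2014parcelingam} (with a genericity caveat so that marginal direct effects do not vanish in induced submodels), and completeness is obtained by applying the procedure to $\An(v)$ and using $\an(v)\cap\sib(v)=\emptyset$ to show that $v$ is a sink there, so that $\an(v)\prec v$ is certified. The only (harmless) difference is that the paper also notes the case where the top-down pass exhausts all of $\An(v)$ before any bottom-up step runs, in which case the resulting topological ordering of $K_{\rm top}$ still records $u\prec v$ for all $u\in\an(v)$.
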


\subsection{Non-ancestral graphs}
When $G$ is not ancestral, the set of ancestral relationships that are certified by the described approach 
is still sound, but in general it is not complete.  Indeed, in a non-ancestral graph, there exists some $v \in V$ such that $\sib(v) \cap \an(v) \neq \emptyset$. Thus, even if $c \in C = \pa(v)$, it is generally not true that $\varepsilon_v \independent Y_c$. This implies that $E\left(Y_{C} \varepsilon_{v}\right) \neq 0$ and the population regression coefficients $D_{v,C}$ no longer coincide with direct effects $B_{v,C}$. Indeed, in Lemma~\ref{thm:parcelNotComplete} we show that ParcelLiNGAM is no longer complete for non-ancestral BAPs; i.e., in every graph $G$ which is bow-free but not ancestral, there are ancestral relations which will not be identified. We also show in the appendix that RCD~\citep{maeda2020causal} cannot identify the BAP shown in Figure~\ref{fig:exampleBAP}.

\begin{lemma}\label{thm:parcelNotComplete}
	Suppose $Y$ is generated by a recursive linear SEM that corresponds to a graph $G$ which is bow-free but not ancestral. With generic parameters and population information, both Algorithm 2 and Algorithm 3 of ParcelLiNGAM \citep{tashiro2014parcelingam}  will return a partial ordering which is sound, but not complete for ancestral relationships in $G$.
\end{lemma}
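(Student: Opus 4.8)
The plan is to prove the two assertions separately, since \emph{soundness} is essentially the soundness result of \citet{tashiro2014parcelingam} while \emph{incompleteness} is where bow-freeness and non-ancestrality genuinely enter. For soundness, I would note that every pairwise relation $a \prec b$ that ParcelLiNGAM records is produced by a passing ``root'' test (in a top-down pass) or ``sink'' test (in a bottom-up pass) carried out on the marginal law of $Y_{S'}$ for some subset $S' \subseteq V$, and that this marginal is again a linear SEM, governed by the mixed graph $G_{S'}$ obtained by projecting $V \setminus S'$ out of $G$. The key step is the generic equivalence already sketched after \eqref{eq:ancestralSound}: a regression residual is independent of its regressors if and only if the least-squares coefficients coincide with the structural coefficients of $G_{S'}$, which in turn holds if and only if the regressor set contains no sibling, descendant of a sibling, or descendant of the response in $G_{S'}$; this is the Darmois--Skitovich argument, and it is exactly here that non-Gaussianity is used. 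Granting it, a passing test certifies that the candidate is a genuine root (resp.\ sink) of $G_{S'}$, and being a root of $G_{S'}$ forbids any element of $S'$ from being an ancestor of it in $G$ (dually for sinks); hence every recorded relation $a \prec b$ satisfies $b \notin \an(a)$. Since ParcelLiNGAM's conflict resolution only ever deletes recorded relations, the returned partial ordering is sound.

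For incompleteness, since $G$ is not ancestral the family of ordered pairs $(a,b)$ with $a \in \an(b) \cap \sib(b)$ is nonempty. Among all such pairs I would choose $(s,v)$ minimizing the length $m$ of a shortest directed path $s = w_0 \to w_1 \to \cdots \to w_m = v$, and fix such a path $P$. Bow-freeness forces $m \ge 2$, since $m=1$ would place both $s \to v$ and $s \leftrightarrow v$ in $G$. Minimality of $(s,v)$ moreover implies that no other pair $(w_i,w_j)$ with $(i,j) \neq (0,m)$ along $P$ is of this type; in particular $w_i \notin \sib(w_j)$ for all $0 \le i < j \le m$ except $(i,j) = (0,m)$. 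The claim is that ParcelLiNGAM never certifies the true ancestral relation $w_{m-1} \prec v$ (recall $w_{m-1} \in \pa(v) \subseteq \an(v)$, and likewise $s \in \an(v)$), so the returned ordering is not complete.

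The structural heart of the argument is a telescoping property of the projections $G_{S'}$. For any $S' \ni v$ put $i^\ast = \min\{ i : w_i \in S' \}$ (well defined as $v = w_m \in S'$); if $i^\ast < m$ then $G_{S'}$ contains the bidirected edge $w_{i^\ast} \leftrightarrow v$, witnessed by the path $w_{i^\ast} \leftarrow w_{i^\ast - 1} \leftarrow \cdots \leftarrow w_0 \leftrightarrow v$, all of whose interior vertices lie outside $S'$ and are non-colliders; and if $i^\ast < m-1$ then, taking $k = \max\{ i \le m-2 : w_i \in S'\}$, $G_{S'}$ contains the directed edge $w_k \to w_{m-1}$. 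Hence, whenever $v$ and $w_{m-1}$ both lie in $S'$ (so $i^\ast \le m-1 < m$), the graph $G_{S'}$ has an arrowhead at $v$ coming from a sibling, so $v$ is neither a root nor a sink of $G_{S'}$, and $w_{m-1}$ has either a parent or a sibling in $G_{S'}$, so it is not a root of $G_{S'}$. Since residualizing out a root leaves the remaining variables distributed as the SEM of the corresponding projection, the graphs visited along a top-down (resp.\ bottom-up) pass started at any $S$ are exactly the $G_{S'}$ for subsets $S'$ obtained from $S$ by iteratively peeling roots (resp.\ sinks); it follows that $w_{m-1}$ is never peeled as a root while $v$ is present and $v$ is never peeled as a sink while $w_{m-1}$ is present, so $w_{m-1} \prec v$ is never recorded by a single pass. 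To conclude I would rule out an indirect (transitive) certification: every recorded relation of the form $c \prec v$ must come from a subset $S'$ containing none of $w_0,\dots,w_{m-1}$ (otherwise $v$ retains a sibling and cannot be a peeled sink, while a root of $G_{S'}$ other than $v$ records $c$ as preceding $w_{m-1}$ rather than vice versa), and the minimality of $(s,v)$ together with ParcelLiNGAM's conflict resolution prevents such relations from composing into $w_{m-1} \prec v$ in the final partial order. The same reasoning applies verbatim to Algorithm 3, whose certified relations form a subset of those certified by Algorithm 2.

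I expect the main obstacle to be this last step: tracking what ParcelLiNGAM can output over \emph{all} subsets and after conflict resolution, so as to exclude an indirect certification of $w_{m-1} \prec v$; the choice of a length-minimal bad pair $(s,v)$ is the lever that makes this manageable, and the telescoping lemma for the projections $G_{S'}$ does the structural work. A secondary point, routine but indispensable, is the genericity (Darmois--Skitovich) argument turning ``residual independent of regressors'' into the precise graphical condition on $G_{S'}$, which is also what makes the non-Gaussian assumption essential throughout.
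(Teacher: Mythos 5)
Your proposal follows essentially the same route as the paper: identify a confounded ancestral pair, pass to the latent projection $G_{S'}$ of each tested subset, and argue that the confounded node $v$ always retains an incident bidirected edge (so it is never certified as a root or a sink) while the targeted ancestor is never certified as a root, so the relation is never recorded. The paper's version is leaner: it fixes an arbitrary pair $u\in\an(v)\cap\sib(v)$, introduces the set $Z(u,v)=\{u\}\cup\de(u)\setminus\de(v)$, and shows that for every tested $M$ with $v\in M$ and $M\cap Z(u,v)\neq\emptyset$, all of $M\cap Z(u,v)$ and $v$ land in $U_{\rm res}$, so in particular $u\prec v$ is never certified; your shortest-path minimality device and the focus on $w_{m-1}\prec v$ are not needed for this (your telescoping observation already gives the sibling $w_{i^\ast}\leftrightarrow v$ for an arbitrary bad pair). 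Two points deserve attention. First, your auxiliary claim in the transitivity discussion---that every recorded relation $c\prec v$ must come from a subset containing none of $w_0,\dots,w_{m-1}$---is false as stated: a node $c$ exogenous in $G_{S'}$ can be certified as a root of a subset $S'$ that contains $v$ and some $w_i$, recording $c\prec v$ even though $v$ is not a sink there. The paper avoids this issue entirely by proving only that the \emph{direct} certificate for $z\prec v$, $z\in Z(u,v)$, never fires, which suffices since ParcelLiNGAM's output is the collection of directly certified pairwise relations rather than a transitive closure; you should either adopt that reading or repair the claim. Second, your one-line dismissal of Algorithm 3 (``its certified relations form a subset of those of Algorithm 2'') is not quite immediate, because Step 5 of Algorithm 3 runs on residuals after regressing out $K_{\rm top}$; the paper instead checks explicitly that $Z(u,v)\cap K_{\rm top}=\emptyset$ (since $Z(u,v)\subseteq\{u\}\cup\de(u)$ and $u$ has a sibling), so the confounding through $u$ survives the residualization and the same obstruction persists.
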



As a preview of the work ahead, in Example~\ref{ex:nonAncest}, we exhibit some of the complexities of discovering a non-ancestral graph by testing independence of residuals and regressors. 

\begin{example}\label{ex:nonAncest}
\begin{figure}[t]
    \centering
\begin{tikzpicture}[->,>=triangle 45,shorten >=1pt,
		auto,
		main node/.style={ellipse,inner sep=0pt,fill=gray!20,draw,font=\sffamily,
			minimum width = .5cm, minimum height = .5cm}]
		
		\node[main node] (1) {1};
		\node[main node] (2) [right= .7cm of 1]  {2};
		\node[main node] (3) [right = .7cm of 2]  {3};
		\node[main node] (4) [right = .7cm of 3]  {4};
		
		\path[color=black!20!blue,style={->}]
		(1) edge node {} (2)
		(2) edge node {} (3)
		(3) edge node {} (4)
		;
		
		\path[color=black!20!red,style={<->}]
		(1) edge[bend left = 50] node {} (3)
		(2) edge[bend left = 50] node {} (4)
		(1) edge[bend right = 40] node {} (4);
	\end{tikzpicture}
	\caption{\label{fig:exampleBAP}A non-ancestral BAP which cannot be identified by Pairwise lvLiNGAM, ParcelLiNGAM, or RCD. However, it can be identified by BANG.}
\end{figure}
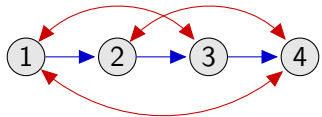

Consider discovering ancestral relationships in the BAP displayed in Figure~\ref{fig:exampleBAP}. 

\emph{Nodes 1 and 2}:  For this unconfounded pair, the 
direct approach of regressing $Y_2$ onto $Y_1$ yields the regression
coefficent $\E(Y_1Y_2)/\E(Y_1^2)=\beta_{21}$ and the residual
$Y_2-\beta_{21}Y_1=\varepsilon_2$ is independent of the regressor
$Y_1=\varepsilon_1$.  This independence certifies precedence of 1
before 2 in the graph, a relationship that would be discovered by ParcelLiNGAM and RCD. 

\emph{Nodes 2 and 3}: For general distributions in the model, there
will not exist $d_{32}\in\mathbb{R}$ such that
$Y_3 - d_{32} Y_2 \independent Y_2$ since $Y_2$ depends on
$\varepsilon_1$ and $1 \in \sib(3)$. However, we can consider
replacing $Y_2$ by an adjusted regressor.  Having established that
$1 \rightarrow 2$, we may take the adjusted regressor to be the
residual $Y_2-\beta_{21}Y_1=\varepsilon_2$ found when regressing $Y_2$ onto
$Y_1$ in the above first step.  The choice
$d_{32} = \beta_{32}$ then yields that
$Y_3 - d_{32} \varepsilon_2 = \varepsilon_3 +
\beta_{32}\beta_{21}\varepsilon_{1} \independent \varepsilon_2$.  This
independence establishes that $2$ precedes $3$. 



\emph{Nodes 3 and 4}: At this point the latent confounding is such
that the strategy considered so far fails.  Indeed, for a general
distribution in the model, there is no coefficient $d_{43}$ such that
the residual $Y_4-d_{43}Y_3'$ is independent of the adjusted regressor
$Y_3'=Y_3 - \beta_{32} \varepsilon_2 = \varepsilon_3 +
\beta_{32}\beta_{21}\varepsilon_{1}$ from the previous step (since
$1 \in \sib(4)$).  Other regressors such as $Y_3'=Y_3$ or
$Y_3'=\varepsilon_3$ also do not yield independence as $1 \in \sib(4)$
or $1 \in \sib(3)$, respectively.

Nevertheless, we can progress by giving up on the regression focus.
Observe instead that setting $d_{43} = \beta_{43}$ yields
$\varepsilon_3 \independent \varepsilon_4 = Y_4 - d_{43} Y_3$. In this
case, the independence test focuses on the error term and not the
regressor and is able to certify that 3 precedes 4. In the sequel, we
will show that this alternative type of independence certificate is in
fact sound and complete for parental relationships in BAPs.  In the
present example, since there are no other parental relationships that
can be certified, but dependencies still remain, we may conclude
(correctly) that
all other pairs are siblings.
\end{example}

\section{Bow-free acyclic path diagrams}\label{sec:bap}

We now turn to a larger class of graphs that are bow-free and acyclic but not necessarily ancestral.
We begin by presenting results that will be used to motivate the
discovery algorithm presented in Section~\ref{sec:algo}. Throughout,
we will consider higher order moments as a proxy for independence;
i.e., for random variables $X$ and $Z$ with $\E(X) = \E(Z) =0$, we
will use $\E(X^{K-1} Z) = 0$ as a stand-in for $X \independent Z$ and
$\E(X^{K-1} Z) \neq 0$ as a stand-in for $X \not \independent Z$. For
fixed $K > 2$, the vanishing of moments and independence are
equivalent for random variables derived using generic model parameters,
which are the matrix $B$ and the moments of $\varepsilon$. Crucially,
the moments we consider will be polynomials of the model parameters
which will allow us to leverage basic algebraic results to show
identifiability. In particular, as done above, we make statements which hold for \emph{generic} parameters. By generic, we mean that the parameters for which the statements do not hold are a null set with respect to Lebesgue measure. In the sequel, we will fix $K >2$, and consider moments of the form $\E(X^{K-1} Z)$ which depend only on the moments of $\varepsilon$ up to degree $K$; i.e., $\mathbb{E}\left(\prod_{v \in V} \varepsilon_{v}^{r_v}\right)$ for all $r \in \mathbb{Z}_{\geq 0}^{p}$ with $\|r\|_1 \leq K$. Thus, when we refer to generic error moments, we mean generic values for the error moments up to degree $K$. 
Most proofs in this section are deferred to the appendix.

\subsection{Setup}\label{sec:setup}
We first define and review some additional notions that will be helpful for the stated results. Consider the BAP $G=(V,E_\to,E_\leftrightarrow)$ and corresponding parameter $B = (\beta_{u,v})_{u,v \in V}$. For a directed path $l = v_1 \rightarrow \ldots \rightarrow v_s$, define the \emph{pathweight} of $l$ as $W(l) \coloneqq \prod_{j = 1}^{s-1} \beta_{v_{j+1}, v_j}$.
For a set $A \subset V$, let the \emph{marginal direct effects} be the
direct effects between $u, v \in A\subset V$ in the sub-model obtained
by marginalizing out all variables in $A^c \coloneqq V \setminus
A$. For convenience, let $\Lambda = I- B$.  Then the marginal direct
effects for $A$ are 
\begin{equation}\label{eq:margDirectEffect}
\begin{aligned}
\tilde B(A) = I - \left[\left(\Lambda^{-1}\right)_{A, A}\right]^{-1} &= \left[\left(\Lambda_{A,A} - \Lambda_{A, A^{c}} (\Lambda_{A^{c},A^{c}})^{-1} \Lambda_{A^{c}, A}\right)^{-1} \right]^{-1}\\
& = I - \Lambda_{A,A} - \Lambda_{A, A^{c}} (\Lambda_{A^{c},A^{c}})^{-1} \Lambda_{A^{c}, A}.
\end{aligned}
\end{equation}
For $u,v \in A$ with $u \neq v$, $\tilde B(A)_{v,u} = \beta_{v,u} + \sum_{s \in A^{c}} \beta_{v,s} \sum_{t \in A^{c}} \pi'_{s,t} \beta_{t,u}$
where
\[
  \pi'_{s,t} = ((\Lambda_{A^{c},A^{c}})^{-1})_{s,t}
\]
is the \emph{total effect} of $t$ on $s$ in the sub-graph of $G$
induced by $A^{c}$. Graphically, this total effect is the sum of
pathweights of all paths from $t$ to $s$ which only use nodes in
$A^{c}$. Observe that, for $u,v \in A$, $\tilde B(A)_{v,u} \neq 0$ only if $u \in \an(v)$.

Let $\mathcal{L}_{v,u}$ be the set of all directed paths from $u$ to
$v$ in $E_\rightarrow$. Given a set $C$ with $u \in C$, we can
partition $\mathcal{L}_{v,u}$ into disjoint sets as $
\mathcal{L}_{v,u} = \bigcup_{c \in C} \mathcal{L}^{(c)}_{v,u}(C)$, where $\mathcal{L}^{(c)}_{v,u}(C)$ is the subset of paths in $\mathcal{L}_{v,u}$ such that $c$ is the last node in $C$ to appear on the path.  Thus, $\mathcal{L}^{(u)}_{v,u}(C)$ is the set of paths from $u$ to $v$ which do not pass through any other node in $C$. This implies that 
\begin{equation}
\tilde B(C\cup \{v\})_{v,u} = \sum_{l \in \mathcal{L}^{(u)}_{v,u}(C)} W(l).
\end{equation} 


Let $D \in \mathbb{R}^{p\times p}$ be some estimate of the direct effects $B$; the support of $D$ and $B$ may differ. Let $E^{(D)}_\rightarrow$ be the set of directed edges defined by the support of $D$. Define the \emph{pseudo-parents} of $v$ given $D$, $\pspa(v)$, to be the set of parents of $v$ in $E^{(D)}_\rightarrow$ and define the \emph{pseudo-ancestors} of $v$ given $D$, $\psan(v)$, to be the ancestors of $v$ in $E^{(D)}_\rightarrow$ and $\psAn(v) = \psan(v) \cup \{v\}$. Similar to previous notation, when an argument is a set we mean the union of the function applied to each element; i.e., for the set $C$, $\pspa(C) = \bigcup_{c \in C}\pspa(c)$. 

Typically we will only consider matrices $D$ such that that $\pspa(v)
\subseteq \an(v)$; i.e, $D_{vu} \neq 0$ only if $u \in
\an(v)$. However, it will sometimes be useful to place an additional
restriction on $D$. Consider a family of sets $\mathcal{C} =
(C_v)_{v\in V}$ where $C_v \subseteq V \setminus \{v\}$. Such a family defines the
matrix-valued function $H_\mathcal{C}$ which maps $B
\in \mathbb{R}^{p \times p}$ to $D\in\mathbb{R}^{p\times p}$ given by
\begin{equation}\label{eq:Hdef}
D_{v,u} = \begin{cases}
\tilde{B}(C_v \cup \{v\})_{v,u} & \text{ if } u \in C_v,\\
0 & \text{ else}.
\end{cases}
\end{equation} 
Thus, for any $D$ which is the output of $H_\mathcal{C}$, the $v$th
row corresponds to the marginal direct effects of the sub-model
induced by $C_v \cup \{v\}$. Each element $D_{v,u}$ is the sum of
pathweights for a (not necessarily strict) subset of the paths from
$v$ to $u$, and thus is a polynomial in the elements of $B$. The
specific paths over which the sum is taken (and thus specific form of
the polynomial) depends on $\mathcal{C}$. Finally, let $\mathcal{D}$
be the set of functions $H_\mathcal{C}$ obtained from all families
$\mathcal{C} = (C_v)_{v\in V}$ with
$C_v \subseteq V \setminus \{v\}$.

\subsection{Certifying ancestral relationships in non-ancestral graphs}
In general, we use the symbol $\gamma$ to denote residuals.  Specifically, for $c \in V$, let $\gamma_c(D)$ denote the resulting residual of variable $c$ when positing $D$ to be the matrix of direct effects; i.e.
\begin{equation}
\gamma_c(D) = Y_{c} - D_{c,V}Y_{V}.
\end{equation}
For $v \in V$, we introduce the \emph{debiased direct effect} $\delta_v(C, A, S, D)$ as a function of sets $C$ and $A$ with $C \subseteq A \subseteq V \setminus \{v\}$ and matrices $S, D \in \mathbb{R}^{p  \times p}$, where $S$ is the (possibly sample) covariance of $Y$:
\begin{equation} \label{eq:deltaDef}
\delta_v(C, A, S, D) = \left\{\left[(I-D)_{C,A} S_{A,C}\right]^{-1}(I-D)_{C,A}S_{A,v}\right\}^T.
\end{equation}

Overloading the notation slightly, 
let $\gamma_v(C,S,D)$ denote the residual when using the debiased
direct effect in~\eqref{eq:deltaDef} calculated with $C$, $S$ and $D$,
and setting  $A = \psAn(C)$; i.e.,
\begin{equation}
\gamma_v(C, S, D) = Y_{v} - \delta_v(C,\psAn(C),S,D) Y_{C}.
\end{equation}
When the arguments for $\gamma_c(D)$ and $\gamma_v(C, S, D)$ are clear from the context, we will suppress the additional notation.

\begin{lemma}\label{thm:debiasedEst}
Suppose $Y$ is generated by a linear SEM with parameters $B$ and
$\Omega$ whose supports respect the sparsity imposed by the BAP $G$. For node $v \in V$ and sets of nodes $C\subseteq A \subseteq V\setminus \{v\}$, suppose 
\begin{enumerate}[label=(\roman*)]
	\item $pa(v) \subseteq C \subseteq \an(v) \setminus \sib(v)$,
	\item $A = \An(C)$,
	\item $D_{A,A} = B_{A,A}$, and
	\item $S_{A\cup\{v\},A\cup\{v\}} = \Sigma_{A\cup\{v\},A\cup\{v\}}$.
\end{enumerate}
Then $\delta_v(C, A, S, D) = B_{v,C}$.
\end{lemma}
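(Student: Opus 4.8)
The plan is to substitute hypotheses~(iii) and~(iv) into the definition~\eqref{eq:deltaDef} so as to rewrite $\delta_v$ purely in terms of the true parameters $B$ and $\Omega$, and then to use the ancestral closure of $A$ from~(ii) to recognize the resulting formula as an instrumental-variables estimator whose instruments are exactly the errors $\varepsilon_C$. Concretely: write $\Lambda = I-B$ as in the text. Because $C\subseteq A$, hypothesis~(iii) gives $(I-D)_{C,A} = (I-B)_{C,A} = \Lambda_{C,A}$, and hypothesis~(iv) gives $S_{A,C} = \Sigma_{A,C}$ and $S_{A,v} = \Sigma_{A,v}$. Transposing~\eqref{eq:deltaDef} then yields
\[
\delta_v(C,A,S,D)^T \;=\; \bigl[\Lambda_{C,A}\Sigma_{A,C}\bigr]^{-1}\Lambda_{C,A}\Sigma_{A,v}.
\]
Introducing $Z \coloneqq \Lambda_{C,A}Y_A$ and using $\Sigma_{A,C} = \E(Y_AY_C^T)$, $\Sigma_{A,v} = \E(Y_AY_v)$, this reads $\delta_v^T = [\E(ZY_C^T)]^{-1}\E(ZY_v)$, the instrumental-variables coefficient for regressing $Y_v$ on $Y_C$ using $Z$ as instruments.

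The crux is the identity $Z = \varepsilon_C$. From $Y = BY+\varepsilon$ we have $\varepsilon_C = (\Lambda Y)_C = \Lambda_{C,V}Y_V$, and the nonzero entries of $\Lambda_{C,\cdot}$ lie in columns $C\cup\pa(C)$. Since $A = \An(C)$ is ancestrally closed we have $\pa(C)\subseteq\an(C)\subseteq A$ and $C\subseteq A$, so $\Lambda_{C,u} = 0$ for all $u\notin A$ and hence $\varepsilon_C = \Lambda_{C,A}Y_A = Z$. (Equivalently, ancestral closure forces the block $\Lambda_{A,\bar A} = 0$, whence $Y_A = (\Lambda_{A,A})^{-1}\varepsilon_A$, and $\Lambda_{C,A}(\Lambda_{A,A})^{-1}$ is the selection of the rows indexed by $C$.)

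To finish, note that~(i) gives $\pa(v)\subseteq C$, so $Y_v = B_{v,C}Y_C + \varepsilon_v$. Therefore $\E(\varepsilon_C Y_v) = \E(\varepsilon_C Y_C^T)\,B_{v,C}^T + \E(\varepsilon_C\varepsilon_v) = \E(ZY_C^T)\,B_{v,C}^T + \Omega_{C,v}$. Hypothesis~(i) also gives $C\cap\sib(v) = \emptyset$ and $v\notin C$, so $\Omega_{cv} = 0$ for every $c\in C$, i.e.\ $\Omega_{C,v} = 0$. Substituting back, $\delta_v^T = [\E(ZY_C^T)]^{-1}\E(ZY_C^T)B_{v,C}^T = B_{v,C}^T$, which is the assertion. (The matrix being inverted equals $\E(\varepsilon_C Y_C^T) = \Omega_{C,A}\,[(\Lambda_{A,A})^{-1}]_{C,\cdot}^T$, which is nonsingular for generic parameters; in any case its invertibility is implicit in the statement of~\eqref{eq:deltaDef}.)

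The one genuine step, as opposed to a technical obstacle, is the recognition that $\Lambda_{C,A}Y_A = \varepsilon_C$: it says that the linear combination of the ancestor variables $Y_A$ formed with the rows $C$ of $I-B$ reconstructs the errors $\varepsilon_C$ exactly, which is precisely what makes the ``debiasing'' work. In a non-ancestral graph the errors $\varepsilon_v$ and $\varepsilon_C$ may be correlated with ancestors of $C$ --- this is exactly why an ordinary least-squares regression of $Y_v$ on $Y_C$ does not return $B_{v,C}$ --- but using $\varepsilon_C$ (via $Z$) as instruments cancels that correlation. Everything else is routine block-matrix bookkeeping organized around the ancestral closure of $A$.
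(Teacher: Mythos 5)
Your proof is correct and follows essentially the same route as the paper's: both arguments reduce to the observation that $(I-B)_{C,A}Y_A=\varepsilon_C$ (by ancestral closure of $A$ and $D_{A,A}=B_{A,A}$) together with $\E(\varepsilon_C\varepsilon_v)=\Omega_{C,v}=0$ (from $C\cap\sib(v)=\emptyset$), which is exactly the paper's use of the $(v,C)$ entries of $(I-B)\Sigma(I-B)^T=\Omega$. Your instrumental-variables phrasing is only a repackaging of the same linear system, and if anything is written more carefully than the paper's displayed equations.
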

\begin{proof}
For any $v \in V$ and set $C$ such that $\pa(v) \subseteq C \subseteq
\an(v)\setminus \sib(v)$, it follows from \eqref{eq:sigmaDef} that
\[\left[(I-B)\Sigma(I-B)^{T}\right]_{vu} = \omega_{v,u}= 0\]
 for
all $u \in C$ because $C \cap \sib(v) = \emptyset$. Since $B_{v, V \setminus C} = 0$ and $B_{C, V \setminus \an(v)} = 0$, this yields
\[B_{v,C} \Sigma_{\pa(v), \an(v)}(I-B)_{\an(v), C} = \Sigma_{v, C}(I-B)_{\an(v), C},\]
so that
\[B_{v,C}= \Sigma_{v, \an(v)}(I-B)_{\an(v), C}\left(\Sigma_{v, \an(v)}(I-B)_{\an(v), C}\right)^{-1}.\]
\end{proof}
Lemma~\ref{thm:debiasedEst} states that given the population
covariance and direct effects between vertices which are ``causally
upstream'' of $v$, selecting appropriate sets $C$ and $A$ such that
$ \pa(v)\subseteq C \subseteq \an(v)\setminus \sib(v)$ and $A =
\An(C)$ allows recovery of the direct effect of $C$ onto $v$. Since
$\delta_v$ only involves matrix inversion and multiplication, it is a
rational function of the elements of $S$ and $D$.  The specific form of the function is determined by the sets $C$ and $A$.

We use the name debiased direct effect because $\delta_v$ can be
calculated by the following alternative procedure. First form the
errors $\varepsilon_C$ and regress $Y_v$ onto $\varepsilon_C$. This
would yield the total effect of $C$ on $v$, but since $Y_v$ contains
terms involving $\varepsilon_A$, it will be biased by dependence
between $\varepsilon_C$ and $\varepsilon_A$. Given $B_{A,A}$, however,  $\Omega_{A,A}$ can be computed from $\Sigma$. Thus, the naive regression coefficients can be debiased to give the true direct effects. The assumption that $C\cap \sib(v) = \emptyset$ ensures that we do not need to also correct for dependence between $\varepsilon_C$ and $\varepsilon_v$ which we would not be able to calculate with the given information. 

Of course, in practice, we do not a priori know the relationships between candidate sets $C$ and $v$, but the following results show we can certify whether we have selected appropriate sets $C$ and $A$. Specifically, the algorithm proposed in Section~\ref{sec:algo} will certify that $C \subseteq \an(v)\setminus \sib(v)$ by testing if
\begin{equation}\label{eq:ancestralCert}
\E\left(\gamma_c(D)^{K-1}\gamma_v(C, S, D)\right) = 0 \qquad \forall c \in C.
\end{equation} 

\begin{corollary}\label{thm:suffInd}
Suppose the conditions in Lemma~\ref{thm:debiasedEst} hold, then for
every $c \in C$, we have $\gamma_c(D) \independent \gamma_v(C,S,D)$ and $\E(\gamma_c^{K-1}(D)\gamma_v(C,S,D)) = 0$.
\end{corollary}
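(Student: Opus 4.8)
The plan is to show that, under hypotheses (i)--(iv), both residuals appearing in the statement reduce to idiosyncratic errors — $\gamma_v(C,S,D) = \varepsilon_v$ and $\gamma_c(D) = \varepsilon_c$ for each $c \in C$ — and then to read off the conclusion from the model's confounding structure, namely that errors attached to non-sibling nodes are independent.

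For $\gamma_v(C,S,D) = Y_v - \delta_v(C,\psAn(C),S,D)Y_C$, I would first observe that under (iii) together with the standing convention $\pspa(i)\subseteq\an(i)$ the pseudo-ancestor set satisfies $\psAn(C) = \An(C) = A$: any directed $D$-path into $C$ stays inside $A$, since $\an(i)\subseteq A$ for $i\in A$, and on $A$ the supports of $D$ and $B$ agree because $D_{A,A}=B_{A,A}$. Lemma~\ref{thm:debiasedEst} then applies and gives $\delta_v(C,\psAn(C),S,D)=B_{v,C}$, so $\gamma_v(C,S,D) = Y_v - B_{v,C}Y_C$. Because $\pa(v)\subseteq C$ by (i) and $\beta_{vu}$ vanishes off $\pa(v)$, the subtracted term equals $\sum_u\beta_{vu}Y_u$, and the structural equation for $Y_v$ leaves $\gamma_v(C,S,D)=\varepsilon_v$ — the same residual identity already noted after~\eqref{eq:ancestralSound}.

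For $\gamma_c(D) = Y_c - D_{c,V}Y_V$ with $c\in C$, the key point is that row $c$ of $D$ only interacts with the block $A$: its support lies in $\an(c)\subseteq A$ by the standing convention, and there it equals $B_{c,A}$ by (iii), so $D_{c,V\setminus A}=0$. Since $\pa(c)\subseteq\an(c)\subseteq A$ and $\beta_{cu}$ vanishes off $\pa(c)$, the subtracted term is again $\sum_u\beta_{cu}Y_u$, whence $\gamma_c(D)=\varepsilon_c$.

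Finally, hypothesis (i) gives $C\cap\sib(v)=\emptyset$, so $c\notin\sib(v)$ and therefore $\varepsilon_c$ and $\varepsilon_v$ are independent; combining the two identities yields $\gamma_c(D)=\varepsilon_c\independent\varepsilon_v=\gamma_v(C,S,D)$, and then $\E(\gamma_c^{K-1}\gamma_v)=\E(\varepsilon_c^{K-1})\E(\varepsilon_v)=0$ because the errors are centered. Since essentially all of the analytic work is carried by Lemma~\ref{thm:debiasedEst}, I anticipate no real obstacle; the only points requiring care are the bookkeeping that pins down $\psAn(C)=\An(C)$ under (i)--(iv) and that eliminates every error other than $\varepsilon_c$ from $\gamma_c(D)$, and invoking $C\cap\sib(v)=\emptyset$ precisely at the step where independence of $\varepsilon_c$ and $\varepsilon_v$ is needed.
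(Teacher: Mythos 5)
Your proof is correct and takes essentially the same route as the paper's: invoke Lemma~\ref{thm:debiasedEst} to get $\delta_v(C,A,S,D)=B_{v,C}$ and hence $\gamma_v(C,S,D)=\varepsilon_v$, use $D_{A,A}=B_{A,A}$ to get $\gamma_c(D)=\varepsilon_c$, and conclude from $C\cap\sib(v)=\emptyset$. The additional bookkeeping you supply (checking $\psAn(C)=\An(C)$ and that row $c$ of $D$ vanishes off $A$) is detail the paper's one-line proof leaves implicit.
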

\begin{proof}
By assumption $D_{C,A} = B_{C,A}$ so that $\gamma_C(D) =
\varepsilon_C$. In addition, Lemma~\ref{thm:debiasedEst} implies that
$\delta(C,A,S,D) = B_{v,C}$, so $\gamma_v(C,S,D) =
\varepsilon_v$. Since we assume $C \cap \sib(v) = \emptyset$, it holds
that $\gamma_c \independent \gamma_v$ for all $c \in C$.
\end{proof}

In Algorithm~\ref{alg:bang}, we use~\eqref{eq:ancestralCert} as a certificate that $C \subseteq \an(v) \setminus \sib(v)$, and indeed
Corollary~\ref{thm:suffInd} shows that $C \subseteq \an(v) \setminus \sib(v)$ is part of a set of sufficient conditions for \eqref{eq:ancestralCert} to hold. However, it is not necessary and more care is needed to ensure that 
\eqref{eq:ancestralCert} will not mistakenly certify a set $C$ if $C
\not \subseteq \an(v)\setminus \sib(v)$. 
We first
state Lemma~\ref{thm:neccIndGeneral} which gives a necessary condition
for~\eqref{eq:ancestralCert} that will be useful in deriving subsequent results. 

\begin{figure}[t]
	\centering
	\begin{tikzpicture}[->,>=triangle 45,shorten >=1pt,
	auto,
	main node/.style={ellipse,inner sep=0pt,fill=gray!20,draw,font=\sffamily,
		minimum width = .5cm, minimum height = .5cm}]
	
	\node[main node] (1) {1};
	\node[main node] (2) [left = .75cm of 1]  {2};
	\node[main node] (3) [right = .75cm of 1]  {3};
	\node[main node] (4) [right = .75cm of 3]  {4};
	\node[main node] (5) [right = .75cm of 4]  {5};
	
	\path[color=black!20!blue,style={->}]
	(1) edge node {} (2)
	(1) edge node {} (3)
	(3) edge node {} (4)
	(4) edge node {} (5)
	;
	\end{tikzpicture}

\caption{\label{fig:pruningExample}When $D_{2,1} = \beta_{2,1}$ and $D_{5,4} = \beta_{5,4}$, naively testing \eqref{eq:ancestralCert} would mistakenly certify $2$ and $5$ as ancestors of $3$.}
\end{figure}
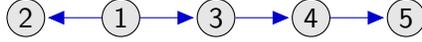

\begin{lemma}\label{thm:neccIndGeneral}
	Let $v \in V$ and $C \subseteq A \subseteq V \setminus \{v\}$. Suppose $D \in \mathbb{R}^{p \times p}$ such that $D_{s,t} \neq 0$ only if $t \in \an(s)$. Then, for generic $B$ and error moments, if $\delta_v(C, \psan(C), S, D) \neq \tilde B(C\cup\{v\})_{v,C}$,
	then $\E(\gamma_c(D)^{K-1}\gamma_v(C, S, D)) \neq 0$ for some $c \in C$.
\end{lemma}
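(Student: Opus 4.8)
The plan is to express the target moment $\E(\gamma_c^{K-1}\gamma_v)$ as a polynomial in the model parameters $B$ and the error moments, and then argue that this polynomial is not identically zero under the stated hypothesis by exhibiting a single point in parameter space where it fails to vanish; genericity then does the rest. First I would write out $\gamma_c(D)$ and $\gamma_v(C,\Sigma,D)$ in terms of the errors $\varepsilon$. Using $Y=(I-B)^{-1}\varepsilon$ together with the assumption that $D_{ij}\neq 0$ only if $j\in\an(i)$, each $\gamma_c(D)=Y_c-D_{c,V}Y_V$ is a linear combination $\sum_{s\in\An(c)}\alpha^{(c)}_s\varepsilon_s$ whose coefficients are polynomials in $B$ and $D$ (the diagonal coefficient on $\varepsilon_c$ being $1$). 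Likewise $\gamma_v(C,\Sigma,D)=Y_v-\delta_v(C,\psAn(C),\Sigma,D)Y_C$ is a linear combination $\sum_s\theta_s\varepsilon_s$ with $\theta_v=1$, whose coefficients are \emph{rational} functions of $B$ and the error moments (through $\Sigma$), with the specific rational form pinned down by $C$ and $A=\psAn(C)$.

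The key reduction is the following identity linking the debiased direct effect to the marginal direct effect. By construction, $\delta_v(C,\psAn(C),\Sigma,D)$ is the unique solution $d$ of the linear system $(I-D)_{C,A}\,\Sigma_{A,C}\,d^T=(I-D)_{C,A}\,\Sigma_{A,v}$; I would show that the coefficient vector $\theta_C$ of $\gamma_v$ on the errors indexed by $C$ is precisely $\bigl(\delta_v(C,A,\Sigma,D)-\tilde B(\{C,v\})_{v,C}\bigr)$ up to the invertible change of basis from $Y_C$-coordinates to $\varepsilon$-coordinates restricted to $C$ — more precisely, that $\gamma_v(C,\Sigma,D)$ has a nonzero contribution from the "own error" directions of $C$ exactly when $\delta_v\neq\tilde B(\{C,v\})_{v,C}$. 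Indeed, if $\delta_v$ \emph{equalled} $\tilde B(\{C,v\})_{v,C}$, then $\gamma_v$ would be the residual of $Y_v$ after removing its marginal regression on $Y_C$ within the submodel on $C\cup\{v\}$, which is orthogonal to $\varepsilon_C$ in the appropriate sense; the hypothesis $\delta_v\neq\tilde B(\{C,v\})_{v,C}$ thus forces some $c^\star\in C$ whose own error $\varepsilon_{c^\star}$ appears in $\gamma_v$ with a nonzero coefficient, call it $\theta_{c^\star}\neq 0$ (as a rational function of the parameters, not identically zero).

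Given such a $c^\star$, I would compute $\E(\gamma_{c^\star}^{K-1}\gamma_v)$ by multilinearly expanding. The dominant term is $\E(\varepsilon_{c^\star}^{K-1}\cdot\theta_{c^\star}\varepsilon_{c^\star})=\theta_{c^\star}\,\E(\varepsilon_{c^\star}^K)$, coming from the diagonal coefficient $1$ on $\varepsilon_{c^\star}$ in $\gamma_{c^\star}$ raised to the $(K-1)$st power and the $\varepsilon_{c^\star}$ term in $\gamma_v$; every other term in the expansion is a product of a coefficient polynomial/rational-function times a mixed error moment $\E(\prod\varepsilon_u^{r_u})$ of total degree $K$. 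Viewing $\E(\gamma_{c^\star}^{K-1}\gamma_v)$ as a polynomial in the free error-moment coordinates (treating the moments $\E(\prod\varepsilon_u^{r_u})$, $|r|\le K$, as independent indeterminates, which is legitimate since by assumption they are generic), the coefficient of the pure moment $\E(\varepsilon_{c^\star}^K)$ is $\theta_{c^\star}$, a rational function of $B$ and lower-order moments that is not identically zero; hence the whole expression is not the zero polynomial. A nonzero polynomial vanishes only on a Lebesgue-null set, so for generic $B$ and error moments $\E(\gamma_{c^\star}^{K-1}\gamma_v)\neq 0$, which proves the claim with $c=c^\star$.

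The main obstacle I anticipate is the middle step: cleanly establishing that $\delta_v\neq\tilde B(\{C,v\})_{v,C}$ really does force a nonzero \emph{own-error} coefficient $\theta_{c^\star}$ for some $c^\star\in C$, rather than merely a nonzero coefficient on some $\varepsilon_s$ with $s\in\An(C)\setminus C$ (which would not survive the pairing with $\varepsilon_{c^\star}^{K-1}$ in the leading term). This requires carefully tracking how the change of basis between $Y_C$ and $\varepsilon_{\An(C)}$ interacts with the linear system defining $\delta_v$, and using the triangular structure coming from $D_{ij}\neq 0\Rightarrow j\in\an(i)$ so that the restriction of $\gamma_v$ to the $\varepsilon_C$-directions is governed exactly by $\delta_v-\tilde B(\{C,v\})_{v,C}$. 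One also needs to confirm that $\theta_{c^\star}$, as a rational function, does not have the generic parameter set inside its zero locus — but this follows because $\tilde B$ and $\delta_v$ agree identically only on a proper subvariety, which is exactly the negation of the hypothesis. The remaining bookkeeping (that all coefficients are genuinely polynomial/rational in the declared parameters, and that degree-$K$ moments suffice) is routine given the setup established earlier in the section.
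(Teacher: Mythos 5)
Your overall architecture matches the paper's: reduce to showing the moment is a not-identically-zero rational function of the parameters (the paper invokes \citet[Lemma 1]{okamoto1973distinctness} for this), isolate a nonzero ``own-error'' coefficient $\theta_{c^\star}$ on some $\varepsilon_{c^\star}$ with $c^\star\in C$ inside $\gamma_v$, and then observe that the term $\theta_{c^\star}\,\E(\varepsilon_{c^\star}^{K})$ cannot be cancelled by the remaining terms of the expansion, so a suitable choice of $\E(\varepsilon_{c^\star}^{K})$ makes the moment nonzero. The final step and the genericity bookkeeping in your proposal are fine and essentially identical to the paper's.

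However, the step you flag as the ``main obstacle'' is a genuine gap, and the way you sketch it would not go through as stated. It is \emph{not} true in general that the coefficient of $\varepsilon_c$ in $\gamma_v$ equals $\delta_{vc}-\tilde B(\{C,v\})_{v,c}$ componentwise ``up to an invertible change of basis'': when $C$ contains both $c$ and descendants of $c$, the terms $-\delta_{vc'}Y_{c'}$ for descendants $c'\in C$ of $c$ also contribute to the coefficient of $\varepsilon_c$, and these contributions can cancel a discrepancy at a single coordinate. The paper resolves this with a specific combinatorial device: order $C=\{c_1,\dots,c_{|C|}\}$ so that $c_i$ is not a descendant of $c_j$ for any $j<i$, and let $i$ be the \emph{minimal} index with $\delta_{vc_i}\neq\tilde B(\{C,v\})_{v,c_i}$. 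Minimality gives $\delta_{vc_j}=\tilde B(\{C,v\})_{v,c_j}$ for $j<i$, and a telescoping path-weight identity,
\begin{equation*}
\pi_{v,c_i}-\sum_{j<i}\tilde B(\{C,v\})_{v,c_j}\,\pi_{c_j,c_i}=\sum_{l\in\mathcal{L}^{(c_i)}_{v,c_i}(C)}W(l)=\tilde B(\{C,v\})_{v,c_i},
\end{equation*}
shows that the partial residual $Y_v-\sum_{j<i}\delta_{vc_j}Y_{c_j}$ carries $\varepsilon_{c_i}$ with coefficient exactly $\tilde B(\{C,v\})_{v,c_i}$; the ordering guarantees that the remaining terms $-\delta_{vc_j}Y_{c_j}$ for $j>i$ contain no $\varepsilon_{c_i}$ at all (since those $c_j$ are not descendants of $c_i$), so the coefficient of $\varepsilon_{c_i}$ in $\gamma_v$ is $\tilde B(\{C,v\})_{v,c_i}-\delta_{vc_i}\neq0$. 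Without this ordering-plus-minimal-index argument (or an equivalent triangularization), your claim that some $\theta_{c^\star}$ is nonzero is unsupported, and your appeal to the converse direction (``if $\delta_v$ equalled $\tilde B$ then $\gamma_v$ would be orthogonal to $\varepsilon_C$'') does not yield the implication the lemma actually needs. Supplying this computation would complete your proof along the paper's lines.
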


Lemma~\ref{thm:neccIndSib}
shows that if $C \cap \sib(v) \neq \emptyset$, then there exists some $c
\in C$, such that $\E(\gamma_c^{K-1}(D)\gamma_v(C,S,D)) \neq 0$, 
Ensuring that we do not mistakenly certify non-ancestors of $v$ is a
bit more delicate because, depending on $D$, \eqref{eq:ancestralCert}
may actually hold for some set $C \not \subseteq \an(v)$ when $C \cap
\sib(v) =\emptyset$. In particular there are two cases of potential ways a set can be misscertified. First, $C$ may contain a node $c \not \in
\an(v) \cup \de(v)$. Alternatively, $C$ may contain a descendant of $v$ which
already has the effect of $v$ removed; for instance, for some $c$ such that $v \in \an(c) \setminus \pa(c)$, if $D_{c, V} = B_{c, V}$ then $\gamma_c(D)$ will not contain any term with $\varepsilon_v$. Consider the example in
Figure~\ref{fig:pruningExample}, and let $D_{21} = \beta_{21}$ and
$D_{54} = \beta_{54}$. The set $C = \{2,5\}$ would satisfy \eqref{eq:ancestralCert} for 
$v = 3$ because $2$ is neither an ancestor or descendant
of $3$ and $5$ is an ancestor of $2$, but adjusting $5$ by $4$ removes
the effect of $3$.

More generally, suppose that $C \cap \sib(v) = \emptyset$ but $C \not \subseteq \an(v)$, and let $C_1 = C \cap \an(v)$ and $C_2 = C \setminus \an(v)$. Thus, $C_1$ should rightfully be certified, but $C_2$ should not. However, if $D$ is such that $\E(\gamma_c(D)^{K-1}\gamma_v(C, S,D)) = 0$ for all $c \in C$, then $C$ (including $C_2$) could be mistakenly certified as a subset of $\an(v) \setminus \sib(v)$. Fortunately, Lemma~\ref{thm:neccIndDes} implies that instead of testing all possible sets $C \subseteq V \setminus v$, we can use an additional pre-screening procedure to filter out problematic sets, $C \not \subseteq \an(v)$ which would otherwise satisfy \eqref{eq:ancestralCert}. 

Specifically, Lemma~\ref{thm:neccIndDes} implies that
\begin{equation}
\E(\gamma_c(D)^{K-1}\gamma_v(C_1, S,D)) = 0 \qquad \forall c \in C_1.
\end{equation}
Thus, we still would have certified that $C_1 \subseteq \an(v) \setminus \sib(v)$. Furthermore, after adjusting $v$ for $C_1$, the resulting residuals of $v$---$\gamma_v(C_1, S, D)$---would also be independent of $\gamma_c$ for all $c \in C_2$; i.e.,
\begin{equation}
\E(\gamma_c(D)^{K-1}\gamma_v(C_1, S,D)) = 0 \qquad \forall c \in C_2.
\end{equation} 
Thus, we can screen out non-ancestors of $v$ which might otherwise be miscertified, by removing any $c \in C$ such that for some $C' \subseteq C \setminus \{c\}$,
\begin{equation}\label{eq:preScreen}
\E(\gamma_c(D)^{K-1}\gamma_v(C', S,D)) = 0.
\end{equation} 
This is implemented in Algorithm~\ref{alg:checkInd}. A concern is then the question whether the pre-screening
procedure implemented in Algorithm~\ref{alg:checkInd} may
mistakenly rule out a parent or sibling of $v$.  To show that this does not happen for
generic parameters, we derive
Lemma~\ref{thm:noMissPruneParents} below.

Lemma~\ref{thm:noMissPruneParents} and \ref{thm:neccIndSib} both require that $D = H_\mathcal{C}(B)$; i.e., for each $v \in V$, $D_{v,C}$ is the marginal direct effect of $C$ on $v$ where $C$ is the set of non-zero entries in $D_{v, V}$. In Algorithm~\ref{alg:certPar}, we only update $D_{v,C}$ to $\delta_v(C, A, S, D)$ when $\E(\gamma_c^{K-1}(D)\gamma_v(C, S, D) = 0$ for all $c\in C$. As shown by Lemma~\ref{thm:neccIndGeneral}, this implies that $\delta_v(C, \psAn(C), S, D)$ is the marginal direct effect of $C$ on $v$ so that the updated $D = H_\mathcal{C}(B)$ for some $H_\mathcal{C} \in \mathcal{D}$.



\begin{lemma}\label{thm:neccIndDes}
Consider $v \in V$ and $C \subseteq V\setminus \{v\}$. Let $D \in
\mathbb{R}^{p \times p}$ such that $D_{s,t} \neq 0$ only if $t \in
\an(s)$.
Suppose $C \not \subseteq \an(v)$, but that
$\E(\gamma_c(D)^{K-1}\gamma_v(C, S,D)) = 0$
for all $c \in C$. Then for generic $B$ and error moments, $C_1 = C\cap [\an(v)\setminus \sib(v)]$,
\begin{equation*} \E(\gamma_c(D)^{K-1}\gamma_v(C_1, S,D)) = 0
  \qquad\forall c \in C.
  \end{equation*}
\end{lemma}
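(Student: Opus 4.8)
The plan is to isolate which error terms actually appear in the residuals $\gamma_c(D)$ and $\gamma_v(C,S,D)$, and to exploit the genericity assumption to turn the hypothesis "all these moments vanish" into an algebraic/graphical statement about $C$. Write $C_1 = C\cap\{\an(v)\setminus\sib(v)\}$ and $C_2 = C\setminus C_1$. Since $D_{ij}\neq 0$ only when $j\in\an(i)$, every residual $\gamma_c(D)$ is a linear combination of the $\varepsilon_s$ with $s\in\An(c)$, and in particular depends on $\varepsilon_c$ with a nonzero (generically) coefficient because the coefficient of $\varepsilon_c$ in $\gamma_c(D)=Y_c-D_{c,V}Y_V$ is $1 - \sum_{t}D_{ct}[(I-B)^{-1}]_{tc}$, which is a nontrivial polynomial in $B$. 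The key point is then to track, via the pathweight/marginal-direct-effect machinery from the Setup, exactly which $\varepsilon$'s enter $\gamma_v(C,S,D)$ as opposed to $\gamma_v(C_1,S,D)$, and to argue that the moment $\E(\gamma_c^{K-1}\gamma_v(C,S,D))$, expanded as a polynomial in the error moments up to degree $K$, can vanish generically only if a corresponding combinatorial cancellation of monomials occurs.

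The main step I would carry out is the following. Fix $c\in C$. Because $\gamma_c(D)$ involves only $\varepsilon_{\An(c)}$ and, generically, genuinely involves $\varepsilon_c$, the moment $\E(\gamma_c(D)^{K-1}\gamma_v(C,S,D))$ is, up to a nonzero generic scalar, a polynomial whose "leading" contribution comes from the monomial $\E(\varepsilon_c^{K-1}\varepsilon_s)$ over those $s$ that appear in $\gamma_v(C,S,D)$ with $s\in\An(c)$ (after accounting for lower-degree interactions). Using genericity of the error moments, vanishing of the full moment forces vanishing of the coefficient attached to each distinct error-moment monomial; collecting these constraints over all $c\in C$ yields that the linear functional $Y_v\mapsto\gamma_v(C,S,D)$, restricted to the span of $\{Y_c:c\in C\}$, must already annihilate the part of $Y_v$ supported on $\bar C$-accessible errors through $C_2$. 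The concrete claim to extract is that $\delta_v(C,\psAn(C),S,D)Y_C$ and $\delta_v(C_1,\psAn(C_1),S,D)Y_{C_1}$ have the same inner products with every $\gamma_c$, $c\in C$; equivalently, that the $C_2$-coordinates of the debiased direct effect are forced to be $0$ (or to act trivially on the residual space) once the $C_1$-coordinates are chosen — so that dropping $C_2$ from the regressor set changes nothing about the residual's relevant moments. Here Lemma~\ref{thm:neccIndGeneral} does much of the work: it tells us that if the moments all vanish then $\delta_v(C,A,\Sigma,D)=\tilde B(\{C,v\})_{v,C}$, and since $c\notin\an(v)$ for $c\in C_2$ gives $\tilde B(\{C,v\})_{vc}=0$, the debiased effect on the $C_2$ part is zero; consequently $\gamma_v(C,S,D)=\gamma_v(C_1,S,D)$ as random variables, whence the conclusion is immediate and in fact gives the stronger equality stated in the surrounding discussion.

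So the skeleton is: (1) use $D_{ij}\neq0\Rightarrow j\in\an(i)$ to pin down the error-support of each $\gamma_c(D)$; (2) apply Lemma~\ref{thm:neccIndGeneral} to the pair $(C,A=\psAn(C))$ to conclude $\delta_v(C,A,\Sigma,D)=\tilde B(\{C,v\})_{v,C}$ under the vanishing hypothesis; (3) observe $\tilde B(\{C,v\})_{vc}=0$ for $c\in C_2$ since such $c\notin\an(v)$, so the debiased effect loads zero weight on $Y_{C_2}$ and hence $\gamma_v(C,S,D)$ coincides with the residual obtained using only $C_1$ as regressors — but one must check that $\psAn(C_1)$ versus $\psAn(C)$ does not disturb the remaining coefficients, which follows because the defining linear system for $\delta_v$ restricted to $C_1$ is exactly the $C_1$-block once the $C_2$-loadings are zero; (4) conclude $\E(\gamma_c(D)^{K-1}\gamma_v(C_1,S,D))=\E(\gamma_c(D)^{K-1}\gamma_v(C,S,D))=0$ for all $c\in C$. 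The main obstacle I anticipate is step (3): verifying that zeroing the $C_2$-coordinates of the debiased direct effect really does reduce $\delta_v(C,\psAn(C),S,D)$ to $\delta_v(C_1,\psAn(C_1),S,D)$ rather than to some other $C_1$-supported vector, since the two are defined by inverting different-sized matrices built from possibly different pseudo-ancestor sets; this needs a careful block-matrix argument (or a direct appeal to the characterization of $\delta_v$ as debiased regression coefficients, where the claim becomes the statement that regressing out a block of zero-coefficient regressors is inert).
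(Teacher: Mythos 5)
Your proposal is correct and follows essentially the same route as the paper's proof: invoke Lemma~\ref{thm:neccIndGeneral} to conclude that the vanishing moments force $\delta_v(C,\psAn(C),S,D)=\tilde B(\{C,v\})_{v,C}$, note that the marginal direct effect is zero on the non-ancestral coordinates so the regressor weights on $C_2$ vanish, and then verify by the block structure (using $D_{\psAn(C_1),\,\psAn(C)\setminus\psAn(C_1)}=0$) that $\delta_v(C_1,\psAn(C),S,D)=\delta_v(C_1,\psAn(C_1),S,D)$, whence $\gamma_v(C,S,D)=\gamma_v(C_1,S,D)$. The block-matrix check you flag as the main obstacle is exactly the computation the paper carries out, so no genuinely different ideas are involved.
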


\begin{lemma}\label{thm:noMissPruneParents}
Suppose $D = H_\mathcal{C}(B)$ for some $H_\mathcal{C} \in
\mathcal{D}$ with $\mathcal{C}=(C_s)_{s\in V}$ such that $C_s \subseteq \an(s) \setminus \sib(s)$.  Let $v \in V$ be such that we have 
$
\E(\gamma_c(D)^{K-1}\gamma_v(\pspa(v), S,D)) = 0
$ for all $c \in \pspa(v)$.
If $q \in \{\pa(v)\setminus \pspa(v)\} \cup \sib(v)$, then for generic $B$ and error moments,  $\E \left(\gamma_q(D)^{K-1}\gamma_v(D)\right) \neq 0$.
\end{lemma}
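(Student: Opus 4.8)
The plan is to first turn the hypothesis into a structural statement using the necessity lemmas already available, then to read $\E(\gamma_q(D)^{K-1}\gamma_v(D))$ off as a polynomial in $B$ and the degree-$\le K$ moments of $\varepsilon$, and finally to exhibit — consistently with the hypothesis — one parameter configuration at which it does not vanish; by the usual genericity argument (the relevant quantities are polynomials/rational functions in the parameters) this proves the claim. Throughout I take $S=\Sigma$, as elsewhere in the section.

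First I would apply Lemma~\ref{thm:neccIndGeneral} with $C=\pspa(v)$ and $A=\psAn(\pspa(v))$. Since $D=H_\mathcal{C}(B)$ satisfies $D_{ij}\ne 0$ only when $j\in\an(i)$, and since $\tilde B(\{\pspa(v),v\})_{v,\pspa(v)}=D_{v,\pspa(v)}$ (only ancestral paths contribute, so marginalizing the non‑ancestral coordinates of $C_v$ changes nothing), the vanishing of $\E(\gamma_c(D)^{K-1}\gamma_v(\pspa(v),\Sigma,D))$ for all $c\in\pspa(v)$ forces, generically, $\delta_v(\pspa(v),\psAn(\pspa(v)),\Sigma,D)=D_{v,\pspa(v)}$, hence $\gamma_v(\pspa(v),\Sigma,D)=\gamma_v(D)$. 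Likewise, Lemma~\ref{thm:neccIndSib} (with $C=\pspa(v)$) rules out $\pspa(v)\cap\sib(v)\ne\emptyset$, so $\pspa(v)\subseteq\an(v)\setminus\sib(v)$. Thus the hypothesis becomes: $\E(\gamma_c(D)^{K-1}\gamma_v(D))=0$ for all $c\in\pspa(v)$, with $\pspa(v)$ an ancestor‑only set disjoint from $\sib(v)$.

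Next I would expand the raw residuals in the error basis. Using the path‑weight expansion of $(I-B)^{-1}$ and of $\tilde B(\{\pspa(v),v\})$, one obtains $\gamma_v(D)=\sum_s \rho_{vs}\varepsilon_s$, where $\rho_{vs}$ is the sum of path‑weights of directed paths $s\to v$ that avoid every node of $\pspa(v)$; in particular $\rho_{vv}=1$, $\rho_{vs}=0$ for $s\in\pspa(v)$, and $\rho_{vq}\ne 0$ generically when $q\in\pa(v)\setminus\pspa(v)$ (the edge $q\to v$ is such a path). The same holds for $\gamma_q(D)$ and for each $\gamma_c(D)$, with reachability sets $R_q,R_c$. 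Hence $\E(\gamma_q(D)^{K-1}\gamma_v(D))$ and each hypothesis moment is a fixed linear combination, with polynomial‑in‑$B$ coefficients, of mixed moments $\E(\prod_w\varepsilon_w^{r_w})$, $|r|\le K$. It therefore suffices to point to a single mixed moment whose coefficient in $\E(\gamma_q^{K-1}\gamma_v)$ is a nonzero polynomial in $B$ and which is not pinned down by the hypothesis equations, so that it can be set nonzero while the hypothesis still holds; genericity then finishes. I would do this by a short case analysis: if $q\in\sib(v)$ (so $q\notin\pa(v)$ by bow‑freeness), take $\E(\varepsilon_q^{K-1}\varepsilon_v)$, a cross‑moment the bidirected structure permits to be nonzero; it appears in $\E(\gamma_q^{K-1}\gamma_v)$ with coefficient $\rho_{qq}^{K-1}\rho_{vv}=1$ plus path corrections, and — when $q\notin\an(v)$ and $v\notin\an(q)$ — a reachability argument shows it occurs in no hypothesis equation. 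If $q\in\pa(v)\setminus\pspa(v)$, take $\E(\varepsilon_q^{K})$, which appears with leading coefficient $\rho_{vq}\ne 0$; here $\rho_{vs}=0$ for $s\in\pspa(v)$ while $\rho_{vv}=1$ is used to see it survives the hypothesis system. The remaining configurations ($q\in\sib(v)\cap\an(v)$, or $v\in\an(q)$) I would handle by selecting a slightly different monomial in $\varepsilon_q,\varepsilon_v$ and repeating the argument.

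The main obstacle is exactly this last bookkeeping: showing that the chosen mixed moment's contribution to $\E(\gamma_q^{K-1}\gamma_v)$ cannot be cancelled against the hypothesis constraints. Because those constraints couple $B$ with the error moments, it is not enough that the target coefficient is a nonzero polynomial in $B$; one must verify that, on the hypothesis variety, the target moment is still free in the needed direction. I expect this to require a careful description of the directed‑reachability sets $R_c,R_v$ (paths avoiding the pseudo‑parents), together with acyclicity and bow‑freeness, to exclude the kind of coincidence illustrated by Figure~\ref{fig:pruningExample} where adjusting a descendant by its own parent strips out the effect of $v$.
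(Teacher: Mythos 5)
Your proposal follows essentially the same route as the paper's proof: use the hypothesis (via Lemma~\ref{thm:neccIndGeneral}) to pin down the expansion of $\gamma_v(D)$ in the error basis; for $q\in\pa(v)\setminus\pspa(v)$ observe that the coefficient of $\varepsilon_q$ is the marginal direct effect $\tilde B(\{\pspa(v),q,v\})_{vq}$ (a sum of path weights of paths avoiding $\pspa(v)$ that includes the edge $q\to v$, hence generically nonzero) and perturb $\E(\varepsilon_q^{K})$; for $q\in\sib(v)$ perturb the free cross-moment $\E(\varepsilon_q^{K-1}\varepsilon_v)$. The ``main obstacle'' you flag at the end is not actually an obstacle here: once the combinatorial configuration is fixed (namely $\pspa(v)\subseteq\an(v)\setminus\sib(v)$ and $D_{v,\pspa(v)}$ equal to the marginal direct effect, which the necessity lemmas force generically), the hypothesis moments vanish \emph{identically} in the parameters, so there is no proper ``hypothesis variety'' cutting down the parameter space, and genericity can be argued over the full space of $B$ and error moments as in Lemma~\ref{thm:neccIndGeneral}. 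Likewise your worry about the sub-cases $q\in\sib(v)\cap\an(v)$ or $v\in\an(q)$ dissolves: writing $\gamma_v=\varepsilon_v+\eta$ and $\gamma_q=\varepsilon_q+\zeta$ with $\eta$ free of $\varepsilon_v$ and $\zeta$ free of $\varepsilon_q$, the monomial $\varepsilon_q^{K-1}\varepsilon_v$ cannot arise from any remainder term, so $\E(\varepsilon_q^{K-1}\varepsilon_v)$ enters $\E(\gamma_q^{K-1}\gamma_v)$ affinely with coefficient $1$ and can be chosen to avoid the single forbidden value, uniformly across all sibling sub-cases.
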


\begin{lemma}\label{thm:neccIndSib}
	Consider $v \in V$ and sets $A, C$ such that $C \subseteq A \subseteq V \setminus \{v\}$. Suppose $D = H_\mathcal{C}(B)$ for some $H_\mathcal{C} \in \mathcal{D}$ with $\mathcal{C}=(C_s)_{s\in V}$ such that $C_s \subseteq \an(s) \setminus \sib(s)$ for all $v \in V$. If $C \cap \sib(v) \neq \emptyset$, then for generic $B$ and error moments, there exists some $q \in C$ such that $\E\left(\gamma_q(D)^{K-1}\gamma_v(C, \Sigma, D )\right) \neq 0$.
\end{lemma}

Thus far we have been concerned with discovering sets which contain ancestors but not siblings of some node $v$. Corollary~\ref{thm:noMissPruneParentsFinal} shows that when we have identified such a set which is also a superset of the parents of $v$, we can prune away ancestors which are not parents. This motivates the pruning procedure described in Algorithm~\ref{alg:pruneAncest}.

\begin{corollary}\label{thm:noMissPruneParentsFinal}
	Suppose $D = B$. For $v \in V$ and generic $B$ and error moments, suppose $\pa(v) \subseteq C \subseteq \an(v)\setminus \sib(v)$. 
	If $q \in C \setminus \pa(v)$, then for all $c \in C$
	\begin{equation}
	\E(\gamma_c(D)^{K-1}\gamma_v(C \setminus \{q\}, \Sigma,D)) = 0.
	\end{equation}
	If $q \in \pa(v)$, then there exists some $c \in C$ such that
	\begin{equation}
	    \E(\gamma_c(D)^{K-1}\gamma_v(C \setminus \{q\}, \Sigma,D)) \neq 0.
	\end{equation}
\end{corollary}

\section{Graph estimation algorithm}\label{sec:algo}
Using the claims established above, we present the \textbf{B}ow-free \textbf{A}cyclic \textbf{n}on-\textbf{G}aussian (BANG) procedure in Algorithm~\ref{alg:bang} which completely identifies the underlying causal structural of the linear SEM when it corresponds to a BAP.

The algorithm starts with a complete bidirected graph so that the posited siblings for each node, $\widehat{\sib}(v)$, are initialized to $V \setminus v$ and the posited parents $\widehat{\pa}(v)$, are initialized to $ \emptyset$. The method then iteratively certifies ancestors which are not siblings by considering whether \eqref{eq:ancestralCert} holds for progressively larger sets. When we certify that $C \subseteq \an(v) \setminus \sib(v)$, $C$ is added to $\widehat{\pa}(v)$, $C$ is removed from $\widehat{\sib}(v)$, and $D$ is updated. This procedure is repeated until no additional ancestral relationships can be certified. Any remaining dependency between the residuals are then assumed to be due to a bidirected edge. In the algorithm, whenever we specify a test for $X \independent W$, we mean testing $\E(X^{K-1}W) = 0$ for some prespecified $K > 2$.

\begin{algorithm}[b]\scriptsize
	\caption{\label{alg:bang}BANG procedure}
	\begin{algorithmic}[1]
		\State Input: Data $Y \in \mathbb{R}^{p \times n}$ and $S \in \mathbb{R}^{p\times p}$ which is the (potentially sample) covariance of $Y$
		\State For all $v \in V$, set $\widehat{\pa}(v) = \emptyset$ and $\widehat{\sib}(v) = V \setminus \{v\}$
		\State Set all elements of $D \in \mathbb{R}^{p \times p}$ to be $0$ and $l = 1 $

		\While{$\max_v |\widehat{\sib}(v)| \geq l$}
		\For{$v \in V$}
		\State Prune $\widehat{\sib}(v)$ using Alg.~\ref{alg:checkInd}
		\State Certify pseudo-parents of $v$ and update $\widehat{ \pa}(v)$, $\widehat{\sib}(v)$, and $D$ using Alg.~\ref{alg:certPar}
		\EndFor
		\State \textbf{if} $D$ was updated, reset $l = 1$; \textbf{ else } set $l = l + 1$
		\EndWhile\label{alg:ancestorFind}
		\State Remove ancestors which are not parents from $\widehat{ \pa}(v)$ for all $v \in V$ using Alg.~\ref{alg:pruneAncest}

		\State Return: $\hat E_\rightarrow = \{(u,v): u \in \widehat{\pa}(v)\}$, $\hat E_\leftrightarrow = \{\{u,v\}: u \in \widehat{\sib}(v)\}$ 
	\end{algorithmic}
\end{algorithm}

\begin{algorithm}[t]\scriptsize
	\caption{\label{alg:checkInd}Prune $\widehat{\sib}(v)$}
	\begin{algorithmic}[1]
		\State Input: $v$, $\widehat{\sib}(v)$, $Y$, $D$
		\State Set $\gamma(D) = Y - DY$
		\For{$u \in \widehat{\sib}(v)$}
			\If{$\gamma_u(D) \independent \gamma_v(D)$}
		\State Remove $u$ from $\widehat{\sib}(v)$ and remove $v$ from  $\widehat{\sib}(u)$
	\EndIf
	\EndFor
		\State Return: $\widehat{\sib}(v)$ for all $v \in V$, 
	\end{algorithmic}
\end{algorithm}

\begin{algorithm}[t]\scriptsize
	\caption{\label{alg:certPar}Certify pseudo-parents}
	\begin{algorithmic}[1]
		\State Input: $v$, $\widehat{ \pa}(v)$, $\widehat{\sib}(v)$, $D$, $S$, $Y$, $l$
		\State Set $C^\star = \emptyset$
		\For {$C \in  \binom{\widehat{\sib}(v)}{l}$}
		\If {$\gamma_C(D) \independent \gamma_v(C \cup \widehat{\pa}(v), S, D)$}
		\State $C^\star = C^\star \cup C$
		\EndIf
		\EndFor
		\State $\widehat{\pa}(v) = \widehat{\pa}(v) \cup C^\star$
		\State $D_{v,\widehat{\pa}(v)} = \delta_v(\widehat{\pa}(v), S, D)$
		\State $\widehat{\sib}(v) = \widehat{\sib}(v) \setminus \widehat{\pa}(v)$
		\State $\widehat{\sib}(s) = \widehat{\sib}(s) \setminus \{v\} \qquad  \forall s \in \widehat{ \pa}(v)$
		\State Return: $D$, $\widehat{\sib}(v)$ and $\widehat{ \pa}(v)$ for all $v \in V$, 
		\end{algorithmic}
	\end{algorithm}
	
\begin{algorithm}[t]\scriptsize
	\caption{\label{alg:pruneAncest}Prune ancestors which are not parents}
	\begin{algorithmic}[1]
		\State Input: $\widehat{ \pa}(v)$ and $\widehat{\sib}(v)$ for all $v \in V$, $D$, $S$, $Y$, $l$
\State Form topological ordering $\sigma$ such that $\sigma(u) < \sigma(v)$ implies $v \not \in \psan(u)$  
\For{$v \in \sigma^{-1}([p])$}
\For{$s\in \widehat{\pa}(v)$}
\If{$\gamma_c \independent \gamma_v(\widehat{\pa(v)}\setminus \{s\}, S,D )$ for all $c \in \hat{\pa}(v)$}
\State $\widehat{\pa}(v) = \widehat{\pa}(v) \setminus \{s\}$
\State $D_{v, s} = 0$
\EndIf
\EndFor
\EndFor
	\end{algorithmic}
\end{algorithm}

\subsection{Graph identification}
We first show that when given population values, the BANG procedure will return the correct graph.
\begin{theorem}\label{thm:mainID}
Suppose $Y$ is generated by a linear SEM which corresponds to a BAP $G$. Then for generic choices of $B$ and error moments, Algorithm~\ref{alg:bang} will output $\hat G = G$ when given population moments of $Y$.
\end{theorem}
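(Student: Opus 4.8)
\emph{Approach.} The plan is to treat Algorithm~\ref{alg:bang} as maintaining a loop invariant, and to split the proof into \emph{soundness} (the running $\widehat\pa$, $\widehat\sib$, $D$ are never wrong), \emph{progress} (every true parental relation is eventually certified), \emph{termination}, and the \emph{post-processing} of Algorithm~\ref{alg:pruneAncest}. All claims hold off a Lebesgue-null parameter set: every quantity the algorithm tests against zero (a debiased coefficient $\delta_v$, a residual, or a moment $\E(\gamma_c^{K-1}\gamma_v)$) is a rational function of the entries of $B$ and of the error moments up to order $K$, so each appeal to a lemma of Section~\ref{sec:bap} excludes only a null set, and the algorithm makes finitely many such appeals.

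\emph{Soundness invariant.} I would prove that, reading $\pspa(\cdot)$ off the current $D$, after every pass of the while loop: (i) $D = H_\mathcal{C}(B)$ for $\mathcal{C} = \{\pspa(v)\}_{v\in V}$; (ii) $\widehat\pa(v) = \pspa(v)\subseteq\an(v)\setminus\sib(v)$; (iii) $\sib(v)\cup(\pa(v)\setminus\pspa(v))\subseteq\widehat\sib(v)$; and (iv) the residuals are self-consistent, $\E(\gamma_c(D)^{K-1}\gamma_v(\pspa(v),S,D)) = 0$ for all $c\in\pspa(v)$. All four hold at initialization. In the inductive step, Algorithm~\ref{alg:certPar} only enlarges $\widehat\pa(v)$ by sets $C$ passing \eqref{eq:ancestralCert}: Lemma~\ref{thm:neccIndSib} forbids such a $C$ from meeting $\sib(v)$; Lemma~\ref{thm:neccIndGeneral} forces $\delta_v$ to equal the corresponding marginal direct effect, keeping $D$ of the form $H_\mathcal{C}(B)$ and each new $\pspa(v)$ inside $\an(v)$; and the pre-screening of Algorithm~\ref{alg:checkInd}, justified by Lemma~\ref{thm:neccIndDes} with Lemma~\ref{thm:neccIndGeneral}, strips from the candidate pool precisely those nodes that pass \eqref{eq:ancestralCert} only via a bypassing path as in Figure~\ref{fig:pruningExample}---never ancestors of $v$---so (ii) persists. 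Lemma~\ref{thm:noMissPruneParents}, whose hypothesis is (i)+(iv), shows the sibling-pruning step never deletes a true sibling or an uncertified true parent, so (iii) persists; re-establishing (iv) after the accumulate-then-update step uses that a union of certifiable sets is certifiable and that $\delta_v$ and $\psAn$ are left unchanged by the rewrite of the $v$th row of $D$.

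\emph{Progress, termination, post-processing.} Only finitely many additions to the $\widehat\pa(v)$ can occur, after which $D$ stabilizes and $l$ climbs until the loop exits at $l>\max_v|\widehat\sib(v)|$; hence on exit every subset size up to $\max_v|\widehat\sib(v)|$ has been tried by Algorithm~\ref{alg:certPar} without a certification. I claim this forces $\pa(v)\subseteq\widehat\pa(v)$ for all $v$. Otherwise pick $v$ minimal in a topological order of $G$ among the offenders; by minimality, (ii) and (iv), a short path-count gives $D_{a,\cdot} = B_{a,\cdot}$ for every strict ancestor $a$ of $v$, hence $D_{A,A}=B_{A,A}$ for $A = \An(\pa(v)) = \an(v)$; since $\psAn$ and $\An$ agree on the relevant sets once those rows are correct, Lemma~\ref{thm:debiasedEst} and Corollary~\ref{thm:suffInd} (with $S=\Sigma$) show that $C = \pa(v)\setminus\widehat\pa(v)$---which lies in $\widehat\sib(v)$ by (iii)---would have been certified when $l = |C|$, a contradiction. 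With (ii) this yields $\pa(v)\subseteq\widehat\pa(v)\subseteq\an(v)\setminus\sib(v)$ for all $v$, hence $D = B$ and $\gamma_v(D) = \varepsilon_v$, so the last pass of the sibling-pruning step keeps $u$ in $\widehat\sib(v)$ iff $\E(\varepsilon_u^{K-1}\varepsilon_v)\neq 0$, i.e.\ (generically, by Lemma~\ref{thm:noMissPruneParents}) iff $u\in\sib(v)$, giving $\widehat\sib(v) = \sib(v)$. Finally, at the entry to Algorithm~\ref{alg:pruneAncest} we have $D = B$ and \eqref{eq:ancestralCert} holding for $\widehat\pa(v)\supseteq\pa(v)$; applying Corollary~\ref{thm:noMissPruneParentsFinal} to the current (shrinking, still $\supseteq\pa(v)$) set $\widehat\pa(v)$ removes each non-parent and retains each parent, so $\widehat\pa(v) = \pa(v)$. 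Therefore $\hat E_\rightarrow = E_\rightarrow$, $\hat E_\leftrightarrow = E_\leftrightarrow$, and $\hat G = G$.

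\emph{Main obstacle.} The delicate part is soundness: in a non-ancestral BAP a node may have a sibling among its ancestors and, as Figure~\ref{fig:pruningExample} shows, a non-ancestor can pass the raw moment equations \eqref{eq:ancestralCert}, so one must show that neither the certification step nor the pruning/pre-screening step ever makes an irreversible error. Lemmas~\ref{thm:neccIndGeneral}, \ref{thm:neccIndSib}, \ref{thm:neccIndDes} and \ref{thm:noMissPruneParents} are built for this, but the work is to pin down an invariant strong enough (in particular the self-consistency clause (iv)) and to verify that the hypotheses each lemma needs---the form $D = H_\mathcal{C}(B)$, the inclusions (ii)--(iii), $S = \Sigma$, and the agreement of $\psAn$ with $\An$ on the relevant sets---always hold in the state the algorithm is actually in, while the topological ``wave'' of corrected rows of $D$ supplies the hypothesis $D_{A,A}=B_{A,A}$ of Lemma~\ref{thm:debiasedEst} node by node.
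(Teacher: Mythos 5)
Your proposal is correct and follows essentially the same route as the paper's proof: the same soundness invariant (your (i)--(iii) are the paper's induction Conditions 2 and 3, with $D=H_\mathcal{C}(B)$ maintained via Lemma~\ref{thm:neccIndGeneral}, siblings excluded via Lemma~\ref{thm:neccIndSib}, non-ancestors screened via Lemma~\ref{thm:neccIndDes}, and no erroneous pruning via Lemma~\ref{thm:noMissPruneParents}), the same progress argument along a topological order (the paper runs it as a forward induction on $\sigma$ where your minimal-counterexample phrasing is its contrapositive, both resting on Lemma~\ref{thm:debiasedEst} and Corollary~\ref{thm:suffInd} once the ancestors' rows of $D$ are exact), and the same endgame of a final sibling-pruning pass followed by Corollary~\ref{thm:noMissPruneParentsFinal}.
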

\begin{proof}
The lemmas in Section~\ref{sec:bap} make statements about different individual quantities being non-zero for generic $B$ and error moments.  Since we will only consider a finite set of these quantities, the union of the null sets to be avoided for each individual quantity is also a null set. Thus, in this proof, we may assume that quantities that are generically non-zero are all actually non-zero.  

Our proof proceeds by induction and shows that after Line \ref{alg:ancestorFind} of Alg.~\ref{alg:bang}, the procedure obtains $\widehat{\pa}(v)$ such that $\pa(v) \subseteq \widehat{\pa}(v) \subseteq \an(v)\setminus \sib(v)$ and $\widehat{\sib}(v) = \sib(v)$. Then the final step, using Alg.~\ref{alg:pruneAncest}, results in $\widehat{\pa}(v) = \pa(v)$. 

Let $\sigma$ be a topological ordering consistent with the directed portion of underlying graph $G$. Let the $z$th induction step be defined as an entire step of testing progressively larger sets $C$ until all parents of $v = \sigma^{-1}(z)$ have been discovered. Note that since we do not know the ordering a priori and simply cycle over all variables and progressively larger sets, it could be that $z$th induction step is actually completed (i.e., all the parents of $\sigma^{-1}(z)$ are discovered) chronologically before the $z-1$ step is done.   

As the induction hypothesis for step $z$, let $v = \sigma^{-1}(z)$ and suppose that:
\begin{enumerate}
	\item\label{itm:induction1} For $A = \sigma^{-1}([z-1])$, $D_{A, A} = B_{A,A}$ and $\widehat{\pa}(a) \supseteq \pa(a)\; \forall a \in A$; 
	\item\label{itm:induction2} $D = H_\mathcal{C}(B)$ for some $H_\mathcal{C} \in \mathcal{D}$ where $\forall s \in V$, $\widehat{\pa}(s)\subseteq \an(s)\setminus\sib(s)$ (i.e., $C_s \subseteq \an(s) \setminus \sib(s)$);
	\item\label{itm:induction3} For all $u \in V$,
          $\widehat{\sib}(u) \supseteq \sib(u)$ and $\pa(u) \subseteq
          \{ \widehat{\sib}(u)\cup \widehat{pa}(u)\}$.
\end{enumerate}
The first condition assumes all directed edges upstream of $v$ have
been identified. The second condition assumes that each row in the current value of $D$ corresponds to a marginal direct effect and that nothing has been misscertified into $\widehat{\pa}(v)$. The third condition assumes no siblings or parents have been incorrectly pruned. We now show that the $z$th step will be completed and the induction conditions will hold for step $z+1$. 

\textbf{Condition 3:}
By assumption, $D_{A, A} = B_{A,A}$ and $D = H_\mathcal{C}(B)$ for some $H_\mathcal{C} \in \mathcal{D}$, so $\psAn(\pa(v)) = \An(\pa(v))$. Lemma~\ref{thm:noMissPruneParents} implies that for all $u \in V$, Alg.~\ref{alg:checkInd} does not mistakenly remove any siblings or parents of $u$ from $\widehat{\sib}(v)$. Furthermore, Lemma \ref{thm:neccIndSib} implies that Alg.~\ref{alg:certPar} will not remove any siblings from $\widehat{\sib}(v)$. Thus $\pa(u) \subseteq \big(\widehat{\sib}(u) \cup \widehat{ \pa}(u)\big)$ and $\widehat{\sib}(u) \supseteq \sib(u)$ continue to hold, and Condition~\ref{itm:induction3} is satisfied for the next step.

\textbf{Condition 2:}
Alg.~\ref{alg:certPar} only adds $C$ to $\widehat{\pa}(v)$ if $C \cup \widehat{\pa}(v)$ satisfies \eqref{eq:ancestralCert}. Lemma~\ref{thm:neccIndSib} implies that any set $C$ such that $C \cap \sib(v) \neq \emptyset$ will not be added. 
We now show that any set $C \not\subseteq \an(v)$ will not be added to $\widehat{\pa}(v)$ because it either will not be considered by Alg.~\ref{alg:certPar} or will not satisfy \eqref{eq:ancestralCert}. 

For some $v \in V$, let $C  \subseteq V \setminus \widehat{\pa}(v)$ and $C_1 = C \cap \an(v)$. If $C \not \subseteq \an(v)$ and $C_1 \neq C$, the set $C \cup \widehat{\pa}(v)$ will only be considered by Alg.~\ref{alg:certPar} after the set $C_1 \cup \widehat{\pa}(v)$. If $C \cup \widehat{\pa}(v)$ satisfies \eqref{eq:ancestralCert}, then Lemma~\ref{thm:neccIndDes} implies that $C_1 \cup \widehat{\pa}(v) $ also satisfies \eqref{eq:ancestralCert}. Thus, $C_1$ would first be certified into $\widehat{\pa}(v)$. Lemma~\ref{thm:neccIndDes} further implies that for any $c \in C$, $\E(\gamma_c(D)^{K-1} \gamma_v(C_1 \cup \widehat{\pa}(v), S, D) = 0$, 
so that after  $C_1$ is placed in $\widehat{\pa}(v)$ and $D$ is
updated, we have $\E(\gamma_c(D)^{K-1} \gamma_v(D)) = 0$ for all  $c \in C \setminus C_1$. Hence, Alg.~\ref{alg:checkInd} will subsequently remove $C \setminus C_1$ from $\widehat{\sib}(v)$. Thus, $C$ will only be considered if $C \subseteq \an(v)$ or if $C \cup \widehat{\pa}(v)$ does not satisfy~\eqref{eq:ancestralCert}, and any updates preserve $\widehat{\pa}(v) \subseteq \an(v) \setminus \sib(v)$. Lemma~\ref{thm:neccIndGeneral} also implies that after updates are made to $\widehat{\pa}(v)$, the resulting update to the $v$th row of $D$ is a marginal direct effect so that $D = H_\mathcal{C}(B)$ for some $H_\mathcal{C} \in \mathcal{D}$. Thus, Condition~\ref{itm:induction2} continues to hold. 



\textbf{Condition 1:}
By the acyclicity assumption, $|\pa(v)| \leq  z - 1$. So by successively testing larger sets, and resetting the counter after each update, if we do not first certify all parents of $v$ as part of smaller sets, we will eventually consider $C = \pa(v)$. The induction hypothesis and Lemma~\ref{thm:suffInd} ensure that $\E(\gamma_c^{K-1}(D) \gamma_v(C, S, D)) = 0$ for all $c \in C$ so that $C$ will be certified into $\widehat{\pa}(v)$. Lemma~\ref{thm:neccIndGeneral} implies the resulting update which sets $D_{v, \widehat{\pa}(v)} = \delta_v(\widehat{\pa}(v),\psAn(\widehat{\pa}(v)), S, D)$ will result in $D_{v,V} = B_{v,V}$. Thus, step $z$ will be completed, and Condition~\ref{itm:induction1} continues to hold.

After $p$ steps, $D = B$, so $\gamma_v(D) = \varepsilon_{v}$ for all $v$ and $\E(\gamma_u(D)^{K-1}\gamma_v(D)) \neq 0$ if and only if $u \in \sib(v)$. If $\widehat{\sib}(v) \neq \emptyset$ for any $v$ then after the last update to $D$, there will be at least one more pass through Alg.~\ref{alg:checkInd} so any non-siblings will be removed and $\widehat{\sib}(v) = \sib(v)$ for all $v \in V$. If $\widehat{\sib}(v) = \emptyset$ for all $v$, then by the induction conditions, $\sib(v) \subseteq \widehat{\sib}(v) = \emptyset$, so again,  $\sib(v) = \widehat{\sib}(v)$.

By Condition~\ref{itm:induction2}, $\pa(v) \subseteq \widehat{\pa}(v) \subseteq \an(v) \setminus \sib(v)$, and Corollary~\ref{thm:noMissPruneParentsFinal} implies that Alg.~\ref{alg:pruneAncest} removes any ancestors from $\widehat{\pa}(v)$ which are not parents but does not remove any parents. Thus, $\widehat{\pa}(v) = \pa(v)$.
\end{proof}

Theorem~\ref{thm:mainID} shows that the graph is correctly identified
given population values by successively testing whether a quantity is
zero or non-zero. However, the quantities considered are non-linear
functions of the data so in finite samples, in addition to sample
variability, the sample quantities will typically be
biased. Nonetheless, the following corollary shows that there exists a
cut-off $\eta_1 > 0$ such that checking whether each sample statistic
is greater than or less than $\eta_1/2$ as a proxy for independence
will yield consistent estimates of $G$ as long as the sample moments
of $Y$ are consistent for the population moments. The value of
$\eta_1$ depends on the model parameters, but some $\eta_1 > 0$ must
exist for generic $B$ and error moments. This implies pointwise consistency of
BANG when the tests are ``appropriately'' tuned. Of course $\eta_1$
depends on quantities that are unknown in practice, so in applications we find ourselves in a similar position as for
other existing constraint-based algorithms in causal discovery (e.g.,
PC or FCI algorithm) where algorithm output delicately depends on
suitable specification of levels for statistical tests which act as
tuning parameters.

\begin{corollary}\label{thm:consistentEst}
	Suppose $Y$ is a sample comprised of i.i.d.~vectors $Y^{(1)}, \dots, Y^{(n)}$ generated by a linear SEM that corresponds to the BAP $G$. Then, for generic choices of $B$ and error moments, there exist $\eta_1, \eta_2 > 0$ such that when the sample moments are within an $\eta_2$-ball of the population moments of $Y$, Alg.~\ref{alg:bang} will output $\hat G = G$ when comparing the absolute value of the sample statistics to $\eta_1/2$ as a proxy for the independence tests.
\end{corollary}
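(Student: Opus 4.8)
The plan is to bootstrap off Theorem~\ref{thm:mainID}. Along the computational path taken when Alg.~\ref{alg:bang} is fed exact population inputs it performs only finitely many independence tests, and the key point is that each test statistic is a fixed rational function of the moments of $Y$ up to degree $K$. Write $m$ for the vector collecting these population moments and $\hat m$ for the corresponding sample moments (the second moments among these make up $S$). Every object the algorithm manipulates is rational in its inputs: the iteratively built matrix $D$ is obtained from $S$ and earlier entries of $D$ through the map $\delta_v$ of \eqref{eq:deltaDef}, which only involves matrix products and inverses, and each test statistic $\E(\gamma_c(D)^{K-1}\gamma_v(C,S,D))$ is a polynomial in the degree-$\le K$ moments of $Y$ and in the entries of $D$ and $\delta_v$. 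Consequently each such statistic is a rational function of $\hat m$ whose denominator is a product of determinants of the submatrices inverted along the way. For generic $B$ and error moments these determinants do not vanish at $m$ — as is implicitly required already for the inverses taken in the results of Section~\ref{sec:bap} — so each of these rational functions is well-defined and continuous on a neighborhood of $m$.

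Next I invoke Theorem~\ref{thm:mainID}: on the complement of a Lebesgue-null set of parameters, the population run outputs $\hat G = G$ and in doing so makes a finite sequence of comparisons ``is $T(m)=0$?'', where each $T$ is one of the rational functions above; enumerate them $T_1,\dots,T_N$. The branches that declare independence correspond to exact algebraic identities $T_j(m)=0$ coming from Corollary~\ref{thm:suffInd} and the sufficiency parts of the lemmas in Section~\ref{sec:bap} (these hold for all parameters), while the branches that declare dependence correspond to $T_j(m)\neq 0$, which is guaranteed for generic parameters by Lemmas~\ref{thm:neccIndGeneral}, \ref{thm:neccIndSib}, \ref{thm:neccIndDes}, \ref{thm:noMissPruneParents} and Corollary~\ref{thm:noMissPruneParentsFinal}. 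Enlarging the excluded null set to also cover these finitely many non-vanishing statements and the finitely many determinant conditions above, put
\[
\eta_1 \;=\; \min\bigl\{\,|T_j(m)| \;:\; 1 \le j \le N,\ T_j(m)\neq 0\,\bigr\} > 0
\]
(with $\eta_1=1$, say, in the degenerate case that every $T_j(m)$ vanishes). By continuity of $T_1,\dots,T_N$ at $m$, pick $\eta_2>0$ small enough that $\|\hat m - m\| < \eta_2$ forces $|T_j(\hat m) - T_j(m)| < \eta_1/2$ for all $j \le N$.

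The remaining step is an induction on the steps of the algorithm showing that, whenever $\|\hat m - m\| < \eta_2$, the run on sample inputs follows exactly the same computational path as the population run. Suppose the two runs have agreed on all decisions through step $t$. Then they have performed the same sequence of certifications and prunings, so their combinatorial states — the sets $\widehat{\pa}(\cdot)$ and $\widehat{\sib}(\cdot)$ and the zero pattern of $D$ — coincide, and hence the statistic evaluated at step $t+1$ is the same rational function, say $T_j$, in both runs, evaluated at $\hat m$ in one and at $m$ in the other. If $T_j(m)=0$ then $|T_j(\hat m)| < \eta_1/2$; if $T_j(m)\neq 0$ then $|T_j(\hat m)| \ge |T_j(m)| - \eta_1/2 \ge \eta_1/2$. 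In either case, thresholding $|T_j(\hat m)|$ at $\eta_1/2$ reproduces the population decision, so the sample run only ever encounters $T_1,\dots,T_N$, makes identical decisions, and by Theorem~\ref{thm:mainID} outputs $\hat G = G$.

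I expect the main obstacle to be the bookkeeping in this induction. Because the matrix $D$ in the sample run is built from the sample covariance, it differs numerically from its population counterpart at every step, so the induction must be phrased in terms of the combinatorial state rather than numerical values; what makes a single pair $(\eta_1,\eta_2)$ suffice is precisely that once that state is fixed the next statistic is genuinely one fixed rational function of the inputs. The only other delicate point is checking that every denominator encountered along the correct path is nonzero at $m$, i.e.\ that the genericity already used throughout Section~\ref{sec:bap} keeps all these rational functions finite; this adds only finitely many more determinant-nonvanishing conditions to the excluded null set.
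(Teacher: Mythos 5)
Your proposal is correct and follows essentially the same route as the paper: both identify the finitely many test statistics as rational functions of the moments of $Y$, set $\eta_1$ to the minimum absolute value of the generically nonzero ones, and use continuity (the paper says local Lipschitzness) to choose $\eta_2$ so that thresholding at $\eta_1/2$ reproduces every population decision. The explicit induction you give showing that the sample run tracks the same computational path---so that each sample statistic really is the \emph{same} rational function evaluated at $\hat m$ rather than $m$---is left implicit in the paper's proof, so your write-up is a slightly more careful rendering of the same argument rather than a different one.
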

\begin{proof}
In Theorem~\ref{thm:mainID}, we showed that BANG will correctly identify the true BAP as long as certain  expectations encoding absence of edges and paths are all $0$ and  further  expectations encoding presence of edges/paths are all non-zero (which holds for generic $B$ and error moments). For a fixed BAP $G$, let $S_0$ be the set of expectations which should be $0$, and let $S_1$ be the set of expectations which should be non-zero. Let $\eta_1 = \min_{S_1} |\E(\gamma_c^{k-1}\gamma_v)|$. For generic parameters, $\eta_1 > 0$.

We note that when $D = H_\mathcal{C}(B)$ for some $H_\mathcal{C} \in \mathcal{D}$, the maps which take moments of $Y$ to $E(\gamma_c(D)^{K-1}\gamma_v(C,\Sigma,D))$ are rational functions and are thus Lipschitz within a sufficiently small ball around the population moments of $Y$. Thus, there must exist some $\eta_2 > 0$ such that when the sample moments are within $\eta_2$ of the population moments, the sample quantities in $S_0$ and $S_1$ are with in $\eta_1/2$ of the population quantities. 

This implies that all estimates corresponding to quantities which are 0 are less than $\eta_1/2$ in absolute value, and all estimates that correspond to quantities which are generically non-zero are greater than $\eta_1/2$ in absolute value. Thus, comparing the absolute value of the sample quantities to $\eta_1/2$ accurately determines whether the parameters belong to $S_0$ or $S_1$ and thus yields a correct estimate $\hat G$. 
\end{proof}

\subsection{Practical concerns}
For any BAP, identification with population values holds for all but a
null set of $B$ and error moments. This set includes any of parameters
where the direct marginal effect of $u$ on $v$ might vanish for some
$u \in pa(v)$. Existing results on distributional equivalence of BAPS
with Gaussian errors \citep{nowzohour2017greedy} imply that error
moments which correspond to some Gaussian distribution also must be
avoided. For finite samples, accurate estimation would require an
analogue to the strong faithfulness conditions studied in
\citet{uhler2013geometry}. However, the typical Gaussian strong
faithfulness condition only regards the linear coefficients and error
covariances, while we additionally require that the higher order
moments of the errors are sufficiently non-Gaussian. As we see in the
simulations, when the errors come from a multivariate $T$ distribution
which are not too different from a Gaussian, performance may suffer considerably. 

Throughout the proof, we examine high order moments as a proxy for independence. Since these quantities are polynomials of the parameters, it allows us to make algebraic arguments that
facilitate the analysis. However, in practice, one could use any
non-parametric independence test instead  \citep{gretton2005measuring,
  szekely2009dist, bergsma2014consistent, pfister2018independence}. In
Section~\ref{sec:numericalResults}, we use
dHSIC~\citep{pfister2018independence} which performs well when the
sample size is small. However, simply calculating the statistic
requires $O(n^2)$ time rendering the permutation or bootstrap procedures required for calibrating a null distribution prohibitively expensive; we thus use the ``gamma approximation'' to the null distribution. However, even this becomes infeasible for $n > 2000$ and $p = 6$.  

When the sample size is large, we choose an implementation which is tied to the theoretical analysis and test whether moments are zero or non-zero. Specifically, we use empirical likelihood to test the joint hypothesis that $\E(\gamma_{c}^{K-1} \gamma_v) = 0$ for all $c \in C$. Empirical likelihood is useful as it does not require explicit estimation of the variances of $\gamma_{c}^{K-1} \gamma_v$ in order to form a well-calibrated test statistic, and the empirical likelihood ratio statistic converges to a known reference distribution under mild conditions. In addition, pooling together all the tests into one omnibus test helps mitigate multiple testing.  The empirical likelihood approach is typically less powerful than dHSIC at detecting dependence; however, the computation time required for the test is an order of magnitude smaller. When testing whether moments are zero or non-zero, a specific value of $K$ must be selected. This should correspond to a moment of the errors which is not consistent with the Gaussian distribution. If the data is skewed, $K=3$ could suffice since the third moment of the Gaussian is zero, but the third moment of the data is non-zero. If the data is either heavy or light tailed (relative to the Gaussian), $K=4$ should suffice. One can also combine the results of multiple values and test $\E(\gamma_c^{k-1}\gamma_v) = 0 $ for all $k =\{3, \ldots, K\}$ for some arbitrarily large $K$, but in practice, using larger values of $K$ requires more samples for accurate estimation and testing.

Given an oracle independence test, if the in-degree of each node
(counting both directed and bidirected edges) is bounded by some
constant $J$, then the total number of tests required is bounded by a
polynomial of the number of variables. Again, let $\sigma$ be a
topological ordering of the nodes consistent with $G$. As shown in the
proof of Theorem~\ref{thm:mainID}, at step $z$, once all the ancestors
of $\sigma^{-1}([z-1])$ have been identified, then we need only test
sets $C$ up to size $|\pa(\sigma^{-1}(z))| \leq J$ to certify the
parents of $\sigma^{-1}(z)$ and subsequently update $D$. Thus, $l$,
the size of sets considered, will never exceed $J$. In between any
update to $D$, for each node there will no more than $\sum_{k =
  1}^J\binom{p}{k}\leq p^{J}$ sets considered. In addition, each time
$l$ is incremented, for each node we screen no more than $p-1$
potential siblings using Alg.~\ref{alg:checkInd}. By the acyclicity
assumption, there are at most $p(p-1)/2 < p^2$ ancestral relationships
to discover, which would cause an update to $D$. Thus, to fully
discover $B$, there will be no more than $p^2 \times p(p^{J} + J p)$
independence tests. Once $D$ is fully updated so that $D = B$, then
$\widehat{\sib}(v) = \sib(v)$ for all $v \in V$. So that for each $v
\in V$ there will be at most an additional cycle through all sets with
size less than $J$ which will again result in $p(p^J + Jp)$ additional
tests.  We conclude that Alg.~\ref{alg:pruneAncest} will check at most $p(p-1)/2$ discovered ancestral relationships. Thus, there must be less than $O(p^{J+3}) $ total independence tests.

Finally, we note that after a graph has been estimated, the resulting $D$ and empirical covariance of $\gamma$ could be used as estimates for $B$ and $\Omega$. Alternatively, the empirical likelihood procedure of~\citet{wang2017empirical} could be used for both point estimates and simultaneous confidence intervals.

\section{Discovery in the presence of bows}\label{sec:modelMiss}
In Section~\ref{sec:algo}, we show that BANG recovers the true graph when the data is generated by a linear SEM corresponding to a BAP.  It is unclear, however, how to test this assumption directly in practice.  Hence, it is interesting to study what BANG will output if the true data generating process is a linear SEM which corresponds to an acyclic mixed graph $G = \left(V, E_\rightarrow, E_\leftrightarrow\right)$ which is not bow-free. 

In this section, we show that, given population values, BANG will
return a BAP $\bar G = \left( V, \bar E_\rightarrow, \bar
  E_\leftrightarrow \right)$ which we subsequently define. We use
$\bar \pa(v)$ and $\bar \sib(v)$ to denote the parents and siblings of
node $v$ in $\bar G$. Although $\bar G$ can be quite different from
$G$,  certain key properties important for interpretation are preserved. In particular, $\bar \pa(v) \subseteq \an(v)$, so that any directed edge in the output is not in the opposite direction of the true effect. 
Thus, roughly speaking, in the presence of bows, the procedure is sound---though potentially not complete---for identifying ancestral relations. In addition, any member of $\bar \sib(v)$ must be connected to $v$ by a bi-directed path in $G$, so in the presence of bows, the procedure is complete---though potentially not sound---for identifying confounded relationships. However, roughly speaking, a bidirected edge indicates less certainty about a causal relationship than a directed or absent edge, so the BANG procedure can be considered ``conservative'' in the sense of not being overconfident when positing causal relationships.      
In Example~\ref{ex:modelMisspecify}, we show a graph with bows and its corresponding ``projection'' $\bar G$.


For $v \in V$, we recursively define a set of nodes which we will call the \emph{irremovable} nodes. We denote this set by $\irr(v)$. Roughly speaking, $\irr(v)$ contains all nodes, whose total effect will never be fully removed from $v$ by the BANG procedure. 

\begin{definition}
Let $v\in V$, and define $\Irr(v)_0=\{v\}$.  For $k=1,\dots,p$, we define recursively $\Irr(v)_{k}= \left[\pa(\Irr(v)_{k-1})\cap\sib(\Irr(v)_{k-1}) \right] \cup \Irr(v)_{k-1}$. Then the set of irremovable nodes is defined as $\irr(v)= \Irr(v)_p$.
\end{definition}


Every $w \in \irr(v)$ is connected to $v$ by both a directed path
which only passes through $\irr(v)$ as well as a path of bidirected
edges which also only passes through nodes which are in
$\irr(v)$. 
In addition, $\irr(v) \subseteq \sib(\irr(v))$ and $\Irr(v)_1$ contains all nodes which form a bow which ends at $v$. Given $\irr(v)$ we now define the bidirected edges in $\bar G$.

\begin{definition}\label{def:barSib}
 The siblings of a node $v\in V$ in $\bar G$ are
\begin{equation} \label{eq:sibProjection}
\bar{\sib}(v) =  \{u \, : \, \sib(\irr(v)) \cap  \irr(u) \neq \emptyset\} \setminus \{v\}.
\end{equation}
In other words, the elements of $\bar{\sib}(v)$ are all nodes which are irremovable from $v$, the siblings of those nodes, and any other nodes whose irremovable nodes have a sibling in $\irr(v)$.
\end{definition}

\begin{definition}\label{def:barParent}
The parents of a node $v \in V$ in $\bar G$ are
\begin{equation}\label{eq:paProjection}
\bar{\pa}(v) =   \pa(\irr(v)) \setminus \bar \sib(v).
\end{equation}
Note that by definition $\bar{\pa}(v)\cap\bar \sib(v)=\emptyset$,
which prevents bows.
\end{definition}

Since every $u \in \irr(v)$ has a directed path to $v$ which only passes through $\irr(v)$, Definition~\ref{def:barParent} implies that for every $u \in \bar \pa(v)$, there exists a path $l = u \rightarrow s_1 \rightarrow \ldots \rightarrow s_{|l| - 2} \rightarrow v$ such that $\{s_1, \ldots, s_{|l|-2} \} \subseteq \irr(v)$.
By construction, every path from $w \in V$ to $v$ either is entirely contained in $\irr(v)$ or passes through $\bar \pa(v)$.

\begin{definition}\label{def:barG}
Let $G = \{V, E_\rightarrow, E_\leftrightarrow\}$ be an acyclic
directed mixed graph.  Let $\bar E_\rightarrow$ and $\bar
E_\leftrightarrow$ be given by Definitions~\ref{def:barSib} and \ref{def:barParent}.  We term $\bar G = \{V, \bar E_\rightarrow, \bar
E_\leftrightarrow\}$  the BAP projection of $G$.
\end{definition}

\begin{example}\label{ex:modelMisspecify}
\begin{figure}
    	\begin{subfigure}[t]{.45\textwidth}\centering
\begin{tikzpicture}[->,>=triangle 45,shorten >=1pt,
		auto,
		main node/.style={ellipse,inner sep=0pt,fill=gray!20,draw,font=\sffamily,
			minimum width = .5cm, minimum height = .5cm}]

\node[main node] (1) {1};
\node[main node] (2) [below = .9cm of 1] {2};
\node[main node] (3) [right = .9cm of 1]  {3};
\node[main node] (4) [right = .9cm of 2]{4};
\node[main node] (5) [right = .9cm of 3] {5};
\node[main node] (6) [right = .9cm of 4] {6};

\path[color=black!20!blue,style={->}]
(1) edge node {} (3)
(2) edge node {} (4)
(3) edge node {} (5)
(4) edge node {} (5)
(5) edge node {} (6)
(3) edge node {} (6)
;

\path[color=black!20!red,style={<->}]
(5.north) edge[bend right = 25] node {} (3.north)
(5.north) edge[bend right = 45] node {} (1.north)
(3) edge node {} (4)
;
\end{tikzpicture}
\caption{\label{fig:trueMod}True model, $G$.}
\end{subfigure}
~
\begin{subfigure}[t]{.45\textwidth}\centering
\begin{tikzpicture}[->,>=triangle 45,shorten >=1pt,
		auto,
		main node/.style={ellipse,inner sep=0pt,fill=gray!20,draw,font=\sffamily,
			minimum width = .5cm, minimum height = .5cm}]

\node[main node] (1) {1};
\node[main node] (2) [below =  .9 cm of 1] {2};
\node[main node] (3) [right = .9cm of 1]  {3};
\node[main node] (4) [right = .9cm of 2]{4};
\node[main node] (5) [right = .9cm of 3] {5};
\node[main node] (6) [right = .9cm of 4] {6};

\path[color=black!20!blue,style={->}]
(1) edge node {} (3)
(2) edge node {} (4)
(2) edge[densely dashed] node {} (5)
(5) edge node {} (6)
(3) edge node {} (6)
;

\path[color=black!20!red,style={<->}]
(5) edge[densely dashed] node {} (3)
(5.north) edge[bend right = 45] node {} (1.north)
(3) edge node {} (4)
(4) edge[densely dashed] node {} (5)
;
\end{tikzpicture}
\caption{\label{fig:projectedMod}Bow-free graph, $\bar G$.}
\end{subfigure}
\caption{\label{fig:exampleModelMiss}Graphs used in Example~\ref{ex:modelMisspecify}. The ``projection'' of (a) into a BAP is shown in (b) where dotted lines indicate edges that differ from the ``true model.''}  
\end{figure}
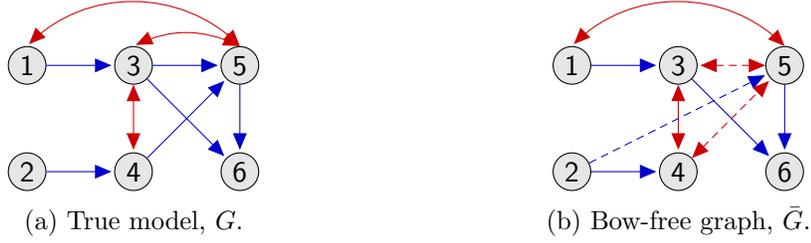
Consider the graph $G$ from panel (a) of Figure~\ref{fig:exampleModelMiss}. There is a bow between $3$ and $5$. Thus, $ \Irr(5)_1 = \{3\} \cup \{5\}$. 

Furthermore, $1 \in \pa(3) \cap \sib(5) \subseteq \pa(\Irr(5)_1) \cap \sib(\Irr(5)_1)$; thus, $1 \in \Irr(5)_2$. Similarly, $4 \in \pa(5) \cap \sib(3) \subseteq \pa(\Irr(5)_1) \cap \sib(\Irr(5)_1)$; thus $4 \in \Irr(5)_2$. No other nodes are in $\pa(\{3,5\})$ so $\Irr(5)_2 = \{1,4\} \cup \{3,5\}$. 

Next, note that $2 \in \pa(4)$ but $2 \not \in \sib(\Irr(5)_2)$ so it is not a member of $\Irr(5)_3$. Also, $6 \not \in \pa(\Irr(5)_2)$, so it is also not a member of $\Irr(5)_3$. Thus, for $s = 3, \ldots, 6$, $\Irr(5)_{s} = \Irr(5)_2 = \{1,3, 4, 5\}$ and $\irr(5) = \{1,3,4,5\}$.

There are no other bows in the graph, so $\irr(v) = \{v\}$ for all
other nodes and $\bar \sib(5) = \{1,3,4\}$. As $2 \in \pa(\irr(5))
\setminus \bar \sib(5)$, we have $2 \in \bar \pa(5)$. 

As we will show, when applied to population values corresponding to
the graph $G$ in (a), BANG will recover the BAP projection $\bar G$
displayed in panel (b). The dotted lines in (b) indicate edges in $\bar G$ which are different from the edges in $G$.
\end{example}

To develop our result on estimation of $\bar G$, we first show that the distribution of $Y$ is also in the model implied by BAP $\bar G$. 
\begin{lemma}\label{lem:bapProjection}
Let $G$ be an acyclic directed mixed graph, and suppose the random
vector $Y$ follows a distribution in the
linear SEM given by $G$. Let $\bar G = \left( V, \bar E_\rightarrow,
  \bar E_\leftrightarrow \right)$ be the projection of $G$ given by
Definition~\ref{def:barG}. Then $\bar G$ is a BAP. Furthermore, the
distribution of $Y$ is equal to the distribution in the linear SEM
given by $\bar G$ when defining edge weights and errors as follows:
\begin{equation}
\begin{aligned}\label{eq:projectionParam}
\bar    \beta_{v,u} &= \begin{cases}
\sum_{l \in \mathcal{L}^{(u)}_{v,u}(\bar \pa(v))}W(l) & \text{ if } u \in \bar \pa(v),\\
0 & \text{ else, }
\end{cases}\\
\quad & \quad \\
\bar \varepsilon_v &= \varepsilon_v + \sum_{w \in \irr(v)} \varepsilon_w \sum_{\substack{l \in \mathcal{L}_{v,w}\\
l\subseteq \irr(v)}} W(l). 
\end{aligned}
\end{equation}
\end{lemma}
\begin{proof}   
  Acyclicity of $G$ implies acyclicity of 
the projection $\bar G$ because $\bar{\pa}(v) \subseteq \an(v)$. Furthermore, by definition, $\bar \pa(v)$ does not include $\bar \sib(v)$, so $\bar G$ does not contain any bows. Thus, $\bar G$ is a BAP.

We now show that the distribution of $Y$ belongs to the SEM given by
$\bar G$. Recall the definition of $\mathcal{L}^{(w)}_{v,w}(\an(v))$
from~Section\ \ref{sec:setup}. Note that $\mathcal{L}^{(w)}_{v,w}(\an(v))= \emptyset$ for any $w \not \in \bar \pa(v) \cup \bar{\sib}(v)$ because if $w \not \in \bar{\sib}(v)$ and there was a path from $w$ to $v$ for which $w$ was the last node, then $w$ would be in $\bar \pa(v)$.
Hence,
\begin{equation}\allowdisplaybreaks\begin{aligned} \label{eq:yvExpansion}
    Y_v &= \varepsilon_v + \sum_{w \in \an(v)} \pi_{v,w}\varepsilon_w = \varepsilon_v + \sum_{w \in \an(v)} \varepsilon_w\sum_{l \in \mathcal{L}_{v,w}}W(l)  \\
    &= \varepsilon_v + \sum_{w \in \an(v)} \varepsilon_w \left(\sum_{\substack{l \in \mathcal{L}_{v,w}\\ \bar \pa(v) \cap l \neq \emptyset}}W(l) + \sum_{\substack{l \in \mathcal{L}_{v,w}\\ \bar \pa(v) \cap l = \emptyset}}W(l) \right)  \\
    &= \varepsilon_v + \sum_{w \in \an(v)} \varepsilon_w \left(\sum_{s \in \bar \pa(v)} \sum_{l \in \mathcal{L}^{(s)}_{v,w}(\bar \pa(v))}W(l) + \sum_{\substack{l \in \mathcal{L}_{v,w}\\ l \subseteq \irr(v)}}W(l) \right)  \\ 
    &= \varepsilon_v + \sum_{w \in \an(v)} \varepsilon_w \left(\sum_{s \in \bar \pa(v)} \pi_{s,w}\sum_{l \in \mathcal{L}^{(s)}_{v,s}(\bar \pa(v))}W(l) + \sum_{\substack{l \in \mathcal{L}_{v,w}\\ l \subseteq \irr(v)}}W(l) \right)  \\
    &= \varepsilon_v + \sum_{s \in \bar \pa(v)} \sum_{w \in \an(v)} \pi_{s,w} \varepsilon_w  \sum_{l \in \mathcal{L}^{(s)}_{v,s}(\bar \pa(v))}W(l) + \sum_{w \in \an(v)} \varepsilon_w\sum_{\substack{l \in \mathcal{L}_{v,w}\\ l \subseteq \irr(v)}}W(l).
    \end{aligned} 
 \end{equation}  
Because $\an(s) \subseteq \an(v)$ if $s \in \bar{\pa}(v)\subseteq \an(v)$, we have that
  \begin{equation}\allowdisplaybreaks\begin{aligned}  
    \eqref{eq:yvExpansion} &= \varepsilon_v + \sum_{s \in \bar \pa(v)} Y_s  \sum_{l \in \mathcal{L}^{(s)}_{v,s}(\bar \pa(v))}W(l) + \sum_{w \in \an(v)} \varepsilon_w \sum_{\substack{l \in \mathcal{L}_{v,w}\\ l \subseteq \irr(v)}}W(l)  \\
    &= \sum_{s \in \bar \pa(v)} Y_s  \underbrace{\sum_{l \in \mathcal{L}^{(s)}_{v,s}(\bar \pa(v))}W(l)}_{\bar \beta_{v,s}} + \underbrace{\varepsilon_v + \sum_{w \in \irr(v)} \varepsilon_w \sum_{\substack{l \in \mathcal{L}_{v,w}\\ \bar l \subseteq \irr(v)}}W(l)}_{\bar \varepsilon_v}.  \\
\end{aligned}
\end{equation}
Our claim now follows because the coefficients $\bar\beta_{v,u}$
respect the constraints given by $\bar G$.  Indeed, $\bar \beta_{v,u} = 0 $ if $ u \not \in \bar \pa(v)$. Furthermore, if $\irr(u) \cap \sib(\irr(v)) = \emptyset$, then $\bar \varepsilon_v$ only contains terms which are independent of the terms in $\bar \varepsilon_u$. Thus, $\varepsilon_v \independent \varepsilon_u$ if $u \not \in \bar{\sib}(v)$.  
\end{proof}

Though, as we subsequently show, it is true that given population values BANG returns $\bar G$, Lemma~\ref{lem:bapProjection} does not immediately imply that BANG will discover $\bar G$; it simply implies that the intersection of these models is non-trivial. It must further be shown that for generic parameters (of the full model), the distribution of $Y$ does not lie in any sub-model of $\bar G$.  

Replacing $B$ with $\bar B$, Lemma~\ref{thm:debiasedEst},
Corollary~\ref{thm:suffInd},  Lemma~\ref{thm:neccIndGeneral}, and
Lemma~\ref{thm:neccIndDes} still directly hold in this setting. We
restate these results below for completeness, but note that the proofs
follow in analogy to the previously proved lemmas. Thus, to prove an analogue to Theorem~\ref{thm:mainID}, it remains to show analogues to Lemma~\ref{thm:noMissPruneParents}, Lemma~\ref{thm:neccIndSib}, and Corollary~\ref{thm:noMissPruneParentsFinal}.

\begin{corollary}[Analogue of Lemma~\ref{thm:debiasedEst} and Corollary~\ref{thm:suffInd}]\label{lem:parentBarCertified}
Suppose $Y$ is generated by a linear SEM which corresponds to a acyclic mixed graph $G$. Let $\bar G$ be the projection of $G$ and $\bar B$ be the corresponding direct effects
defined in \eqref{eq:projectionParam}. For a node $v \in V$ and a set $C \subseteq A \subseteq V \setminus v$, suppose:
\begin{enumerate}
    \item $\bar{\pa}(v) \subseteq C \subseteq \bar{\an}(v) \setminus \bar{\sib}(v)$
    \item $A = \bar{\An}(C)$
    \item $D_{A,A} = \bar B_{A,A}$
    \item $S_{\{A, v\}, \{A, v\}} = \Sigma_{\{A, v\}, \{A, v\}}$
\end{enumerate}
Then $\delta_v(C, A, S, D) = \bar{B}_{v, C}$. Furthermore, for every $c \in C$, $\gamma_c(D) \independent \gamma_v(C,S,D)$ and $\E(\gamma_c^{K-1}(D) \gamma_v(C,S,D))$.
\end{corollary}
\begin{proof}
By Lemma~\ref{lem:bapProjection}, the distribution of $Y$ is equivalent to the BAP defined by $\bar G$, $\bar B$ and $\bar \varepsilon$. Thus, the result directly follows by applying Lemma~\ref{thm:debiasedEst} and Corollary~\ref{thm:suffInd} to $\bar G$, $\bar B$ and $\bar \varepsilon$.
\end{proof}

In the following statements, we will at times make statements about sets of nodes in $\bar G$; however, the requirement of generic parameters will always refer to parameters in the model which may have bows defined by $G$. Let $\check B(C\cup\{v\})_{v,C}$ be the marginal direct effect of $C\cup\{v\}$ in $\bar{G}$; i.e., the analogue of $\tilde B$, as defined in \eqref{eq:margDirectEffect}, but using $\bar{B}$ instead of $B$. 

\begin{corollary}[Analogue of Lemma~\ref{thm:neccIndGeneral}]\label{thm:neccIndGeneralMiss}
  Let $v \in V$, and consider any set $C \subseteq A \subseteq V
  \setminus \{v\}$. Suppose $D \in \mathbb{R}^{p \times p}$ with $D_{s,t} \neq 0$ only if $t \in \bar{\an}(s)$. Then, for generic $B$ and error moments, if $\delta_v(C, A, S, D) \neq \check B(C\cup v)_{v,C}$, then $\E(\gamma_c^{K-1}(D)\gamma_v(C, S, D)) \neq 0$ for some $c \in C$.
\end{corollary}

\begin{corollary}[Analogue of Lemma~\ref{thm:neccIndDes}]
\label{thm:neccIndDesMiss}
Consider $v \in V$ and set $C \subseteq V\setminus \{v\}$. Let $D \in \mathbb{R}^{p \times p}$ such that $D_{s,t} \neq 0$ only if $t \in \bar{\an}(s)$.
Suppose $C \not \subseteq \bar{\an}(v)$, but 
$\E(\gamma_c(D)^{K-1}\gamma_v(C, S,D)) = 0$
for all $c \in C$. Then for generic $B$ and error moments,  $C_1 = C\cap \left[\bar{\an}(v)\setminus \bar{\sib}(v)\right]$,
\begin{equation*} \E(\gamma_c(D)^{K-1}\gamma_v(C_1, S ,D)) = 0\end{equation*} for all $c \in C$.
\end{corollary}

\begin{lemma}[Analogue of Lemma~\ref{thm:noMissPruneParents}]\label{thm:noMissPruneParentsMiss}
Suppose $D = H_\mathcal{C}(\bar B)$ for some $H_\mathcal{C} \in \mathcal{D}$ with $\mathcal{C}=(C_s)_{s\in V}$ such that $C_s \subseteq \bar{\an}(s) \setminus \bar{\sib}(s)$ for all $s \in V$.  Let $v \in V$ be such that we have 
$
\E(\gamma_c(D)^{K-1}\gamma_v(D)) = 0$ for all $c \in \pspa(v)$.
If $q \in \big(\bar{\pa(v)}\setminus \pspa(v)\big) \cup \bar{\sib}(v)$, then for generic $B$ and error moments,  $\E \left(\gamma_q(D)^{K-1}\gamma_v(D)\right) \neq 0$.
\end{lemma}

\begin{lemma}[Analogue of Lemma~\ref{thm:neccIndSib}]\label{thm:neccIndSibMiss}
	Consider $v \in V$ and sets $A, C$ such that $C \subseteq A \subseteq V \setminus \{v\}$. Suppose $D = H_\mathcal{C}(\bar B)$ for some $H_\mathcal{C} \in \mathcal{D}$ with $\mathcal{C}=(C_s)_{s\in V}$ such that $C_s \subseteq \bar{\an}(s) \bar{\sib}(s)$ for all $s \in V$. Suppose $u \in C$ and $u \in  \bar{\sib}(v)$, then for generic $B$ and error moments, there exists some $q \in C$ such that $\E\left(\gamma_q(D)^{K-1}\gamma_v(C, \Sigma, D)\right) \neq 0$.
\end{lemma}

\begin{corollary}[Analogue of Corollary~\ref{thm:noMissPruneParentsFinal}]\label{thm:noMissPruneParentsFinalMis}
	Suppose $D = \bar B$. For $v \in V$ and generic $B$ and error moments, suppose $\bar{\pa}(v) \subseteq C \subseteq \bar{\an}(v)\setminus \bar{\sib}(v)$ and $\E(\gamma_c(D)^{K-1}\gamma_v(C, \Sigma,D)) = 0$ for all $c \in C$. If $q \in C \setminus \bar{\pa}(v)$, the for all $c \in C$
	\begin{equation}
	\E(\gamma_q(D)^{K-1}\gamma_v(C \setminus \{q\}, \Sigma,D)) = 0.
	\end{equation}
	If $q \in \pa(v)$, then there exists some $c \in C$ such that
	\begin{equation}
	    \E(\gamma_q(D)^{K-1}\gamma_v(C \setminus \{q\}, \Sigma,D)) \neq 0.
	\end{equation}
\end{corollary}

Lemma~\ref{thm:noMissPruneParentsMiss} and \ref{thm:neccIndSibMiss} require two intermediate results which we state here for completeness.

\begin{lemma}\label{lem:nonZeroZeta}
Let $D = H_\mathcal{C}(\bar B)$ for some $H_\mathcal{C} \in \mathcal{D}$ with $\mathcal{C} = (C_s)_{s \in V}$ such that $C_s \subseteq \bar{\an}(s) \setminus \bar{\sib}(s)$ for all $s \in V$. Suppose there exists some path $l = s_1 \rightarrow s_2 \rightarrow \ldots s_{|l|-1} \rightarrow v$ such that $ l \cap C_v = \emptyset$. Further suppose that $u \in \sib(s_1) \setminus l$ and $u \not \in C_v$. Then for generic parameters
\begin{equation}
    \E\left(\gamma_v(D)^{K-1} \gamma_{s_1}(D)\right) \neq 0 \qquad \text{ and } \qquad     \E\left(\gamma_v(D)^{K-1} \gamma_u(D)\right) \neq 0,
\end{equation}
so that $s_1$ and $u$ will not be pruned from $\widehat{\sib}(v)$ by Alg~\ref{alg:checkInd}. Furthermore, for any $C$ such that $u \in C$, for generic parameters there exists some $c \in C$ such that
\begin{equation}
\E\left(\gamma_c(D)^{K-1}\gamma_v(C, \Sigma, D)\right) \neq 0,
\end{equation}
so that $u$ will not be certified into $\widehat \pa(v)$.
\end{lemma}

\begin{lemma}\label{lem:cousins}
Suppose $\irr(u) \cap \sib(\irr(v)) \neq \emptyset$ and $D = H_{\mathcal{C}}(\bar{B})$ for some $\mathcal{C} = (C_s)_{s \in V}$ such that $C_s \subseteq \bar{\an}(s) \setminus \bar{\sib}(s)$ for all $s \in V$. Then, for generic parameters
\begin{equation}
     \E\left(\gamma_v(D)^{K-1} \gamma_u(D)\right) \neq 0
\end{equation}
so that $u$ will not be pruned from $\widehat{\sib}(v)$ by Alg~\ref{alg:checkInd}. Furthermore, for any $C \subseteq V \setminus v$ such that $u \in C$, for generic parameters, there exists some $c \in C$ such that
\begin{equation}
\E\left(\gamma_c(D)^{K-1}\gamma_v(C, \Sigma, D)\right) \neq 0,
\end{equation}
so that $u$ will not be certified into $\widehat \pa(v)$.
\end{lemma}

\begin{theorem}\label{lem:consistentMisspecified}
Suppose $Y$ is generated under the linear SEM given by an acyclic directed mixed graph $G$ with BAP projection $\bar G$ as defined by \eqref{eq:sibProjection} and \eqref{eq:paProjection}. Then for generic choices of $B$ and error moments, BANG will output $\hat G = \bar G$ when given population values of $Y$.
\end{theorem}
\begin{proof}
The proof exactly mirrors the proof of Theorem~\ref{thm:mainID}, but using the
lemmas
developed for the misspecified case.
\end{proof}

\section{Numerical results}\label{sec:numericalResults}
We consider two implementations of BANG\footnote{Available at \url{https://github.com/ysamwang/ngBap}}: one which uses empirical
likelihood to test whether moments are zero or non-zero and another
which uses dHSIC~\citep{pfister2018independence} to test whether
certain variables are independent or not. We compare these
implementations against ParcelLiNGAM~\citep{tashiro2014parcelingam} as
well as two methods for Gaussian data---FCI+
\citep{claassen2013learning} with Gaussian conditional independence
tests (i.e. vanishing partial correlations) and Greedy BAP Search
(GBS) \citep{nowzohour2015structure}. For ParcelLiNGAM we use the
Matlab implementation available from the author's
website\footnote{\url{https://sites.google.com/site/sshimizu06/Plingamcode}};
for FCI+, we use the \texttt{R} package \texttt{pcalg}
\citep{kalisch2012pcalg} and for GBS we use the \texttt{R} package
\texttt{greedyBaps} \citep{nowzohour2017greedy}.

Our experiments consider structure learning in two settings: ancestral graphs and
BAPs.  In each case, we simulate errors from gamma, log-normal, and uniform
distributions. In addition, we include a setting with errors drawn
from $T_{13}$ as a counter example to show how performance can suffer
when the errors are close to Gaussian. Finally, we show that when
applied to ecology data the BANG method recovers a model close to the
ground truth.


\subsection{Comparison with ParcelLiNGAM}
ParcelLiNGAM is designed to discover ancestral relationships, not graph structure. Also, as shown in Section~\ref{sec:ancestral} it is sound and complete for ancestral graphs, but not non-ancestral graphs. Thus, to give ParcelLiNGAM the most favorable comparison, we only consider ancestral graphs, and to compare performance, we measure the accuracy in identifying ancestral relationships; i.e., for each $(u, v)$, is $u \in \an(v)$, $u \not \in \an(v)$. The accuracy is defined as $(\text{True Positives} + \text{True Negatives}) / \text{Total Cases}$.

We let $p = 6$ and consider three settings with varying levels of sparsity with $d$ directed edges and $b$ bidirected edges: \emph{sparse} ($d = p/2$, $b = p/2$), \emph{medium} ($d = p$, $b = p$), and \emph{dense} ($d = 3p/2$, $b = p$). We let $n = 500, 1000, 1500$ since the BANG dHSIC and ParcelLiNGAM procedures become computationally infeasible when $n > 2000$.

To generate a random ancestral graphs, we first select $d$ directed edges uniformly from the set $\{(i,j) :  i < j\}$, and then select $b$ bidirected edges uniformly from the set $\{(i,j) :  i \not \in \an(j) \text{ and } j \not \in \an(i)\}$ when generating ancestral graphs if the set of possible bidirected edges is less than $b$, we select as many as possible. We then draw the directed edgeweights uniformly from $\pm(.6, 1)$. Note that the graphs may not be maximal.  

For the idiosyncratic errors, we first form the covariance $\Omega$ by setting $\omega_{ii} = 1$ for all $i \in V$, drawing the $\omega_{ij} = \omega_{ji}$ uniformly from $\pm (.3, .5)$ for all $(i,j) \in E_\leftrightarrow$, and setting all other elements to $0$. If $\Omega$ is not positive definite, we repeatedly scale the off-diagonal elements of $\Omega$ by $.97$ until the minimum eigenvalue is greater than $.01$. We consider five settings where the errors marginally follow uniform, gamma, lognormal, and $T$ with $d.f. = 13$. We center and scale the errors so that (in expectation) they marginally have mean 0 and variance 1. We draw the gamma errors using \texttt{lcmix} \citep{lcmix}, uniform using \texttt{MultiRNG} \citep{multiRNG}, $T_{13}$ using \texttt{mvtnorm} \citep{mvtnorm}, and the lognormal errors are formed by exponentiating multivariate normal draws with covariance $\Omega$ which are subsequently scaled and centered so that each element has (in expectation) mean $0$ and variance $1$.


We then set $Y^{(i)} = (I-B)^{-1}\epsilon^{(i)}$. Finally, because the output generally depends on the ordering of the variables in the data matrix, we also randomly permute the labeling of the variables so that $1, \ldots, p$ is generally not a valid ordering. This entire process is repeated 200 times for each setting.

The results are given in Table~\ref{tab:plComp}. For BANG with dHSIC or empirical likelihood, the value is bolded if the accuracy is significantly larger than the ParcelLiNGAM accuracy (measured using a two-sample paired T-test with with $\alpha = .05$). For ParcelLiNGAM, the value is bolded if the accuracy is significantly larger than both of the BANG implementations. In general, we see that ParcelLiNGAM tends to do better when the graph is dense; this is particularly drastic when the errors are uniform, but less so for the other error types. Under the ``medium'' and ``sparse'' graph regimes, BANG tends to outperform ParcelLiNGAM, particularly the dHSIC implementation. In the sparse regime, the difference is quite drastic for all error settings. In general, when the errors are $T$ and not too far from Gaussian, all three methods suffer.

\begin{table}[t]
    \centering
\caption{\label{tab:plComp}The average accuracy of each procedure in identifying ancestral relationships across 200 replications. DH: BANG using dHSIC independence tests, EL: BANG using empirical likelihood moment tests, PL: ParcelLiNGAM. All procedures use independence tests with $\alpha = .01$.}
\small
\begin{tabular}{l|l|c|c|c|c}
\hline
Regime & Errors & n & DH & EL & PL\\
\hline
 &  & 500 & 0.871 & 0.814 & \textbf{0.905}\\
\cline{3-6}
 &  & 1000 & 0.914 & 0.840 & \textbf{0.937}\\
\cline{3-6}
 & \multirow{-3}{*}{\raggedright\arraybackslash Gamma} & 1500 & 0.923 & 0.885 & \textbf{0.950}\\
\cline{2-6}
 &  & 500 & 0.902 & 0.790 & \textbf{0.925}\\
\cline{3-6}
 &  & 1000 & 0.914 & 0.811 & \textbf{0.934}\\
\cline{3-6}
 & \multirow{-3}{*}{\raggedright\arraybackslash Lognormal} & 1500 & 0.895 & 0.820 & \textbf{0.931}\\
\cline{2-6}
 &  & 500 & 0.524 & 0.526 & 0.504\\
\cline{3-6}
 &  & 1000 & 0.517 & 0.513 & 0.521\\
\cline{3-6}
 & \multirow{-3}{*}{\raggedright\arraybackslash T} & 1500 & 0.495 & 0.510 & 0.502\\
\cline{2-6}
 &  & 500 & 0.683 & 0.625 & \textbf{0.794}\\
\cline{3-6}
 &  & 1000 & 0.742 & 0.699 & \textbf{0.823}\\
\cline{3-6}
\multirow{-12}{*}{\raggedright\arraybackslash Dense} & \multirow{-3}{*}{\raggedright\arraybackslash Uniform} & 1500 & 0.733 & 0.687 & \textbf{0.851}\\
\cline{1-6}
 &  & 500 & \textbf{0.903} & 0.852 & 0.866\\
\cline{3-6}
 &  & 1000 & \textbf{0.936} & 0.888 & 0.911\\
\cline{3-6}
 & \multirow{-3}{*}{\raggedright\arraybackslash Gamma} & 1500 & \textbf{0.936} & 0.900 & 0.922\\
\cline{2-6}
 &  & 500 & \textbf{0.915} & 0.823 & 0.893\\
\cline{3-6}
 &  & 1000 & \textbf{0.931} & 0.845 & 0.912\\
\cline{3-6}
 & \multirow{-3}{*}{\raggedright\arraybackslash Lognormal} & 1500 & \textbf{0.951} & 0.870 & 0.921\\
\cline{2-6}
 &  & 500 & \textbf{0.555} & \textbf{0.545} & 0.493\\
\cline{3-6}
 &  & 1000 & \textbf{0.537} & \textbf{0.560} & 0.492\\
\cline{3-6}
 & \multirow{-3}{*}{\raggedright\arraybackslash T} & 1500 & 0.525 & \textbf{0.551} & 0.510\\
\cline{2-6}
 &  & 500 & 0.726 & 0.704 & 0.722\\
\cline{3-6}
 &  & 1000 & 0.753 & 0.729 & 0.756\\
\cline{3-6}
\multirow{-12}{*}{\raggedright\arraybackslash Medium} & \multirow{-3}{*}{\raggedright\arraybackslash Uniform} & 1500 & 0.784 & 0.747 & 0.789\\
\cline{1-6}
 &  & 500 & \textbf{0.953} & \textbf{0.931} & 0.688\\
\cline{3-6}
 &  & 1000 & \textbf{0.974} & \textbf{0.951} & 0.706\\
\cline{3-6}
 & \multirow{-3}{*}{\raggedright\arraybackslash Gamma} & 1500 & \textbf{0.982} & \textbf{0.970} & 0.713\\
\cline{2-6}
 &  & 500 & \textbf{0.962} & \textbf{0.905} & 0.698\\
\cline{3-6}
 &  & 1000 & \textbf{0.980} & \textbf{0.918} & 0.716\\
\cline{3-6}
 & \multirow{-3}{*}{\raggedright\arraybackslash Lognormal} & 1500 & \textbf{0.984} & \textbf{0.934} & 0.720\\
\cline{2-6}
 &  & 500 & \textbf{0.695} & \textbf{0.680} & 0.503\\
\cline{3-6}
 &  & 1000 & \textbf{0.686} & \textbf{0.700} & 0.500\\
\cline{3-6}
 & \multirow{-3}{*}{\raggedright\arraybackslash T} & 1500 & \textbf{0.664} & \textbf{0.695} & 0.496\\
\cline{2-6}
 &  & 500 & \textbf{0.862} & \textbf{0.847} & 0.632\\
\cline{3-6}
 &  & 1000 & \textbf{0.883} & \textbf{0.892} & 0.667\\
\cline{3-6}
\multirow{-12}{*}{\raggedright\arraybackslash Sparse} & \multirow{-3}{*}{\raggedright\arraybackslash Uniform} & 1500 & \textbf{0.891} & \textbf{0.909} & 0.676\\
\hline
\end{tabular}
\end{table}

\subsection{Comparison with FCI+ and Greedy BAP search}
We now compare BANG against FCI+ and GBS, which both identify an equivalence class of graphs. We compare FCI+, GBS, and BANG on ancestral graphs generated as described in the previous section. Furthermore, we compare GBS and BANG on possibly non-ancestral BAPs. The graphs are generated by the same procedure, except when generating BAPs we do not enforce the ancestral condition and instead draw bidirected edges from the set $\{(i,j) :  i \not \in \pa(j) \text{ and } j \not \in \pa(i)\}$. We let $n = 500, 1000, 2000, 10000, 25000, 50000$, but for computational reasons we only use the dHSIC implementation of BANG for $n \leq 2000$.

To compare performance, we record the proportion of the times each method recovers the equivalence class corresponding to the true graph (or a graph in the equivalence class of the true graph). For BANG, we also record the proportion of times it recovers the true exact graph. For BANG and FCI+, we set the nominal level of each hypothesis test performed to $\alpha = .05, .01$. For GBS, we allow 100 random restarts, the same number used in the simulations by \citet{nowzohour2017greedy}.  For BANG, we set $K = 3$ for the gamma and lognormal errors (since they are skewed) and let $K = 4$ for the uniform and $T_{13}$ (since they are symmetric). In the simulations with ancestral graphs, to check whether the equivalence class is recovered, we take the graph estimated by BANG and GBS and project it into a PAG. In the simulations with BAPs, since there is no known graphical characterization for the equivalence class of BAPs, we follow \citet{nowzohour2017greedy} and say that the estimated and true graph are in the same equivalence class if the score of the estimated graph is within $10^{-10}$ of score of the true graph. We repeat the experiment 500 times for each simulation setting. The results for ancestral graphs are shown in Figure~\ref{fig:magSimsFlip} and the results for BAPs are shown in Figure~\ref{fig:bapSimsFlip}.

We make several observations about the results. First, the performance of FCI+ and GBS does not seem to change drastically across the different error distributions. BANG does best when the errors are gamma which we posit occurs because we only estimate moments up to degree 3 (when compared to the uniform which requires estimation of $4$th moments) and of the gamma's relatively lighter tails (when compared to the lognormal). As expected, because the errors are too close to a Gaussian, BANG performs very poorly under the $T_{13}$ setting. 

We also see that GBS and FCI+ do well when the graph is sparse, and GBS tends to outperform both FCI and BANG. However, the performance of GBS and FCI+ begins to suffer when the true graph becomes more dense. Although the performance of BANG also decreases, it decreases less in comparison, and BANG outperforms the other methods in the medium and dense setting. There are even  many settings where BANG recovers the true graph more often than GBS and FCI+ recover the equivalence class. 

When comparing the two BANG implementations, we see that dHSIC tends to outperform empirical likelihood when the errors are gamma or lognormal, but empirical likelihood slightly outperforms the dHSIC implementation when the errors are uniform. This may be due to the fact that we use the ``gamma approximation'' for the null distribution of the dHSIC test statistic because using a permutation test is too computationally expensive.     

In the appendix, we also consider settings where we restrict the elements of $B$ and $\Delta$ to be positive which results in fewer ``faithfulness'' violations. In this setting, the performance of BANG tends to improve (particularly in the gamma and lognormal settings) while the performance of GBS and FCI+ do not change substantially. This suggests that BANG is much more dependant on ``strong faithfulness'' than the other methods. 


\begin{figure}[tb]
	\centering

    \includegraphics[scale=.82]{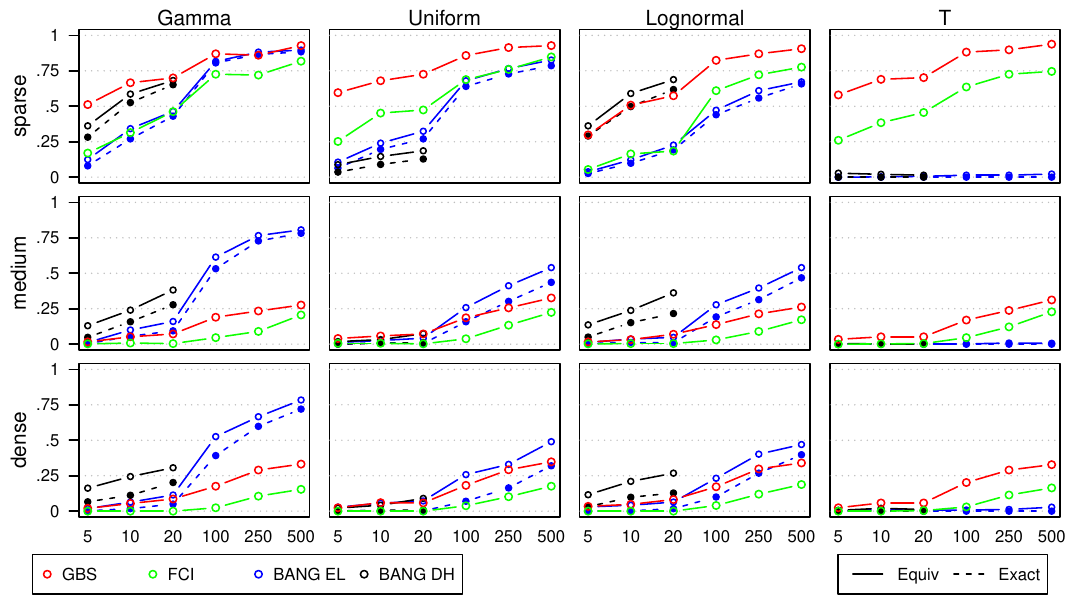}
    \caption{\label{fig:magSimsFlip}The performance of each method on random ancestral graphs with $p=6$ across 500 replications. The solid lines indicate the proportion of times the estimated graph corresponds to the PAG of the true graph; the dotted lines indicates the proportion of times BANG identifies the exact graph. The horizontal axis shows sample size in hundreds.}
\end{figure}

\begin{figure}[tb]
	\centering
    \includegraphics[scale=.82]{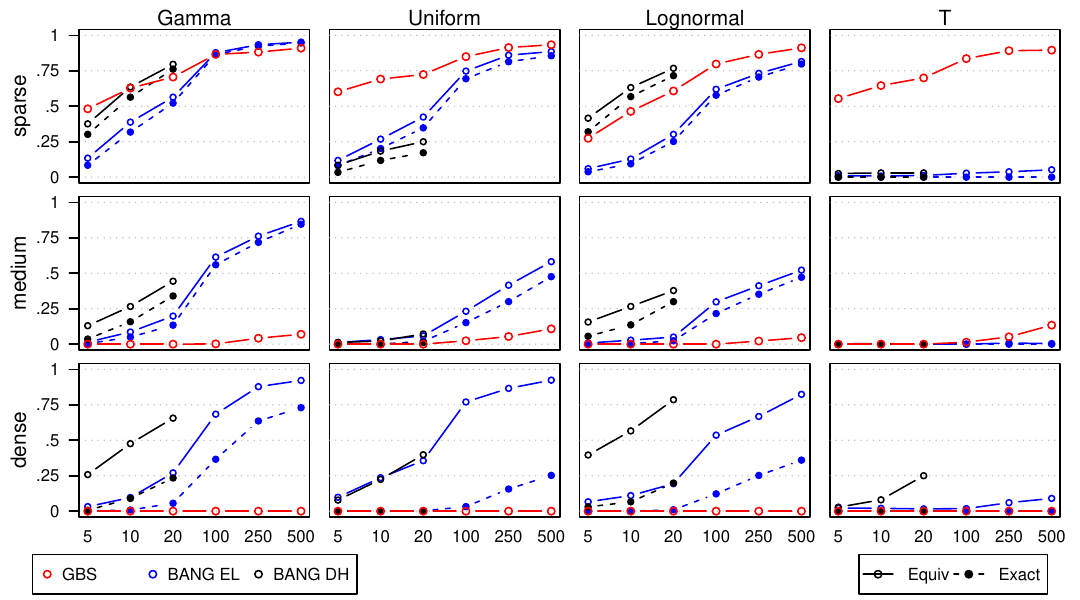}
    \caption{\label{fig:bapSimsFlip}The performance of each method on random bow-free acyclic graphs with $p=6$ across 500 replications. The solid lines indicate the proportion of times the estimated graph is in the equivalence class of the truth; the dotted lines indicates the proportion of times BANG identifies the exact graph. The horizontal axis shows sample size in hundreds.}
\end{figure}

\subsection{Data example}

\begin{figure}[htb]
	\centering
	\includegraphics[scale = .25]{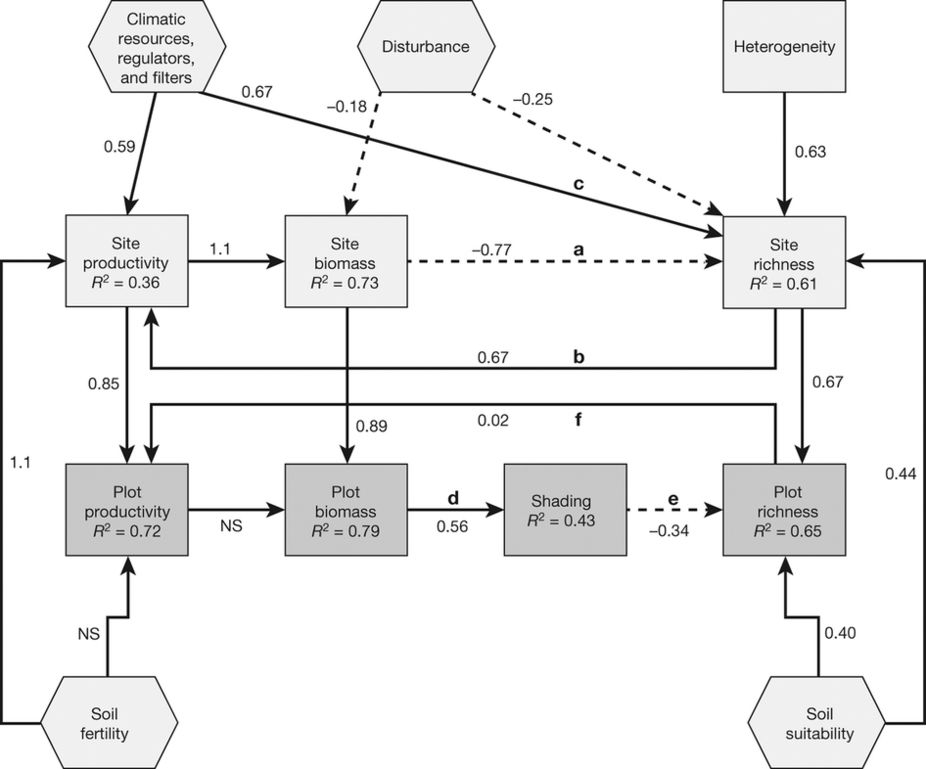}
	\caption[Data example: full model from Grace et al. (2016)]{\label{fig:grace2016orig}Full model from \citet{grace2016integrative}.}
\end{figure}

\citet{grace2016integrative} use a structural equation model to examine the relationships between land productivity and the richness of plant diversity. They consider measurements taken at 
1126 plots which are locations across 39 different sites. A graphical model from the original paper is in Figure~\ref{fig:grace2016orig}. We consider their plot level model which includes: plot productivity, plot biomass, plot shade, plot richness, plot soil suitability, site richness, site biomass, site productivity.



Beginning with the graphical model shown in Figure~\ref{fig:grace2016orig}, we first remove any edges which they found were not significant (denoted by NS in Figure~\ref{fig:grace2016orig}). Note that this removes the cycle in the plot specific measurements, but there is still a cycle between site productivity, biomass and richness. The nodes for climate, disturbance and suitability, actually represent multiple variables which are used in the SEMs. For climate and disturbance, the separate measures are both highly correlated, so it seems reasonable to use bidirected edges between site productivity, biomass and richness when marginalizing out those variables, despite the fact that they are actually separate measures. To keep the bow-free assumption, we do not include the directed edges between site productivity, site biomass and site richness. This results in ancestral relationships in the full model which are not otherwise captured in the marginalized model. Thus, we add directed edges from site productivity to plot biomass and plot richness; from site biomass to plot productivity and plot richness; from site richness to plot productivity and plot biomass. For suitability, there is both a site suitability, which is a parent of site richness, and a plot suitability which is a parent of plot richness. Although there is no explicit specification in their SEM of how site suitability relates into plot suitability, it seems reasonable to assume that site suitability has a direct effect on plot suitability, as is the case for all other site vs plot measures. Thus, we include a bidirected edge between plot suitability and site richness. This results in the BAP shown in Figure~\ref{fig:targetBAP}. We consider this model the ground truth.

For BANG, we use empirical likelihood and selected the nominal test level, $.01$, so that there are roughly the same number of directed edges in the estimated and ground truth graphs, 11 and 13 respectively. The discovered graph is shown in Figure~\ref{fig:discovered}. Of the 28 pairs of nodes, BANG correctly identifies the correct relation ($\rightarrow, \leftarrow, \leftrightarrow$ or no edge) for 16 of the pairs. Naively, letting the probability of guessing each relationship to be $1/4$, this results in a binomial probability of $P(X \geq 16) = .00029$. This probability does not account for the dependency between edges since there is an acyclic restriction, but it suggests that BANG is discovering reasonable structure. There are 7 bidirected edges in the estimated graph compared to 4 in the ground truth model. This behavior is somewhat expected since there is still likely to be uncontrolled confounding which is either not actually fully accounted for in the ground truth model or direct causes which cannot be fully explained by a linear relationship. For comparison, we also use the GBS procedure with 500 random restarts. In Figure~\ref{fig:gbsRes}, we plot the resulting score against the number of correct edges for each of the 500 runs. There seems to be a positive association between the score and the correct number of edges. Although one initialization resulted in a graph with 16 correct edges it did not have the highest score, and each of the resulting estimated graphs with maximum score (up to optimization error) only has 12 correct edges.

\begin{figure}[htb]
	\centering
	\begin{subfigure}[t]{1\textwidth}\centering
	\begin{tikzpicture}[->,>=triangle 45,shorten >=1pt,
auto,
main node/.style={rounded corners,inner sep=0pt,fill=gray!20,draw,font=\sffamily,
	minimum width = 1.5cm, minimum height = .5cm, scale=0.95}]

\node[main node] (Pprod) {Pl Prod};
\node[main node] (Pbio) [right = .95cm of Pprod]  {Pl Bio};
\node[main node] (Shade) [right = .95cm of Pbio]  {Pl Shade};
\node[main node] (Prich) [right = .95cm of Shade]  {Pl Rich};
\node[main node] (Sprod) [above = .95cm of Pprod]  {St Prod};
\node[main node] (Sbio) [above = .95cm of Pbio]  {St Bio};
\node[main node] (Srich) [above = .95cm of Prich]  {St Rich};
\node[main node] (Suit) [right = .95cm of Prich]  {Pl Suit};

\path[color=black!20!blue,style={->}]
(Sprod) edge node {} (Pprod)
(Sprod) edge node {} (Pbio)
(Sprod) edge node {} (Prich)
(Sbio) edge node {} (Prich)
(Sbio) edge node {} (Pbio)
(Sbio) edge node {} (Pprod)
(Srich) edge node {} (Pprod)
(Srich) edge node {} (Pbio)

(Srich) edge node {} (Prich)

(Pbio) edge node {} (Shade)
(Shade) edge node {} (Prich)
(Shade) edge node {} (Prich)
(Prich) edge[bend left = 20] node {} (Pprod)
(Suit) edge node {} (Prich);

\path[color=black!20!red,style={<->}]
(Suit) edge node {} (Srich)
(Sprod) edge[bend left = 15] node {} (Srich)
(Sbio) edge node {} (Srich)
(Sprod) edge node {} (Sbio);

\end{tikzpicture}
\caption[Data example: BAP respresentation from Grace et al. (2016)]{\label{fig:targetBAP}BAP representation of plot specific model from \citet{grace2016integrative}.}
\end{subfigure}
~
\begin{subfigure}[t]{1\textwidth}\centering
	\begin{tikzpicture}[->,>=triangle 45,shorten >=1pt,
	auto,
	main node/.style={rounded corners,inner sep=0pt,fill=gray!20,draw,font=\sffamily,
	minimum width = 1.5cm, minimum height = .5cm, scale=0.95}]

	\node[main node] (Pprod) {Pl Prod};
	\node[main node] (Pbio) [right = .95cm of Pprod]  {Pl Bio};
	\node[main node] (Shade) [right = .95cm of Pbio]  {Pl Shade};
	\node[main node] (Prich) [right = .95cm of Shade]  {Pl Rich};
	\node[main node] (Sprod) [above = .95cm of Pprod]  {St Prod};
	\node[main node] (Sbio) [above = .95cm of Pbio]  {St Bio};
	\node[main node] (Srich) [above = .95cm of Prich]  {St Rich};
	\node[main node] (Suit) [right = .95cm of Prich]  {Pl Suit};

	\path[color=black!20!blue,style={->}]
	(Suit) edge node {} (Srich)
	(Pprod) edge node {} (Srich)
	(Pbio) edge node {} (Shade)
	(Sbio) edge node {} (Pbio)
	(Sbio) edge node {} (Prich)
	(Srich) edge node {} (Prich)
	(Sprod) edge node {} (Pprod)
	(Sprod) edge node {} (Prich)
	(Sprod) edge node {} (Shade)
	(Shade) edge node {} (Prich)
	(Pbio) edge node {} (Pprod)
	;

	\path[color=black!20!red,style={<->}]
	(Suit) edge[bend left = 20] node {} (Pprod)
	(Suit) edge[bend right = 5] node {} (Sbio)
	(Suit) edge node {} (Sprod)
	(Srich) edge[bend right = 20] node {} (Sprod)
	(Srich) edge node {} (Sbio)
	(Sbio) edge node {} (Pprod)
	(Sprod) edge node {} (Sbio)
	;

	\end{tikzpicture}
		\caption[Data example: model discovered by BANG]{\label{fig:discovered}Discovered model (BANG).}
\end{subfigure}
\end{figure}

\begin{figure}[htb]
	\centering
	\includegraphics[scale=.45]{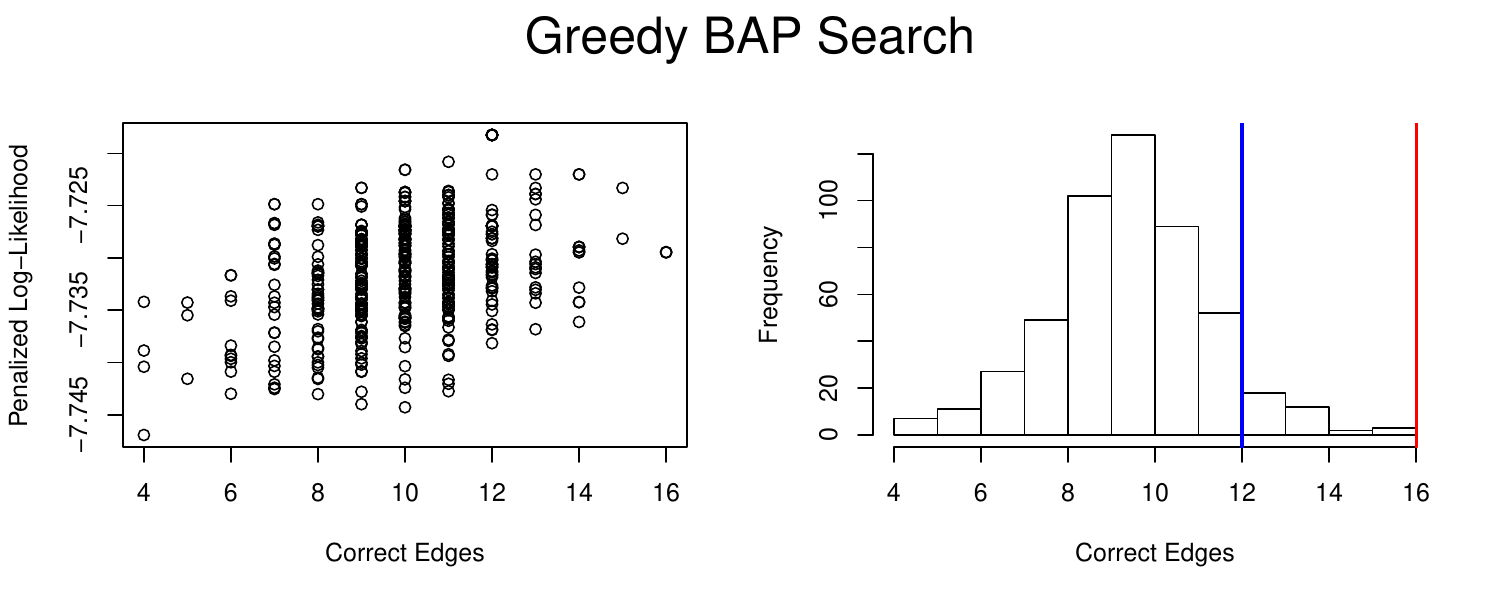}
	\caption{\label{fig:gbsRes}The left panel shows the score and number of correct edges for each of the 500 random initializations of GBS on the \citet{grace2016integrative} ecology data. The estimated graphs with the highest score has 12 correct edges. The right panel shows a histogram of the number of correct edges from the 500 random restarts. The blue line represents the graph with the highest score and the red line represents the number of correct edges for the BANG procedure.}
\end{figure}

\section{Discussion}
Borrowing intuition from the LiNGAM line of work \citep{shimizu2006lingam}, we show that when a SEM corresponds to a BAP and the errors are non-Gaussian, one can identify the exact causal structure from observational data. We propose the BANG algorithm and show that it consistently identifies the graph. This extends previous work on BAPs by \citet{nowzohour2015structure} by identifying an exact graph rather than a larger equivalence class. In addition, this extends the work on non-Gaussian SEMs with confounding by not requiring advance knowledge of the number of latent variables, not requiring the effect of confounders to be linear, or provably recovering a larger class of graphs. Finally, we also show that in the presence of bows, our proposed procedure is ``conservative'' in certifying causal relationships and explicitly characterize the returned output in the population setting.

Since the number of independence tests considered is a polynomial of the number of variables, under additional assumptions, future work might investigate conditions under which the graph might also be consistently recovered in a sparse high dimensional setting where the number of variables is larger than the number of samples. Theoretical results may be straightforward; however, considering the results in Section~\ref{sec:numericalResults} where very large sample sizes are needed for recovery with high probability, this may require significant methodological improvements. One such improvement is a pre-screening procedure. \citet{loh2014inverse} show for DAGs, even with non-Gaussian errors, the precision matrix encodes causal structure. A similar statement can be shown for BAPs, where a non-zero entry in the precision implies that two nodes are in the same \emph{mixed component}---roughly a set of nodes which are connected by bidirected edges plus the parents of those nodes; see \citet{tian2005identify, foygel2012halftrek} for a formal definition. Thus, starting with a sparse estimate of the precision could reduce the search space and improve empirical performance.

\FloatBarrier

\section*{Acknowledgements}
This project has received funding from the European Research Council
(ERC) under the European Union’s Horizon 2020 research and innovation
programme (grant agreement No 883818) as well as from the
U.S. National Science Foundation under Grant No.~DMS 1712535. The
authors also thank Thomas S. Richardson for helpful comments on an
early copy of the manuscript.

\bibliography{bang_arxiv}

\newpage

\appendix

\section{Additional simulation results}

\FloatBarrier

\subsection{Positive Parameters}
In Figures~\ref{fig:magSimsPlus} and \ref{fig:bapSimsPlus}, we generate random graphs and data using the same procedure described in Section~\ref{sec:numericalResults}; however we sample the elements of $B$ from $(.6, 1)$ (instead of $\pm (.6, 1)$) and the off diagonal elements of $\Omega $ from $(.3, .5)$ (instead of $\pm(.3, .5)$. This results in fewer faithfulness violations. BANG seems to improve quite a bit in the gamma and lognormal setting, but FCI and GBS do not improve substantially. Figure~\ref{fig:magSimsPlus} shows results when the random graphs are ancestral. Figure~\ref{fig:bapSimsPlus} shows results when the random graphs are bow-free, but may not be ancestral.

\begin{figure}[htb]
	\centering

    \includegraphics[scale=.8]{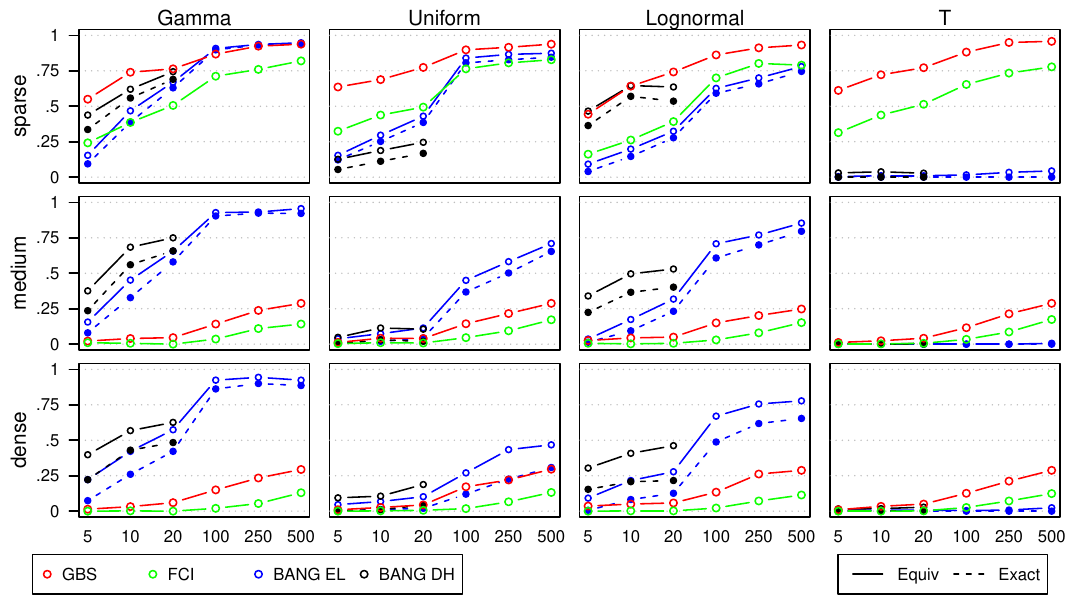}
    \caption{\label{fig:magSimsPlus}The performance across 500 replications of each method on random ancestral graphs with $p=6$ with strictly positive parameters. The solid lines indicate the proportion of times the estimated graph is in the equivalence class of the truth; the dotted lines indicates the proportion of times BANG identifies the exact graph. The horizontal axis shows sample size in hundreds.}
\end{figure}

\begin{figure}[htb]
	\centering
    \includegraphics[scale=.8]{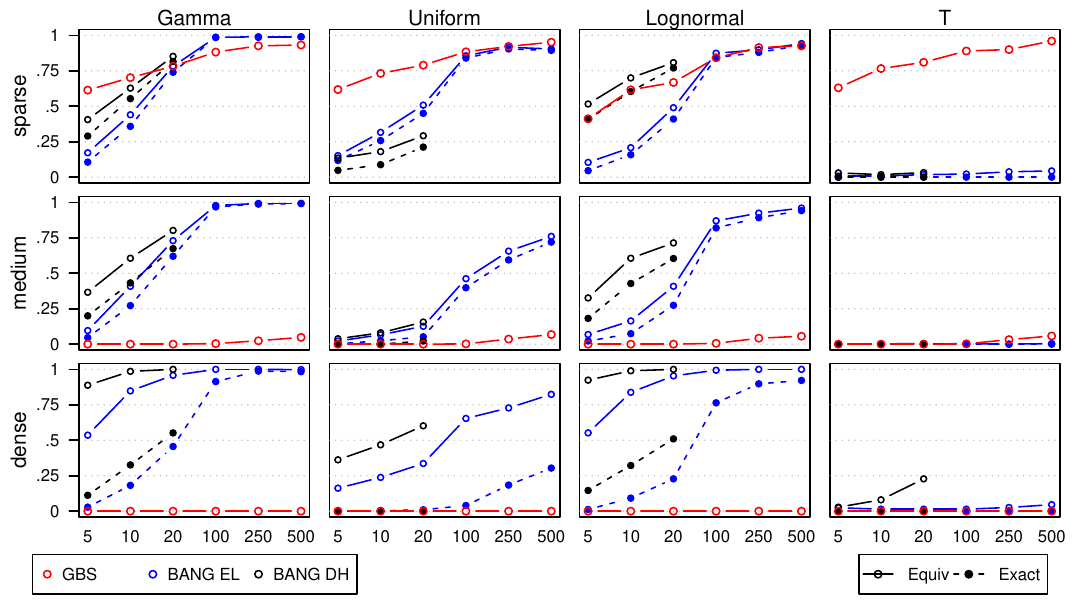}
    \caption{\label{fig:bapSimsPlus}The performance across 500 replications of each method on random bow-free graphs with $p=6$ with strictly positive parameters. The solid lines indicate the proportion of times the estimated graph is in the equivalence class of the truth; the dotted lines indicates the proportion of times BANG identifies the exact graph. The horizontal axis shows sample size in hundreds}
\end{figure}

\newpage 

\clearpage

\section{Proofs from Section~\lowercase{\ref{sec:ancestral}}}

We first restate two lemmas from \citet{tashiro2014parcelingam} which imply that the sink and source certification procedures used by ParcelLiNGAM are sound. Strictly speaking, the lemmas require linear confounders, but the results trivially generalize to our setting in which the effects of confounders are represented via correlated errors.

Also, as stated, the lemmas do not require faithfulness because they consider the full latent variable LiNGAM model where $\beta_{v,p} \neq 0$ for all $p \in \pa(v)$. Because the graph is acyclic, this implies that every non-source have at least one parent with a non-zero total effect. However, this does not hold when considering sub-models induced by marginalizing out subsets of the variables; i.e., $\beta_{v,p} \neq 0$ for all $p \in \pa(v)$ in the full model does not imply that all parents (or ancestors) in a sub-model induced by marginalization have non-zero total effect on their children (or descendants). A simple example is given in Figure~\ref{fig:plEx}. Thus, to show that ParcelLiNGAM is sound and complete, we require that the marginal direct effect of an ancestor on its descendants does not disappear for any model induced by marginalization. This is similar to the notion of parental faithfulness required in \citet{wang2020hdng} and is true for generic linear coefficients.  Hence, in the proofs of Lemma~\ref{thm:parceLCorrect} and \ref{thm:parcelNotComplete} we assume generic model parameters, and then apply Lemmas~\ref{thm:tashiroLemma1} and \ref{thm:tashiroLemma2} assuming that they hold for all sub-sets of the variables as well.

\begin{figure}[htb]
    \centering
\begin{tikzpicture}[->,>=triangle 45,shorten >=1pt,
		auto,
		main node/.style={ellipse,inner sep=0pt,fill=gray!20,draw,font=\sffamily,
			minimum width = .5cm, minimum height = .5cm}]
		
		\node[main node] (1) {1};
		\node[main node] (2) [right= 2cm of 1]  {2};
		\node[main node] (3) [right = 2cm of 2]  {3};
		
		\path[color=black!20!blue,style={->}]
		(1) edge node {$\beta_{21} = 1$} (2)
		(2) edge node {$\beta_{32} = -1$} (3)
		(1) edge[bend left = 40] node {$\beta_{31} = 1$} (3)
		;
	\end{tikzpicture}
	\caption{\label{fig:plEx}When considering the entire graph with nodes $\{1,2,3\}$, the lemmas holds since every non-source has at least one parent with a non-zero total effect. However, when only considering the sub-graph induced by $\{1, 3\}$, the marginal direct effect of $1$ on $3$ is $0$ so $3$ is a source in that sub-model despite the fact that it is a sink in the full model.}
\end{figure}
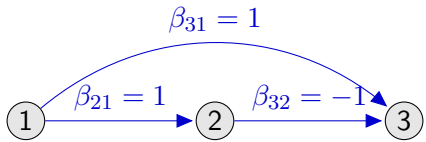

\begin{lemma}{{(Lemma 1 in \citet{tashiro2014parcelingam})}}\label{thm:tashiroLemma1}
	Assume all model assumptions of the latent variable LiNGAM are met. Denote by $r_i^{(j)}$ the population residuals when $Y_i$ are regressed onto $Y_j$. Then a variable $Y_j$ is exogenous in the sense that is has no parent observed variable or latent confounder if and only if $Y_j$ is independent of its residuals $r_i^{(j)}$ for all $i \neq j$.
\end{lemma}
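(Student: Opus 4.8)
My plan is to pass to the reduced form of the latent-variable LiNGAM and reduce both implications to the Darmois--Skitovitch (D--S) theorem. Write $Y = Mu$, where $u$ collects the mutually independent non-Gaussian sources --- the idiosyncratic errors $e_i$ and the latent confounders $f_q$ --- and $M$ is the mixing matrix built from $(I-B)^{-1}$ together with the confounder loadings. By the model assumptions every error enters its own variable, so $M_{j,e_j}\neq 0$ for all $j$, and the statement ``$Y_j$ is exogenous'' translates into ``the $j$th row of $M$ has a single nonzero entry, $M_{j,e_j}$'', i.e.\ $Y_j$ is a scalar multiple of the single source $e_j$. I would also record two elementary structural facts, both immediate from the path structure: the coefficient of $e_j$ in $Y_i$ is $M_{i,e_j}=M_{j,e_j}\,t_{ji}$, where $t_{ji}$ is the total effect of $j$ on $i$ (because $e_j$ enters the system only through $Y_j$); and the coefficient of a confounder $f$ in $Y_k$ is $M_{kf}=t_{jk}M_{jf}+w$, where $w$ is the sum of weights of the directed $f\rightsquigarrow k$ paths that avoid $j$ (obtained by decomposing the remaining paths at their passage through $j$, which is unique by acyclicity).

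For the ``only if'' direction, assume $Y_j=M_{j,e_j}e_j$. The population regression coefficient of $Y_i$ on $Y_j$ is $\mathrm{Cov}(Y_i,Y_j)/\mathrm{Var}(Y_j)=M_{i,e_j}/M_{j,e_j}$ by independence of the sources, so $r_i^{(j)}=Y_i-(M_{i,e_j}/M_{j,e_j})Y_j=\sum_{q\neq e_j}M_{iq}u_q$. This is a function of the sources other than $e_j$, hence independent of $e_j$, and therefore of $Y_j$; so $Y_j$ is independent of all its simple residuals.

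For the converse I would argue the contrapositive. If $Y_j$ is not exogenous, then besides $M_{j,e_j}\neq 0$ there is a source $q_0\neq e_j$ with $M_{jq_0}\neq 0$, where $u_{q_0}$ is either $e_k$ for a proper ancestor $k$ of $j$, or a latent confounder $f$ of $j$ (and in the latter case, since $f$ is a genuine confounder it has a second child $k\neq j$). Suppose, for contradiction, that $Y_j\independent r_i^{(j)}$ for all $i\neq j$. Writing $r_i^{(j)}=\sum_q(M_{iq}-\rho_iM_{jq})u_q$ with $\rho_i=\mathrm{Cov}(Y_i,Y_j)/\mathrm{Var}(Y_j)$, the D--S theorem together with non-Gaussianity of every source forces $M_{iq}=\rho_iM_{jq}$ whenever $M_{jq}\neq 0$. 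Taking $q=e_j$ and using $M_{i,e_j}=M_{j,e_j}t_{ji}$ gives $\rho_i=t_{ji}$ for every $i$. Now specialize to $i=k$: in the ancestor case $t_{jk}=0$ by acyclicity, so $\rho_k=0$ and then $M_{k,e_k}=\rho_kM_{j,e_k}=0$ contradicts $M_{k,e_k}\neq 0$; in the confounder case, $M_{kf}=\rho_kM_{jf}=t_{jk}M_{jf}$ contradicts $M_{kf}=t_{jk}M_{jf}+w$, since $w$ is a nonempty sum of pathweights --- it contains the monomial given by the edge $f\to k$ --- and hence is nonzero for generic parameters. Either way we reach a contradiction, so some $r_i^{(j)}$ is dependent on $Y_j$.

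I expect the main obstacle to be exactly the latent-confounder case of the converse: unlike the ancestor case, the independence of $Y_j$ and $r_k^{(j)}$ is not excluded by an acyclicity/support argument alone, and one must additionally rule out an exact cancellation of the pathweight sum $w$. This is precisely the mild (``parental-faithfulness''-type) genericity flagged in the discussion preceding the lemma, and it is why the subsequent proofs of Lemmas~\ref{thm:parceLCorrect} and~\ref{thm:parcelNotComplete} are stated for generic model parameters.
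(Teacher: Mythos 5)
The paper does not actually prove this lemma: it is imported verbatim as Lemma~1 of \citet{tashiro2014parcelingam}, accompanied only by the remark that it carries over when latent confounders are represented through correlated errors. So there is no in-paper argument to compare against, and your proposal is in effect supplying the missing proof. On its merits the argument is correct, and it follows the standard route in the LiNGAM literature (which is also how the cited source argues): pass to the source representation $Y=Mu$, apply Darmois--Skitovitch to the pair $\bigl(Y_j, r_i^{(j)}\bigr)$ to force $M_{iq}=\rho_i M_{jq}$ on the support of row $j$, read off $\rho_i=t_{ji}$ from the coordinate $q=e_j$, and derive a contradiction separately for an observed ancestor and for a latent confounder. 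The one place to tighten is your reliance on genericity: the lemma as stated carries no genericity qualifier, and the paper's appendix explicitly notes that none is needed for the \emph{full} latent-variable model (only for marginalized sub-models). Both of your appeals to genericity can be removed by choosing the witness node extremally in a topological order. In the ancestor case, take the parent $p$ of $j$ that is latest among $\pa(j)$; then no other parent of $j$ is a descendant of $p$, so the only directed path from $p$ to $j$ is the edge itself and the total effect equals the nonzero edge coefficient. In the confounder case, take the second child $k$ of $f$ to be earliest among the children of $f$ other than $j$; every directed path from $f$ to $k$ must exit $f$ through a child of $f$ that is an ancestor of $k$, and by minimality that child can only be $j$ or $k$ itself, so the $j$-avoiding contribution $w$ reduces to the single edge $f\rightarrow k$ and is nonzero outright. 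With those two choices your proof establishes the lemma exactly as stated, with no faithfulness assumption.
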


\begin{lemma}{{(Lemma 2 in \citet{tashiro2014parcelingam})}} \label{thm:tashiroLemma2}
	Assume all model assumptions of the latent variable LiNGAM are met. Denote by $Y_{(-j)}$ a vector that contains all the variables other than $Y_j$. Denote by $r_j^{(-j)}$ the population residuals when $Y_j$ is regressed onto $Y_{(-j)}$.
	Then a variable $Y_j$ is a sink in the sense that is has no parent observed variable or latent confounder if and only if $Y_{(-j)}$ is independent of its residual $r_j^{(-j)}$.
\end{lemma}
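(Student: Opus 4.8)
The plan is to exploit the over-complete ICA representation that underlies the latent-variable LiNGAM model together with the Darmois--Skitovich theorem, which is the standard device in this line of work. The first step is to fix a source representation: in the latent-variable LiNGAM model there is a vector $s=(e,f)$ of mutually independent, non-Gaussian sources --- one idiosyncratic error $e_v$ per observed variable and one factor $f_k$ per latent confounder --- and a fixed matrix $A$ with $Y=As$. Since the directed part of the graph is acyclic, $A$ is built from sums of directed path-weights, so $A_{u,e_j}\neq 0$ only if $j\in\An(u)$, and $A_{u,f_k}\neq 0$ only if $f_k$ feeds into $\An(u)$; in the correlated-error language of this paper this is simply the decomposition of $\Omega$ along the bidirected part of the graph.

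For the ``if'' direction I would argue directly. Suppose $Y_j$ has no observed children and is attached to no latent confounder (the sink case). Then $j$ is not an ancestor of any $u\neq j$, so each $Y_u$ with $u\neq j$ is a linear function only of $\{e_w:w\neq j\}$ together with the $f_k$; because $j$ has no confounder, $\varepsilon_j=e_j$ is independent of all of these, hence $\varepsilon_j\independent Y_{(-j)}$. Since $Y_j=B_{j,-j}Y_{(-j)}+\varepsilon_j$ with $\E(Y_{(-j)}\varepsilon_j)=0$, the population regression of $Y_j$ on $Y_{(-j)}$ returns coefficient vector $B_{j,-j}$ and residual $r_j^{(-j)}=\varepsilon_j$, which is independent of $Y_{(-j)}$.

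For the ``only if'' direction I would use Darmois--Skitovich. Suppose $Y_{(-j)}\independent r_j^{(-j)}$ and write $r_j^{(-j)}=Y_j-w^{T}Y_{(-j)}=c^{T}s$, where $w$ is the population least-squares vector, and $Y_u=A_{u,\cdot}\,s$ for $u\neq j$. Joint independence gives $r_j^{(-j)}\independent Y_u$ for every $u\neq j$, so Darmois--Skitovich --- all sources being non-Gaussian --- forces that no source $s_k$ has both $c_k\neq 0$ and $A_{u,k}\neq 0$. I would then argue the contrapositive of the claim: if $j$ had an observed child then $e_j$ appears in some $Y_u$, and if $j$ were attached to a latent confounder $f_k$ then $f_k$ appears in some $Y_u$; in either case the offending source also has nonzero coefficient in $r_j^{(-j)}$, which contradicts the conclusion of Darmois--Skitovich. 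Hence $j$ is a sink with no latent confounder.

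The hard part will be making this last step precise: showing that in the non-sink cases there really is a source entering both some $Y_u$ (for $u\neq j$) and the residual $r_j^{(-j)}$ with nonzero coefficient --- for instance, that for a child $u$ of $j$ the total directed effect of $j$ on $u$ does not vanish and that $e_j$ is not annihilated by the least-squares projection. Under the conventions in force here (the full LiNGAM model, with all structural coefficients nonzero and truly non-Gaussian sources) this non-cancellation holds, but pinning it down rigorously requires either careful bookkeeping of the path-weights contributing to $c$ and to the rows of $A$, or a genericity argument that the relevant coefficients are not identically zero in the parameters; this is where most of the effort goes.
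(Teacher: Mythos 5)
First, a point of reference: the paper does not actually prove this statement. It is quoted as Lemma 2 of \citet{tashiro2014parcelingam} and used as an imported ingredient in the proofs of Lemmas~\ref{thm:parceLCorrect} and~\ref{thm:parcelNotComplete}; the only in-paper commentary is the surrounding remark that the lemma holds for the full latent-variable LiNGAM model (where $\beta_{vp}\neq 0$ for all $p\in\pa(v)$) and that genericity must be invoked separately when it is applied to marginalized submodels. (The phrase ``no parent observed variable'' in the transcribed statement is evidently a typo for ``no child''; you read it correctly as the sink condition.) Your overall strategy---write $Y=As$ for mutually independent non-Gaussian sources and apply Darmois--Skitovich to the pairs $(r_j^{(-j)},Y_u)$---is exactly the strategy of the original source, and your ``if'' direction is complete: it is the same normal-equations computation as display~\eqref{eq:ancestralSound}.

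The gap you flag in the ``only if'' direction is real, and as phrased your argument does not close: you want a source with nonzero coefficient in both $r_j^{(-j)}$ and some $Y_u$, but whether a given source survives in $r_j^{(-j)}=Y_j-w^{T}Y_{(-j)}$ depends on the least-squares vector $w$, and Darmois--Skitovich by itself would simply force that coefficient to be zero rather than deliver a contradiction. The cleaner way to finish is to run the implication the other way: independence plus Darmois--Skitovich confines the source support of $r_j^{(-j)}$ to sources that load on \emph{no} $Y_u$ with $u\neq j$. Each $e_w$ with $w\neq j$ loads on $Y_w$ with coefficient $1$, and each latent confounder loads on at least one observed variable other than $j$ (take the topologically minimal such variable, so no path cancellation can occur under the full-model assumption); hence all of their coefficients in $r_j^{(-j)}$ vanish. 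If $j$ has a child, then $e_j$ likewise loads on the topologically minimal child with coefficient $\beta_{u_0j}\neq 0$, so every coefficient of $r_j^{(-j)}$ vanishes and $r_j^{(-j)}=0$ a.s., contradicting the positive residual variance guaranteed by $\Sigma\succ 0$. If $j$ is childless but confounded, the support argument forces $r_j^{(-j)}=e_j$ exactly, so that $w^{T}Y_{(-j)}-B_{j,\pa(j)}Y_{\pa(j)}$ equals the nonzero confounder contribution $\sum_k\lambda_{jk}f_k$ while containing no idiosyncratic-error terms; but the unit-diagonal triangular structure of $(I-B)^{-1}$ implies the only linear combination of $Y_{(-j)}$ free of all $e_w$, $w\neq j$, is zero. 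Either way you reach a contradiction without having to track the cancellation of any particular path-weight.
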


\vspace{.3cm}
\subsection{Lemma~\ref{thm:parceLCorrect}}

Suppose $Y$ is generated by a recursive linear SEM that corresponds to an ancestral graph $G$. With generic model parameters and population information (i.e., the distribution of $Y$), the ordering, $\hat{\prec}$, returned by Algorithm 2 of ParcelLiNGAM is sound and complete for ancestral relationships in $G$.
	 
\begin{proof}	 
We first consider Algorithm 2 which applies Algorithm 1 to all sets in the powerset of $V$. Since Lemma~\ref{thm:tashiroLemma1} and \ref{thm:tashiroLemma2} explicitly concern the certificate used to place nodes into $\Ktop$ or $\Kbttm$, they trivially imply that any output $\Ktop$ and $\Kbttm$ method is sound.
It remains to be shown that the procedure is complete for ancestral relationships. 
By the ancestral assumption, every $v \in V$ is a sink in the set $\An(v)$ since it does not share a confounder with any ancestor, so when applying Algorithm 1 to $\An(v)$ either all of $\An(v)$ will be put into $\Ktop$ or $v$ will be put into $\Kbttm$. Regardless, it will be identified that $u \prec v$ for all $u \in \An(v) \setminus v = \an(v) $. Thus, Algorithm 2 is thus complete.

Now consider Algorithm 3 which first applies Algorithm 1 to $V$ and identifies $\Ktop$ and $\Kbttm$. It then applies Algorithm 2 to $U_{\rm res} := V \setminus \{\Ktop \cup \Kbttm\}$. $\Ktop$ and $\Kbttm$ are both total orderings so the orientation rules in Step 4 of Algorithm 1 will completely identify all ancestral relationships between any $u, v$ such that (1) $u, v \in \Ktop \cup \Kbttm$, (2) $u \in \Ktop$ and $v \in U_{\text{res}}\cup \Kbttm$, or (3) $u \in U_{\text{res}}$ and $v \in \Kbttm$. Thus, it remains to show that the remaining steps of Algorithm 3 completely discover all ancestral relationships between any pair $u, v$ such that $u, v \in U_{\text{res}}$.

By the soundness of the certification procedure, $\Ktop \cup \an(\Ktop) = \Ktop$ and $\Ktop \cap S = \emptyset$ where $S = \{v \in V: \sib(v) \neq \emptyset\}$. Similarly $\Kbttm \cup \de(\Kbttm) = \Kbttm$. It is well known that when $A \subset V$ is an ancestral set, the residuals when regressing $V\setminus A$ onto $A$ correspond to a model which can be represented by the sub-graph induced by $V\setminus A$ (e.g., \citet[Lemma 2]{chen2019causal}). In addition, removing a set which contains all of its descendants does not change the induced sub-graph; for instance see \citet[Section 5]{drton2018algebraic}. Thus, the residuals formed in Step 4, $R_{\text{res}}$, correspond to the sub-graph induced by $U_{\text{res}}$, which is also ancestral. Thus, applying the proof for Algorithm 2 implies that Step 5 of Algorithm 1 discovers all ancestral relations for $u, v \in U_{\text{res}}$. Thus Algorithm 3 is also complete.  
\end{proof}
\vspace{.3cm}

\subsection{Lemma~\ref{thm:parcelNotComplete}}

Suppose $Y$ is generated by a recursive linear SEM that corresponds to a graph $G$ which is bow-free but not ancestral. With generic parameters and population information, both Algorithm 2 and Algorithm 3 of ParcelLiNGAM will return a partial ordering which is sound, but not complete for ancestral relationships in $G$.
\begin{proof}
Algorithm 2 applies Algorithm 1 to the powerset of $V$, and we first consider the output of Algorithm 1 on a set $M \subseteq V$. 
	
Let $S = \{v \in V: \sib(v) \neq \emptyset\}$. In a graph which is not ancestral, there must exist some $u, v \in V$ such that $u \in \sib(v) \cap \an(v)$. Let 
	\begin{equation}
	Z(u, v) = \{z \,:\, \{u \cup \de(u)\} \setminus \de(v) \}.
	\end{equation}	
Now consider testing any set $M \subseteq V$ such that $v \in M$ and $M \cap Z(u,v) \neq \emptyset$. Let \[Z_{\text{top}} = \{z \in Z(u,v) \,:\, \an(z) \cap \{M \cap Z(u,v)\} = \emptyset\},\]
	so that $Z_{\text{top}}$ are nodes in $M\cap Z(u,v)$ which are not downstream of any other nodes in $M\cap Z(u,v)$. 
	Thus any $z \in Z_{\text{top}}$ will not be exogenous since it shares a latent confounder (acting through $u$) with $v$ and similarly $v$ will not be a sink. Thus, Lemma~\ref{thm:tashiroLemma1} implies that no $z\in Z(u,v)$ will be placed into $K_{\text{top}}$ which further implies no $\de(Z(u,v))$ will be placed into $K_{\text{top}}$. Similarly, Lemma~\ref{thm:tashiroLemma2} implies that $v$ will not be placed into $K_{\text{bttm}}$ which further implies no ancestor of $v$ will be put into $K_{\text{bttm}}$. Together, this implies that $M \cap Z(u,v) \subseteq U_{\rm res}$ so that running Algorithm 1 on $M$ will return inconclusive ancestral relationships between all $z \in Z(u,v)$. Since this holds for any $M \subseteq V$ such that $v \in M$ and $M \cap Z(u,v) \neq \emptyset $, Algorithm 2 will not discover that $z \prec v$ for any $z \in Z(u,v)$. Since $Z(u,v) \cap \an(v) \neq \emptyset$ Algorithm 2 is not complete.
	
	Algorithm 3 uses additional steps (Steps 2-4) before applying Algorithm 2. We show that these additional steps do not rectify the problem. First note that when applying Algorithm 1 to $V$ (Step 2), $\Ktop \subseteq V \setminus \{S\cup \de(S)\}$ and $\Kbttm \subseteq V \setminus \An(S)$. This is true because, by definition, any $s \in S$ is not exogenous since it shares a common confounder with some other $s' \in S$. Thus, no $s \in S$ will be put into $\Ktop$ and subsequently no $\de(S)$ will be put into $\Ktop$. For the same reason, no $s \in S$ will be put into $K_{\text{bttm}}$ since it is not a sink and subsequently no $\an(S)$ will be put into $\Kbttm$.

	Since $Z(u,v) \subseteq \{u \cup \de(u)\}$ and $u \in S$, then $Z(u,v) \cap K_{\text{top}} = \emptyset$. Thus, Step 4 will not remove from any $z \in Z(u,v)$ the effect of $u$ or the effect of the latent confounder shared by $u$ and $v$. Thus, as shown above, Step 5 of Algorithm 3 (applying Algorithm 2 to $R_{\rm res}$) will still fail to identify that $z \in \an(v)$ for any $z \in Z(u,v)$.  
\end{proof}	
		
\newpage

\subsection{Counterexamples: Pairwise LvLINGAM}
Pairwise LvLiNGAM will fail to discover any relationships in the simple ancestral graph shown in Figure~\ref{fig:pairwiseFail}. This is because all subsets of $V = \{1,2,3\}$ are confounded.
\begin{figure}[htb]
    \centering
\begin{tikzpicture}[->,>=triangle 45,shorten >=1pt,
		auto,
		main node/.style={ellipse,inner sep=0pt,fill=gray!20,draw,font=\sffamily,
			minimum width = .5cm, minimum height = .5cm}]
		
		\node[main node] (3)  {3};
		\node[main node] (1) [above left= .7cm of 3] {1};
		\node[main node] (2) [above right= .7cm of 3]  {2};

		\path[color=black!20!blue,style={->}]
		(1) edge node {} (3)
		(2) edge node {} (3)
		;
		
		\path[color=black!20!red,style={<->}]
		(1) edge node {} (2);
	\end{tikzpicture}
	\caption{\label{fig:pairwiseFail}An ancestral graph which Pairwise LvLiNGAM will fail to identify.}
\end{figure}
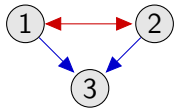

\subsection{Counterexamples: RCD}
We step through the RCD procedure when applied to the graph in Figure~\ref{fig:bapFail}. We follow the notation from Algorithm 1 in \citet{maeda2020causal}: $x_j$ is the observed data for variable $j$; at each step we consider a set $U \subseteq V$ where $|U| = l+1$ for some counter $l$; $M_i$ is the set of verified ancestors of $i$; $H_U = \bigcap_{j \in U} M_j$, and $y_j$ is the resulting residual when $x_j$ is regressed onto $H_U$. When regressing $x_j$ onto some set $H$, we let $d_{j,u.H}$ denote the population regression coefficient corresponding to $u \in H$. When performing a HSIC minimizing regression of $y_i$ onto $y_j$ for $j \in U \setminus i$ we let $\lambda$ be a potential solution and $S_i^U$ is the resulting residual.  Finally $S$ denotes a possible sink which might be certified at each step.

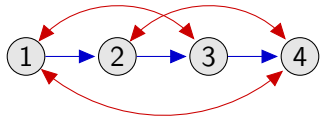
\begin{figure}[htb]
    \centering
\begin{tikzpicture}[->,>=triangle 45,shorten >=1pt,
		auto,
		main node/.style={ellipse,inner sep=0pt,fill=gray!20,draw,font=\sffamily,
			minimum width = .5cm, minimum height = .5cm}]
		
		\node[main node] (1) {1};
		\node[main node] (2) [right= .7cm of 1]  {2};
		\node[main node] (3) [right = .7cm of 2]  {3};
		\node[main node] (4) [right = .7cm of 3]  {4};
		
		\path[color=black!20!blue,style={->}]
		(1) edge node {} (2)
		(2) edge node {} (3)
		(3) edge node {} (4)
		;
		
		\path[color=black!20!red,style={<->}]
		(1) edge[bend left = 50] node {} (3)
		(2) edge[bend left = 50] node {} (4)
		(1) edge[bend right = 40] node {} (4);
	\end{tikzpicture}
	\caption{\label{fig:bapFail}A non-ancestral BAP.}
\end{figure}

\newpage

We walk through the RCD procedure assuming that line 9 is the intersection of sets; i.e., $H_U = \bigcap_{j \in U} M_j$.

\begin{itemize}
    \item $\bm{ U = \{1, 2\}: \qquad M = \{ \emptyset, \emptyset, \emptyset, \emptyset\};\, l = 1 \qquad H_U = \emptyset}$       
    \begin{itemize}
        \item $y_1 = \varepsilon_1$; $\;y_3 = \varepsilon_3 + \beta_{21}x_2$
        \item Let $i = 1$: $1 \in \pa(2)$ so there is no value of $\lambda$ such that $y_2. \independent y_1 - \lambda y_2$
        \item Let $i = 2$: Setting $\lambda = \beta_{21}$ yields 
        \begin{equation*}
            y_2 - \lambda y_1 = \varepsilon_2 \independent y_1.
        \end{equation*}
        \item $S = 2$. Update $M_2 = \{1\}$.
    \end{itemize}

    \item $\bm{ U = \{1, 3\}: \qquad M = \{ \emptyset, \{1\}, \emptyset, \emptyset\};\, l = 1 \qquad H_U = \emptyset}$       
    \begin{itemize}
        \item $1 \in \sib(3)$ so there is no updates to $M$.
    \end{itemize}  

    \item $\bm{ U = \{1, 4\}: \qquad M = \{ \emptyset, \{1\}, \emptyset, \emptyset\};\, l = 1 \qquad H_U = \emptyset}$       
    \begin{itemize}
        \item $1 \in \sib(4)$ so there is no update to $M$.
    \end{itemize} 
    
    \item $\bm{ U = \{2, 3\}: \qquad M = \{ \emptyset, \{1\}, \emptyset, \emptyset\};\, l = 1 \qquad H_U = \emptyset}$       
    \begin{itemize}
        \item $y_2 = \varepsilon_2 + \beta_{21}\varepsilon_1$; $\;y_3 = \varepsilon_3 + \beta_{32}(\varepsilon_2 + \beta_{21}\varepsilon_1)$
            \item Let $i = 2$: $2 \in \pa(3)$ so there is no value of $\lambda$ such that $y_3 \independent y_2 - \lambda y_3$.
            \item Let $i = 3$: In order for  $S_3^U \independent y_2$, it is necessary that $S_3^U$ not contain a $\varepsilon_2$ term. This implies that $\lambda$ must be $\beta_{32}$ so that $S_4^U = y_3 - \lambda y_2 = \varepsilon_3$. However, since $1 \in \sib(3)$, then
            \begin{equation*}
                y_3 - \lambda y_2 = \varepsilon_3 \not \independent y_2 = \varepsilon_2 + \beta_{21}\varepsilon_1.
            \end{equation*}
            so there is no update to $M$.
    \end{itemize}
    
    \item $\bm{ U = \{2, 4\}: \qquad M = \{ \emptyset, \{1\}, \emptyset, \emptyset\};\, l = 1 \qquad H_U = \emptyset}$       
    \begin{itemize}
        \item $2 \in \sib(4)$ so there is no update.  
    \end{itemize}
    
    \item $\bm{ U = \{3, 4\}: \qquad M = \{ \emptyset, \{1\}, \emptyset, \emptyset\};\, l = 1 \qquad H_U = \emptyset}$       
    \begin{itemize}
        \item $y_3 = \varepsilon_3 + \beta_{32}x_2$; $\;y_4 = \varepsilon_4 + \beta_{43}x_3$
            \item Let $i = 3$: $3 \in \pa(4)$ so there is no value of $\lambda$ such that $y_4 \independent y_3 - \lambda y_4$
            \item Let $i = 4$: For $S_4^U \independent y_3$, it is necessary that $S_4^U$ not contain a $\varepsilon_3$ term. This implies that $\lambda$ must be $\beta_{43}$, so that $S_4^U = y_4 - \lambda y_3 = \varepsilon_4$. However, since $1 \in \sib(4)$, then
            \begin{equation*}
                S_4^U = y_4 - \lambda y_3 = \varepsilon_4 \not \independent \varepsilon_3 + \beta_{32}(\varepsilon_2 + \beta_{21}\varepsilon_1)
            \end{equation*}
            so there is no update to $M$.
    \end{itemize}
\end{itemize}

This is all subsets of size $2$, but since an update has occurred, $l$ remains $1$, and the procedure will cycle through all subsets of size $2$ again. However, since $M_2$ is the only non-empty set, $H_U$ is the same for all pairs, so the outcomes are the same as before the second time through. $M$ is not updated so $l = 2$ and we test all sets of size $3$.
    
\begin{itemize}
    \item $\bm{ U = \{1,2,3\}: \qquad M = \{ \emptyset, \{1\}, \emptyset, \emptyset\};\, l = 1 \qquad H_U = \emptyset}$
    \begin{itemize}
        \item Let $i = 1$: $1 \in \pa(2)$ so no update can be made.  
        \item Let $i = 2$: $M_2 \cap U \neq \emptyset$ so it is not tested.
        \item Let $i = 3$: $y_1 = \varepsilon_1$ and $y_2 = x_2$ both do not contain a $\varepsilon_3$ term, so $S_3^U$ must contain a $\varepsilon_3$ term. Since $1 \in \sib(3)$, then $S_3^U$ cannot be independent of $y_1$, so no update is made.
        \item No update is made.
        
    \end{itemize} 
    
    \item $\bm{ U = \{1,2,4\}: \qquad M = \{ \emptyset, \{1\}, \emptyset, \emptyset\};\, l = 1 \qquad H_U = \emptyset}$
    \begin{itemize}
        \item Let $i = 1$: $1 \in \pa(2)$ so no update can be made.  
        \item Let $i = 2$: $M_2 \cap U \neq \emptyset$ so it is not tested.
        \item Let $i = 4$: $y_1 = \varepsilon_1$ and $y_2 = x_2$ both do not contain a $\varepsilon_4$ term, so $S_4^U$ must contain a $\varepsilon_4$ term. Since $1 \in \sib(4)$, then $S_4^U$ cannot be independent of $y_1$, so no update is made.
        \item No update is made.
    \end{itemize} 
    
    \item $\bm{ U = \{1,3,4\}: \qquad M = \{ \emptyset, \{1\}, \emptyset, \emptyset\};\, l = 1 \qquad H_U = \emptyset}$           
    \begin{itemize}
        \item Let $i = 1$: $1 \in \pa(2)$ so no update can be made.  
        \item Let $i = 3$: $3 \in \pa(4)$ so no update can be made.
        \item Let $i = 4$: $y_1 = \varepsilon_1$ and $y_3 = x_3$ both do not contain a $\varepsilon_4$ term, so $S_4^U$ must contain a $\varepsilon_4$ term. Since $1 \in \sib(4)$, then $S_4^U$ cannot be independent of $y_1$, so no update is made.
        \item No update is made.
    \end{itemize} 
    
    \item $\bm{ U = \{2,3,4\}: \qquad M = \{ \emptyset, \{1\}, \emptyset, \emptyset\};\, l = 1 \qquad H_U = \emptyset}$
    \begin{itemize}
        \item Let $i = 2$: $2 \in \pa(3)$ so no update can be made.  
        \item Let $i = 3$: $3 \in \pa(4)$ so no update can be made.
        \item Let $i = 4$: $y_2 = x_2$ and $y_3 = x_3$ both do not contain a $\varepsilon_4$ term, so $S_4^U$ must contain a $\varepsilon_4$ term. Since $2 \in \sib(4)$, then $S_4^U$ cannot be independent of $y_2$, so no update is made.
        \item No update is made.
    \end{itemize} 
\end{itemize}

Since no updates have been made, $l = 3$.

\begin{itemize}
    \item $\bm{ U = \{1, 2,3,4\}: \qquad M = \{ \emptyset, \{1\}, \emptyset, \emptyset\};\, l = 1 \qquad H_U = \emptyset}$
        \begin{itemize}
        \item Let $i = 1$: $1 \in \pa(2)$ so no update can be made.  
        \item Let $i = 2$: $2 \in \pa(3)$ so no update can be made.  
        \item Let $i = 3$: $3 \in \pa(4)$ so no update can be made.
        \item Let $i = 4$: $y_1 = \varepsilon_1$, $y_2 = x_2$, and $y_3 = x_3$ both do not contain a $\varepsilon_4$ term, so $S_4^U$ must contain a $\varepsilon_4$ term. Since $\sib(4) = \{1,2\}$, then $S_4^U$ cannot be independent of $y_1$ or $y_2$, so no update is made.
        \item No update is made.
    \end{itemize} 
\end{itemize}

The algorithm will terminate and has only discovered $1 \rightarrow 2$.

\newpage

\section{Proofs from Section~\lowercase{\ref{sec:bap}}}
\subsection{Lemma~\ref{thm:neccIndGeneral}} 

For $v \in V$ and sets $C \subseteq A \subseteq V \setminus \{v\}$, Suppose $D \in \mathbb{R}^{p \times p}$ such that $D_{ij} \neq 0$ only if $j \in \an(i)$. Then, for generic $B$ and error moments, if $\delta_v(C, A, \Sigma, D) \neq \tilde B(C\cup\{v\})_{v,C}$,
then $\E(\gamma_c^{K-1}\gamma_v) \neq 0$ for some $c \in C$.

\begin{proof}
Since $\E\left(\gamma_{c}^{K-1}\gamma_v\right)$ is a rational function of the model parameters, by \citet[Lemma 1]{okamoto1973distinctness}, showing that the quantity is non-zero for some parameters is sufficient for showing that it vanishes only over a null set. Without loss of generality, let $C$ be ordered such that $C = \{c_1, \ldots, c_{|C|}\}$ where $c_i$ is not a descendant of $c_j$ for any $j < i$. Note that
	\begin{equation}
	\begin{aligned}
	\gamma_v &= \varepsilon_v + \sum_{a \in \an(v)} \pi_{va}\varepsilon_a - \sum_{c \in C} \delta_{vc}Y_c\\
	 &=\varepsilon_v + \sum_{a \in \an(v)} \pi_{va}\varepsilon_a - \sum_{c \in C} \delta_{vc}(\varepsilon_c + \sum_{a \in \an(c)} \pi_{ca}\varepsilon_a).
	\end{aligned}
	\end{equation}
	Suppose $i$ is the minimum index for which $\delta_{c_i} \neq \tilde B_{v, c_i}$ so that $\delta_{c_j} = \tilde B_{v,c_j}$ for all $j < i$. Then, the coefficient of $\varepsilon_{c_i}$ in $Y_v - \sum_{j < i}\delta_{v,c_j}Y_{c_j}$ is

	\begin{align*}
	\pi_{v,c_i} - \sum_{j < i}\delta_{v, c_j} \pi_{c_j, c_i} &= \pi_{v,c_i} - \sum_{j < i}\tilde \beta_{v, c_j} \pi_{c_j, c_i}\\
	&= \sum_{l \in \mathcal{L}_{v,c_i}} W(l) -\sum_{j < i} \left[\left( \sum_{l \in \mathcal{L}^{(c_j)}_{v,c_j}(C)} W(l)\right) \left(\sum_{l \in \mathcal{L}_{c_j, c_i}} W(l)\right)\right] \\
	&= \sum_{l \in \mathcal{L}_{v,c_i}} W(l) -\sum_{j < i} \left[\sum_{ l \in \mathcal{L}^{(c_j)}_{v,c_i}(C)} W(l)\right]\\ 	\label{eq:marginalDirectDecomp} \numberthis
	&= \sum_{l \in \mathcal{L}^{(c_i)}_{v,c_i}} W(l) = \tilde B(C)_{v,c_i}.
	\end{align*}

	For all $j > i$, $c_j$ is not a descendant of $c_i$ so $Y_{c_j}$ does not include any terms of $\varepsilon_{c_i}$. By assumption, $\delta_{c_i} \neq \tilde B_{v, c_i}$, so let $\delta_{c_i} =\tilde B_{v, c_i} - \alpha$ for $\alpha \neq 0$ so that
	\begin{equation}
	\gamma_v = \alpha \varepsilon_{c_i} + \eta \qquad \text{ and } \qquad \gamma_{c_i} = \varepsilon_{c_i} + \zeta,
	\end{equation}
	where $\eta$ and $\zeta$ do not contain $\varepsilon_{c_i}$.
	Then,

	\begin{align*}
	\E\left(\gamma_{c_i}^{K-1}\gamma_v\right) & = \E\left(\left[\varepsilon_{c_i}  + \zeta\right]^{K-1}\left[ \alpha \varepsilon_{c_i} + \eta\right] \right) \\
	& = \E\left(\left[\varepsilon_{c_i}^{K-1}  + \sum_{k = 0}^{K-2} \varepsilon_{c_i}^{k}\zeta^{K-1 -k}\right]\left[ \alpha \varepsilon_{c_i} + \eta\right] \right) \\
	& = \alpha\E\left(\varepsilon_{c_i}^{K}\right) + \E\left(\varepsilon_{c_i}^{K-1}\eta\right) +  \E\left(\left[\sum_{k = 0}^{K-2} \varepsilon_{c_i}^{k}\zeta^{K-1 -k}\right]\left[ \alpha \varepsilon_{c_i} + \eta\right] \right).
	\end{align*}
Since the last two terms do not involve $\E(\varepsilon_{c_i}^K)$, we can always select some $\E(\varepsilon_{c_i}^K)$ such that
	\begin{align*}
	\E\left(\varepsilon_{c_i}^{K}\right) \neq -\frac{\left(\E\left(\varepsilon_{c_i}^{K-1}\eta\right) +  \E\left(\left[\sum_{k = 0}^{K-2} \varepsilon_{c_i}^{k}\zeta^{K-1 -k}\right]\left[ \alpha \varepsilon_{c_i} + \eta\right] \right)\right)} {\alpha}
	\end{align*}
which	ensures that $\E\left(\gamma_{c_i}^{K-1}\gamma_v\right) \neq 0$.
\end{proof}

\subsection{Proof of Lemma~\ref{thm:neccIndDes}}

Consider $v \in V$ and set $C \subseteq V\setminus \{v\}$. Let $D \in \mathbb{R}^{p \times p}$ such that $D_{ij} \neq 0$ only if $j \in \an(i)$.
Suppose $C \not \subseteq \an(v)$, but for generic $B$ and error moments, 
$\E(\gamma_c(D)^{K-1}\gamma_v(C, S,D)) = 0$
for all $c \in C$. Then for $C_1 = C\cap \an(v)\setminus \sib(v)$,
\begin{equation} \E(\gamma_c(D)^{K-1}\gamma_v(C_1, S,D)) = 0\end{equation} for all $c \in C$.

\begin{proof}
For convenience, let $A = \psAn(C)$, $A_1 = \psAn(C_1)$, $A_2 = A \setminus A_1$, and $\Lambda = I-D$ and $\Pi = (I- D)^{-1}$. Note that $A_2 \cap \de(A_1) = \emptyset$; this implies $D_{A_1, A_2} = 0$ and $\left[(I- D_{A, A})^{-1}\right]_{A_1, A_2} = 0$. So that

\begin{align*}
(I- D)_{{C_1}, A}S_{A,C} & = \begin{bmatrix} \Lambda_{{C_1}, A_1} & \Lambda_{{C_1}, A_2} \end{bmatrix}
\begin{bmatrix} S_{A_1, C_1} \\
S_{A_2, C_1} \\
 \end{bmatrix}
= \begin{bmatrix} \Lambda_{{C_1}, A_1} & 0 \end{bmatrix}
\begin{bmatrix} S_{A_1, C_1} \\
S_{A_2, C_1} \\
 \end{bmatrix}\\
& = \Lambda_{{C_1}, A_1} S_{A_1, C_1},
\end{align*}
and
\begin{align*}
(I-D)_{C_1, A} \Sigma_{A,v} &= \begin{bmatrix} \Lambda_{C_1, A_1} \Lambda_{C_1, A_2}\end{bmatrix} \begin{bmatrix} \Sigma_{A_1,v}\\
\Sigma_{A_2,v}
\end{bmatrix}
 = \begin{bmatrix} \Lambda_{C_1, A_1} 0\end{bmatrix} \begin{bmatrix} \Sigma_{A_1,v}\\
\Sigma_{A_2,v}
\end{bmatrix}\\
&= (I-D)_{C_1, A_1} \Sigma_{A_1,v}.
\end{align*}
Thus,
\begin{align*}
\delta_{v}(C_1, A, S,D) &= \left[(I- D)_{C_1, A}S_{A,C_1}\right]^{-1} (I-D)_{C_1, A} \Sigma_{A,v}\\
&= \left[(I- D)_{C_1,{A_1}}S_{{A_1},{C_1}}\right]^{-1} (I-D)_{C_1, A_1} \Sigma_{A_1,v}\\
& = \delta_{v}(C_1, A_1, S,D).
\end{align*}
By Lemma~\ref{thm:neccIndGeneral}, for generic $B$ and error moments, if \[\E(\gamma_c(D)^{K-1}\gamma_v(C, S,D)) = 0,\] then for every $q \not \in C_1$, $\delta_{vq}(C, A, S,D) = 0$.
\begin{align*}
\gamma_v(C, S,D) &= Y_v - Y_C\delta_{v}(C, A, S,D)\\
&=  Y_v - Y_{C_1}(C_1, A_1, S, D)\\
& =  \gamma_v(C_1, A_1, S,D).
\end{align*}
So if for all $c \in C$,
\begin{equation}
\begin{aligned}
\E\left(\gamma_c(D)^{K-1}\gamma_v(C,\Sigma,D)\right) = 0,
\end{aligned}
\end{equation}
then for all $c \in C$
\begin{equation}
\begin{aligned}
\E\left(\gamma_c(D)^{K-1}\gamma_v(C_1,S,D)\right) =0
\end{aligned}.
\end{equation}
\end{proof}

\subsection{Proof of Lemma~\ref{thm:noMissPruneParents}}
Suppose $D = H_\mathcal{C}(B)$ for some $H_\mathcal{C} \in
\mathcal{D}$ and for some $v \in V$, we have that\\ $\E(\gamma_c(D)^{K-1}\gamma_v(\pspa(v), S,D)) = 0$ for all $c \in \pspa(v)$.
If $q \in [\pa(v)\setminus \pspa(v)] \cup \sib(v)$, then for generic $B$ and error moments
\[\E \left(\gamma_q(D)^{K-1}\gamma_v(D)\right) \neq 0.\]

\begin{proof}
For notational convenience, let $C = \pspa(v)$. First consider $q \in \pa(v)\setminus \pspa(v)$. $\E(\gamma_c(D)^{K-1}\gamma_v(C, \psAn(C), S,D)) = 0$ for all $c \in \pspa(v)$ implies that
\begin{equation}
\begin{aligned}
\gamma_v(D) &= Y_v - Y_{\pspa(v)}(D_{v, \pspa(v)})^T\\
&= \left(\pi_{v,q} - \sum_{c \in C}\tilde B(C\cup\{v\})_{v,c}\pi_{c,q}\right) \epsilon_q + \eta\\
&= \left(\pi_{v,q} - \sum_{c \in C \cap \de(q)}\tilde B(C\cup\{v\})_{v,c}\pi_{c,q}\right) \epsilon_q + \eta\\
&= \alpha \epsilon_q + \eta
\end{aligned}
\end{equation}
where $\eta$ does not involve $\epsilon_q$. For any $c \in \de(q)$, $\tilde B(C\cup\{ v, q\})_{v,C} = \tilde B(C\cup\{v\})_{v,C}$ because there are no paths from $c$ to $v$ which pass through $q$, so marginalizing $q$ does not change the marginal direct effect. Thus, as shown in Lemma~\ref{thm:neccIndGeneral},
\begin{equation}
\begin{aligned}
\alpha &= \pi_{q,v} - \sum_{c \in C \cap \de(q)}\tilde B(C\cup\{q, v\})_{v,C}\pi_{c,q}\\
 &= \tilde B(C\cup\{q, v\})_{v,q}.
\end{aligned}
\end{equation}
The set of points, $B$ such that $q\in \pa(v)$, but the marginal direct effect $\tilde B(C\cup\{q, v\})_{vq} = 0$ have Lebesgue measure 0, so by the same argument as Lemma~\ref{thm:neccIndGeneral} when $\alpha \neq 0$, for generic error moments, $\E(\gamma_q^{K-1} \gamma_v) \neq 0$.

Now consider $q \in \sib(v)$. Since $\pspa(v) \subseteq \an(v)$ for all $v \in V$, then $\gamma_v = \epsilon_v + \eta$ where $\eta$ does not involve $\varepsilon_v$ and $\gamma_q = \epsilon_q + \zeta$ where $\zeta$ does not involve $\epsilon_q$. Then, using the same argument as the previous lemmas, selecting
\begin{equation}
\E(\epsilon_q^{K-1} \epsilon_v) \neq -\E\left(\sum_{t = 0}^{K-2}\binom{K-1}{t}\varepsilon_q^t\zeta^{K-1-t} (\epsilon_v + \eta) + \epsilon_q^{K-1}\eta\right)
\end{equation}
ensures that $\E(\gamma_q^{K-1} \gamma_v) \neq 0$
\end{proof}

\subsection{Proof of Lemma~\ref{thm:neccIndSib}}

Consider $v \in V$ and sets $A, C$ such that $C \subseteq A \subseteq V \setminus \{v\}$. Suppose $D = H_\mathcal{C}(B)$ for some $H_\mathcal{C} \in \mathcal{D}$ and $S = \Sigma$, but $C \cap \sib(v) \neq \emptyset$. Then for generic $B$ and error moments, there exists some $q \in C$ such that $\E\left(\gamma_q^{K-1}\gamma_v\right) \neq 0$.

\begin{proof}
	We again appeal to \citet[Lemma 1]{okamoto1973distinctness}, and show that the quantity is non-zero for generic $B$ and the error moments by constructing a single point (of B and the error moments) at which the quantity of interest is non-zero. In particular, select $q \in C \cap \sib(v)$. We then represent $\gamma_v$ as
	\begin{equation}
	\begin{aligned}
	\gamma_v &= \varepsilon_v + \sum_{a \in \an(v)} \pi_{va}\varepsilon_a - \sum_{c \in C} \delta_{vc} \sum_{z \in \An(c)} \pi_{cz} \varepsilon_z\\
	& = \alpha \varepsilon_q + \eta,
	\end{aligned}
	\end{equation}
	where
	\begin{align*}
	\alpha &= \pi_{vq} + \sum_{c \in C}\delta_{vc}\pi_{cq} \\
	\eta & = (1 - \sum_{c \in C}\delta_{vc} \pi_{cv})\epsilon_v + \sum_{a \in \an(v) \setminus q} \pi_{va}\varepsilon_a - \sum_{c \in C} \delta_{vc}\sum_{z \in \An(c) \setminus q} \pi_{cz} \varepsilon_z
	\end{align*}
and $\delta_{vc}$ is the $c$-th element of $\delta_v$ from \eqref{eq:deltaDef}. Similarly, we represent $\gamma_q$
	\begin{equation}
	\begin{aligned}
	\gamma_q &= \varepsilon_q + \sum_{a \in \an(q)} \pi_{va}\varepsilon_a - \sum_{s \in \pspa(q)} d_{qs} \sum_{t \in \An(s)} \pi_{st} \varepsilon_t\\
	& = \varepsilon_q + \zeta
	\end{aligned}
	\end{equation}
	where $\zeta$ does not involve $\varepsilon_q$. The coefficient on $\varepsilon_q$ is $1$ since $D = H_\mathcal{C}(B)$ implies that $d_{qs} \neq 0$ only if $s \in \an(q)$. For $S = \Sigma$ and any $H_\mathcal{C} \in \textbf{D}$, $\alpha$ is a rational function of $B$ and $\Omega$ because both $\Pi$ and $\delta$ only involve matrix inversions and multiplications of $D$ and $S$ which in turn are rational functions of $B$ and $\Omega$. We now show that for some point $B$ and $\Omega$, $\alpha \neq 0$. In particular, let $B = 0$ and $\omega_{qv} \neq 0$, but $\omega_{ij} = 0$ for all other $i \neq j$. At this point, $\pi_{vq} = \pi_{cq} = 0$ for all $c \in C \setminus q$ so that
	\begin{equation}
	\begin{aligned}
	\alpha = \delta_{vq}.
	\end{aligned}
	\end{equation}
	$B = 0$ implies that $D = 0$ for all $D \in \mathcal{D}$ and $S_{C,C} = \Omega_{C,C}$. In addition, $S_{C\setminus q,v} = 0$ since all treks between nodes in $C$ or between treks $C \setminus \{q\}$ and $v$ have path weights of 0. However, there is a single trek between $q$ and $v$, namely the bidirected edge, so $S_{qv} = \omega_{qv}$. Then,
	\begin{equation}
	\begin{aligned}
	\alpha = \delta_{vC} = \left[S_{C,C}\right]^{-1} S_{C,v} = \frac{\omega_{qv}}{\omega_{qq}} \neq 0.
	\end{aligned}
	\end{equation}

	Thus, for generic choice of $B$ and $\Omega$, $\alpha \neq 0$. Now, we finally examine the quantity of interest, which is a rational function of the error moments and $B$, and play the same game as before. In particular,

	\begin{align*}
	\E\left(\gamma_q^{K-1}\gamma_v\right) & = \E\left(\left[\varepsilon_q  + \zeta\right]^{K-1}\left[ \alpha \varepsilon_q + \eta\right] \right) \\
	& = \E\left(\left[\varepsilon_q^{K-1}  + \sum_{k = 0}^{K-2} \varepsilon_q^{k}\zeta^{K-1 -k}\right]\left[ \alpha \varepsilon_q + \eta\right] \right) \\
	& = \alpha\E\left(\varepsilon_q^{K}\right) + \E\left(\varepsilon_q^{K-1}\eta\right) +  \E\left(\left[\sum_{k = 0}^{K-2} \varepsilon_q^{k}\zeta^{K-1 -k}\right]\left[ \alpha \varepsilon_q + \eta\right] \right). \\
	\end{align*}

The last two terms do not involve $\E(\varepsilon_q^K)$ so we select $\E\left(\varepsilon_q^{K}\right)$ such that
	\begin{equation}
	\begin{aligned}
	\E\left(\varepsilon_q^{K}\right) \neq -\frac{\left(\E\left(\varepsilon_q^{K-1}\eta\right) +  \E\left(\left[\sum_{k = 0}^{K-2} \varepsilon_q^{k}\zeta^{K-1 -k}\right]\left[ \alpha \varepsilon_q + \eta\right] \right)\right)} {\alpha}
	\end{aligned}
	\end{equation}
to ensure that $\E\left(\gamma_q^{K-1}\gamma_v\right) \neq 0$. Thus, there exists some point such that $\E\left(\gamma_q^{K-1}\gamma_v\right) \neq 0$. This implies there is a null set of $B$ and error moments which we must avoid for each $H_\mathcal{C} \in \mathcal{D}$, but since $|\mathcal{D}|$ is finite, then the union of these null sets is again a null set.
\end{proof}

\subsection{Proof of Corollary~\ref{thm:noMissPruneParentsFinal}}

	Suppose $D = B$. For $v \in V$ and generic $B$ and error moments, suppose $\pa(v) \subseteq C \subseteq \an(v)\setminus \sib(v)$ and $\E(\gamma_c(D)^{K-1}\gamma_v(C, \Sigma,D)) = 0$ for all $c \in C$. If $q \in C \setminus \pa(v)$, then for all $c \in C$
	\begin{equation}
	\E(\gamma_c(D)^{K-1}\gamma_v(C \setminus \{q\}, \Sigma,D)) = 0.
	\end{equation}
	If $q \in \pa(v)$, then there exists some $c \in C$ such that
	\begin{equation}
	    \E(\gamma_c(D)^{K-1}\gamma_v(C \setminus \{q\}, \Sigma,D)) \neq 0.
	\end{equation}
	
\begin{proof}
Suppose $q \in C \setminus \pa(v)$ and without loss of generality, assume that $q$ is the last element in $C$. Then, Lemma~\ref{thm:debiasedEst} implies that  
\begin{equation}
\begin{aligned}
\delta_{v}(C, \psAn(C), \Sigma,D) &= B_{v,(C \setminus \{q\}, q)}
 = \begin{bmatrix} B_{v,(C \setminus \{q\})} & 0 \end{bmatrix}\\
 &= \begin{bmatrix} \delta_{v}(C\setminus \{q\}, \Sigma,D) & 0 \end{bmatrix}
\end{aligned}
\end{equation}
so that for all $c \in C$
\begin{equation}
\begin{aligned}
\E(\gamma_c(D)^{K-1}&\gamma_v(C \setminus \{q\}, \Sigma,D)) =  \E(\gamma_q(D)^{K-1}\gamma_v(C, \Sigma,D)) = 0.
\end{aligned}
\end{equation}
Now consider the second statement when $q \in \pa(v)$. If $\E(\gamma_c(D)^{K-1}\gamma_v(C \setminus \{q\}, \Sigma,D)) \neq 0$ for some $c \in C \setminus \{q\}$ then the statement trivially holds. Thus, it remains to be shown that $\E(\gamma_q(D)^{K-1}\gamma_v(C \setminus \{q\}, \Sigma,D)) \neq 0$ when $\E(\gamma_c(D)^{K-1}\gamma_v(C \setminus \{q\}, \Sigma,D)) = 0$ for all $c \in C \setminus \{q\}$. This is directly implied by Lemma~\ref{thm:noMissPruneParents}
\end{proof}

\newpage

\section{Proofs from Section~\ref{sec:modelMiss}}
\subsection{Proof of Corollary~\ref{thm:neccIndGeneralMiss}}
  Let $v \in V$, and consider any set $C \subseteq A \subseteq V
  \setminus \{v\}$. Suppose $D \in \mathbb{R}^{p \times p}$ with $D_{s,t} \neq 0$ only if $t \in \bar{\an}(s)$. Then, for generic $B$ and error moments, if $\delta_v(C, A, S, D) \neq \check B(C\cup v)_{v,C}$, then $\E(\gamma_c^{K-1}(D)\gamma_v(C, S, D)) \neq 0$ for some $c \in C$.
\begin{proof}
The proof exactly follows that of Lemma~\ref{thm:neccIndGeneral}; however, some of the quantities in $G$ are replaced with the corresponding quantities in $\bar G$.

Without loss of generality, let $C$ be ordered such that $C = \{c_1, \ldots, c_{|C|}\}$ where $c_i \not \in \de(c_j)$ (note that this is in the original graph $G$) for any $j < i$. Note that
	\begin{equation}
	\begin{aligned}
	\gamma_v(C,S,D) &= \bar \varepsilon_v + \sum_{a \in \bar{\an}(v)} \bar\pi_{va}\bar \varepsilon_a - \sum_{c \in C} \delta_{vc}Y_c\\
	 &=\bar \varepsilon_v + \sum_{a \in \bar{\an}(v)} \bar\pi_{va}\bar\varepsilon_a - \sum_{c \in C} \delta_{vc}(\bar\varepsilon_c + \sum_{a \in \bar{\an}(c)} \bar \pi_{ca} \bar \varepsilon_a).
	\end{aligned}
	\end{equation}
	Suppose $i$ is the minimum index for which $\delta_{c_i} \neq \check B_{v, c_i}$ so that $\delta_{c_j} = \tilde B_{v,c_j}$ for all $j < i$. Then, the coefficient of $\varepsilon_{c_i}$ in $Y_v - \sum_{j < i}\delta_{v,c_j}Y_{c_j}$ is

	\begin{align*}
	\bar \pi_{v,c_i} - \sum_{j < i}\delta_{v, c_j} \bar \pi_{c_j, c_i} &= \bar \pi_{v,c_i} - \sum_{j < i}\check \beta_{v, c_j}\bar \pi_{c_j, c_i}\\
	&= \sum_{l \in \mathcal{\bar L}_{v,c_i}} W(l) -\sum_{j < i} \left[\left( \sum_{l \in \mathcal{\bar L}^{(c_j)}_{v,c_j}(C)} W(l)\right) \left(\sum_{l \in \mathcal{\bar L}_{c_j, c_i}} W(l)\right)\right] \\
	&= \sum_{l \in \mathcal{\bar L}_{v,c_i}} W(l) -\sum_{j < i} \left[\sum_{ l \in \mathcal{\bar L}^{(c_j)}_{v,c_i}(C)} W(l)\right]\\ 
	&= \sum_{l \in \mathcal{\bar L}^{(c_i)}_{v,c_i}} W(l) = \check B(C \cup v)_{v,c_i}.
	\end{align*}

	For all $j > i$, $c_j \not \in \de(c_i)$ so $Y_{c_j}$ does not include any terms of $\varepsilon_{c_i}$ (note this is not $\bar \varepsilon_{c_i}$). By assumption, $\delta_{c_i} \neq \check B_{v, c_i}$, so let $\delta_{c_i} =\check B_{v, c_i} - \alpha$ for $\alpha \neq 0$ so that
	\begin{equation}
	\gamma_v = \alpha \varepsilon_{c_i} + \eta \qquad \text{ and } \qquad \gamma_{c_i} = \varepsilon_{c_i} + \zeta,
	\end{equation}
	where $\eta$ and $\zeta$ do not contain $\varepsilon_{c_i}$. Then,

	\begin{align*}
	\E\left(\gamma_{c_i}^{K-1}(D)\gamma_v(C,S,D)\right) & = \E\left(\left[\varepsilon_{c_i}  + \zeta\right]^{K-1}\left[ \alpha \varepsilon_{c_i} + \eta\right] \right) \\
	& = \E\left(\left[\varepsilon_{c_i}^{K-1}  + \sum_{k = 0}^{K-2} \varepsilon_{c_i}^{k}\zeta^{K-1 -k}\right]\left[ \alpha \varepsilon_{c_i} + \eta\right] \right) \\
	& = \alpha\E\left(\varepsilon_{c_i}^{K}\right) + \E\left(\varepsilon_{c_i}^{K-1}\eta\right) +  \E\left(\left[\sum_{k = 0}^{K-2} \varepsilon_{c_i}^{k}\zeta^{K-1 -k}\right]\left[ \alpha \varepsilon_{c_i} + \eta\right] \right).
	\end{align*}
Since the last two terms do not involve $\E(\varepsilon_{c_i}^K)$, we can always select some $\E(\varepsilon_{c_i}^K)$ such that
	\begin{align*}
	\E\left(\varepsilon_{c_i}^{K}\right) \neq -\frac{\left(\E\left(\varepsilon_{c_i}^{K-1}\eta\right) +  \E\left(\left[\sum_{k = 0}^{K-2} \varepsilon_{c_i}^{k}\zeta^{K-1 -k}\right]\left[ \alpha \varepsilon_{c_i} + \eta\right] \right)\right)} {\alpha}
	\end{align*}
which	ensures that $\E\left(\gamma_{c_i}^{K-1}\gamma_v\right) \neq 0$.
\end{proof}

\subsection{Proof of Corollary~\ref{thm:neccIndDesMiss}}
Consider $v \in V$ and set $C \subseteq V\setminus \{v\}$. Let $D \in \mathbb{R}^{p \times p}$ such that $D_{s,t} \neq 0$ only if $t \in \bar{\an}(s)$.
Suppose $C \not \subseteq \bar{\an}(v)$, but that 
$\E(\gamma_c(D)^{K-1}\gamma_v(C, S,D)) = 0$
for all $c \in C$. Then for generic $B$ and error moments, $C_1 = C\cap \left[\bar{\an}(v)\setminus \bar{\sib}(v)\right]$,
\begin{equation*} \E(\gamma_c(D)^{K-1}\gamma_v(C_1, S ,D)) = 0\end{equation*} for all $c \in C$.

\begin{proof}
The proof exactly follows the proof of Lemma~\ref{thm:neccIndDes}, except replaces all quantities in $G$ with quantities in $\bar G$.
\end{proof}

\subsection{Proof of Lemma~\ref{thm:noMissPruneParentsMiss}}
Suppose $D = H_\mathcal{C}(\bar B)$ for some $H_\mathcal{C} \in \mathcal{D}$ with $\mathcal{C}=(C_s)_{s\in V}$ such that $C_s \subseteq \bar{\an}(s) \setminus \bar{\sib}(s)$ for all $s \in V$.  Let $v \in V$ be such that we have 
$
\E(\gamma_c(D)^{K-1}\gamma_v(D)) = 0$ for all $c \in \pspa(v)$.
If $q \in \big(\bar{\pa(v)}\setminus \pspa(v)\big) \cup \bar{\sib}(v)$, then for generic $B$ and error moments,  $\E \left(\gamma_q(D)^{K-1}\gamma_v(D)\right) \neq 0$.

\begin{proof}
Suppose $q \in \bar{\pa}(v)\setminus \pspa(v)$. Then there exists a directed path $l$ from $q$ to $v$ such that $l \setminus \{q\} \subseteq \irr(v)$ from $q$ to $v$. Since $C_v \subseteq \bar{\an}(v) \setminus \bar{\sib}(v)$, then $\irr(v) \cap C_v = \emptyset$ because $\irr(v) \subseteq \bar{\sib}(v)$. Thus, $l \cap C_v = \emptyset$ so Lemma~\ref{lem:nonZeroZeta} implies that $\E \left(\gamma_q(D)^{K-1}\gamma_v(D)\right) \neq 0$ for generic parameters.

Suppose $q \in \bar{\sib}(v)$. Then either $q \in \sib(\irr(v))$ or $\irr(q) \cap \sib(\irr(v)) \neq \emptyset$. If $q \in \sib(\irr(v))$, then there exists some path $l$ such that $l \setminus \{q\} \subseteq \irr(v)$ from $s_1$ to $v$ where either $q = s_1$ (if $q \in \irr(v)$) or $q \in \sib(s_1)$ (if $q \in \sib(\irr(v)) \setminus \irr(v)$). In this case, $l \subseteq \irr(v) \cap C_v = \emptyset$, so Lemma~\ref{lem:nonZeroZeta} implies that $\E \left(\gamma_q(D)^{K-1}\gamma_v(D)\right) \neq 0$ for generic parameters. If $\irr(q) \cap \sib(\irr(v)) \neq \emptyset$, then Lemma~\ref{lem:cousins} implies the desired result.
\end{proof}

\subsection{Proof of Lemma~\ref{thm:neccIndSibMiss}}
	Consider $v \in V$ and sets $A, C$ such that $C \subseteq A \subseteq V \setminus \{v\}$. Suppose $D = H_\mathcal{C}(\bar B)$ for some $H_\mathcal{C} \in \mathcal{D}$ with $\mathcal{C}=(C_s)_{s\in V}$ such that $C_s \subseteq \bar{\an}(s) \bar{\sib}(s)$ for all $s \in V$. Suppose $u \in C$ and $u \in  \bar{\sib}(v)$, then for generic $B$ and error moments, there exists some $q \in C$ such that $\E\left(\gamma_q(D)^{K-1}\gamma_v(C, \Sigma, D)\right) \neq 0$.

\begin{proof}
If $q \in \bar{\sib}(v)$, then either $q \in \sib(\irr(v))$ or $\irr(q) \cap \sib(\irr(v)) \neq \emptyset$. If $\irr(q) \cap \sib(\irr(v)) \neq \emptyset$, then Lemma~\ref{lem:cousins} implies the desired result. Now, consider the first case where $q \in \sib(\irr(v))$. Then either $q \in \sib(v)$ or $q \in \sib(\irr(v) \setminus \sib(v)$. If $q \in \sib(v)$, then the desired result is implied by Lemma~\ref{thm:neccIndSib}. If $q \in \sib(\irr(v) \setminus \sib(v)$ then there exists some path $l \subseteq \irr(v)$ from $s_1$ to $v$ where $q \in \sib(s_1)$. Furthermore, $l \subseteq \irr(v) \cap C_v = \emptyset$, so Lemma~\ref{lem:nonZeroZeta} implies that there exists some $c \in C$ such that $\E \left(\gamma_c(D)^{K-1}\gamma_v(D)\right) \neq 0$ for generic parameters. 
\end{proof}

\subsection{Proof of Corollary~\ref{thm:noMissPruneParentsFinalMis}}
	Suppose $D = \bar B$. For $v \in V$ and generic $B$ and error moments, suppose $\bar{\pa}(v) \subseteq C \subseteq \bar{\an}(v)\setminus \bar{\sib}(v)$ and $\E(\gamma_c(D)^{K-1}\gamma_v(C, \Sigma,D)) = 0$ for all $c \in C$. If $q \in C \setminus \bar{\pa}(v)$, the for all $c \in C$
	\begin{equation}
	\E(\gamma_q(D)^{K-1}\gamma_v(C \setminus \{q\}, \Sigma,D)) = 0.
	\end{equation}
	If $q \in \pa(v)$, then there exists some $c \in C$ such that
	\begin{equation}
	    \E(\gamma_q(D)^{K-1}\gamma_v(C \setminus \{q\}, \Sigma,D)) \neq 0.
	\end{equation}

\begin{proof}
Corollary~\ref{lem:parentBarCertified} implies for any $q \in C \setminus \bar{\pa}(v)$,  \begin{equation}
\begin{aligned}
\delta_{v}(C, \psAn(C), \Sigma,D) &= \bar B_{v,(C \setminus \{q\}, q)}
= \begin{bmatrix} \bar B_{v,(C \setminus \{q\})} & 0 \end{bmatrix}
= \begin{bmatrix} \delta_{v}(C\setminus \{q\}, \psAn(C\setminus \{q\}), \Sigma ,D) & 0 \end{bmatrix}
\end{aligned}
\end{equation}
so that
\begin{equation}
\begin{aligned}
\E(\gamma_q(D)^{K-1}&\gamma_v(C \setminus \{q\}, \Sigma,D)) =  \E(\gamma_q(D)^{K-1}\gamma_v(C, \Sigma,D)) = 0.
\end{aligned}
\end{equation}

Now consider the second statement when $q \in \bar{\pa}(v)$. If $\E(\gamma_c(D)^{K-1}\gamma_v(C \setminus \{q\}, \Sigma,D)) \neq 0$ for some $c \in C \setminus \{q\}$ then the statement trivially holds. Thus, it remains to be shown that $\E(\gamma_q(D)^{K-1}\gamma_v(C \setminus \{q\}, \Sigma,D)) \neq 0$ when $\E(\gamma_c(D)^{K-1}\gamma_v(C \setminus \{q\}, \Sigma,D)) = 0$ for all $c \in C \setminus \{q\}$. This is directly implied by Lemma~\ref{thm:noMissPruneParentsMiss}.
\end{proof}

\subsection{Proof of Lemma~\ref{lem:nonZeroZeta}}
Let $D = H_\mathcal{C}(\bar B)$ for some $H_\mathcal{C} \in \mathcal{D}$ with $\mathcal{C} = (C_s)_{s \in V}$ such that $C_s \subseteq \bar{\an}(s) \setminus \bar{\sib}(s)$ for all $s \in V$. Suppose there exists some path $l = s_1 \rightarrow s_2 \rightarrow \ldots s_{|l|-1} \rightarrow v$ such that $ l \cap C_v = \emptyset$. Further suppose that $u \in \sib(s_1) \setminus l$ and $u \not \in C_v$. Then for generic parameters
\begin{equation}
    \E\left(\gamma_v(D)^{K-1} \gamma_{s_1}(D)\right) \neq 0 \qquad \text{ and } \qquad     \E\left(\gamma_v(D)^{K-1} \gamma_u(D)\right) \neq 0,
\end{equation}
so that $s_1$ and $u$ will not be pruned from $\widehat{\sib}(v)$ by Alg~\ref{alg:checkInd}. Furthermore, for any $C$ such that $u \in C$, for generic parameters there exists some $c \in C$ such that
\begin{equation}
\E\left(\gamma_c(D)^{K-1}\gamma_v(C, \Sigma, D)\right) \neq 0,
\end{equation}
so that $u$ will not be certified into $\widehat \pa(v)$.
\begin{proof}
Because $C_v \subseteq \an(v)$, we can write:
\begin{equation}
    \gamma_v(D) = \varepsilon_v + \sum_{a \in \an(v)} \zeta_{v,a} \varepsilon_a.
\end{equation}
We first show that $\zeta_{v,s} \neq 0$ for all $s \in l$. Note that $\zeta_{v,s} = \pi_{v,s} - \sum_{a \in \an(v)}d_{v,a} \pi_{a,s}$ is a rational function of the parameters because $d_{v,a}$, $\pi_{a,s}$, and $\pi_{v,s}$ are rational functions of the parameters. Thus, showing that that $\zeta_{v,s} \neq 0$ for specific parameter values implies that it is non-zero for generic parameter values.

Specifically, consider the set of edgeweights where all edges in the path $l$ are set to $1$, and all other edges are set to $0$. Then, $\pi_{v,s'} = \pi_{s', s''} =  1$, for any $ s',s'' \in l$ and $\pi_{v,w'} = \pi_{w',w''} = 0$ if either $w'$ or $w''$ is not in $l$. Then, for any $D = H_\mathcal{C}(\bar B)$ where $\pspa(v) \cap l = \emptyset$, $d_{v,w} = 0$ for all $w \not \in l$ since the only non-zero path to $v$ is $l$. Furthermore, let $\omega_{s,s} = 1$ for all $s \in V$ and let $\omega_{s_1, u} = 1$. 

Since $l \cap \pspa(v) = \emptyset$, we have
\begin{equation}
\begin{aligned}
    \gamma_v(D) &= Y_v - \sum_{w \in \pspa(v)} d_{v, w} Y_w\\
    &= \sum_{s \in \an(v)} \pi_{v,s} \varepsilon_s - \sum_{w \in \pspa(v)} d_{v, w} Y_w  \\
    &= \sum_{s \in l} \pi_{v,s}\varepsilon_s. 
\end{aligned}
\end{equation}
Thus, $\zeta_{v,s} = \pi_{v,s} - \sum_{w \in \pspa(v)} d_{v,w} \pi_{w,s} =  \pi_{v,s} = 1$ for all $s \in l$. 
This implies that $\zeta_{v,s} \neq 0$ for generic $B$.

Furthermore, since all directed edges pointing into $s_1$ are $0$, then $\gamma_{s_1}(D) = \varepsilon_{s_1}$.  For notational convenience, let $\phi = \sum_{s \in l \setminus s_1} \zeta_{v,s} \varepsilon_s$ so that $\gamma_v(D) = \zeta_{v,s_1}\varepsilon_{s_1}+ \phi$. Then
\begin{equation}
\begin{aligned}
    \E(\gamma_v(D)^{K-1} \gamma_{s_1}(D)) &= \E \left(\left(\zeta_{v,{s_1}}\varepsilon_{s_1} + \phi \right)^{K-1} \varepsilon_{s_1}\right)\\
    & = \E \left(\zeta_{v,{s_1}}^{K-1}\varepsilon_{s_1}^K + \sum_{k = 0}^{K-2} \zeta_{v,{s_1}}^{k}\varepsilon_{s_1}^{k+1} \phi^{K-1-k}\right).
\end{aligned}
\end{equation}
Since $\sum_{k = 0}^{K-2} \zeta_{v,{s_1}}^{k}\varepsilon_{s_1}^{k+1} \phi^{K-1-k}$ does not include any terms with $\varepsilon_{s_1}^K$, we can pick 
\begin{equation}
\E(\varepsilon_{s_1}^K) \neq -\frac{\E \left(\sum_{k = 0}^{K-2} \zeta_{v,{s_1}}^k\varepsilon_{s_1}^{k+1} \phi^{K-1-k} \right)}{\zeta_{v,{s_1}}^{K-1}}  
\end{equation}
so that $\E(\gamma_v(D)^{K-1} \gamma_{s_1}(D)) \neq 0 $. Because $\E(\gamma_v(D)^{K-1} \gamma_{s_1}(D))$ is a rational function of the model parameters, this implies that $\E(\gamma_v(D)^{K-1} \gamma_u(D))$ is non-zero for generic model parameters.

Similarly, $\gamma_{u}(D) = \varepsilon_{u}$ so that
\begin{equation}
\begin{aligned}
    \E(\gamma_v(D)^{K-1} \gamma_{u}(D)) &= \E \left(\left(\zeta_{v,{s_1}}\varepsilon_{s_1} + \phi \right)^{K-1} \varepsilon_{u} \right)\\
    & = \E\left(\zeta_{v,{s_1}}^{K-1}\varepsilon_{s_1}^{K-1}\varepsilon_u + \varepsilon_u\sum_{k = 0}^{K-2} \zeta_{v,{s_1}}^{k}\varepsilon_{s_1}^{k} \phi^{K-1-k}\right).
\end{aligned}
\end{equation}
Since $\varepsilon_u\sum_{k = 0}^{K-2} \zeta_{v,{s_1}}^{k}\varepsilon_{s_1}^{k} \phi^{K-1-k}$ does not include any terms with $\varepsilon_{s_1}^{K-1}\varepsilon_u$ we can pick 
\begin{equation}
\E(\varepsilon_{s_1}^{K-1}\varepsilon_u) \neq -\frac{\E \left(\varepsilon_u \sum_{k = 0}^{K-2} \zeta_{v,{s_1}}^k\varepsilon_{s_1}^{k} \phi^{K-1-k} \right)}{\zeta_{v,{s_1}}^{K-1}}  
\end{equation}
so that $\E(\gamma_v(D)^{K-1} \gamma_{u}(D)) \neq 0$. This implies that $\E(\gamma_v(D)^{K-1} \gamma_{u}(D)) \neq 0$ for generic parameters.

We now show that $u$ will not be certified into $\widehat{\pa}(v)$. Recall that 
\begin{equation}
    \delta_v(C, A, \Sigma, D) = \left\{\left[(I-D)_{C,A} \Sigma_{A,C}\right]^{-1}(I-D)_{C,A}\Sigma_{A,v}\right\}^T = \left(\Sigma_{C,C}\right)^{-1} \Sigma_{C,v}.
\end{equation}
There are no non-zero treks between $w$ and $v$ for any $w \not \in l \cup u$, so $\Sigma_{w, v} = 0$ for all $w \in C \setminus u$. Furthermore, $\Sigma_{u,v} = 1$ since there is a single trek between $u$ and $v$ and all edge weights on that trek are $1$. Furthermore, $\Sigma_{C,C}$ is a diagonal matrix with $\omega_{c,c} = 1$ on the diagonals (i.e., $\Sigma_{C,C}$ is the identity) since there are no non-zero treks between any nodes in $C$. Thus, $\delta_v(C,A,\Sigma, D)$ is $0$ except for the element corresponding to $u$ which is $ 1 / \omega_{u,u} = 1$.

Thus, 
\begin{equation}
\gamma_v(C, \Sigma, D) = \sum_{s \in l} \varepsilon_s - \frac{1}{\omega_{u,u}} \varepsilon_u.
\end{equation}
For notational convenience, let $\phi = \sum_{s \in l} \varepsilon_s$ and let $\zeta_{v,u} = -1 / \omega_{u,u}$. Using a similar argument as before, picking 
\begin{equation}
\E(\varepsilon_{u}^{K}) \neq -\frac{\E \left(\sum_{k = 0}^{K-2} \zeta_{v,u}^k\varepsilon_{u}^{k+1} \phi^{K-1-k} \right)}{\zeta_{v,u}^{K-1}}  
\end{equation}
implies that $\E(\gamma_v(C, \Sigma, D)^{K-1} \gamma_u(D)) \neq 0 $.
\end{proof}

\subsection{Proof of Lemma~\ref{lem:cousins}}
Suppose $\irr(u) \cap \sib(\irr(v)) \neq \emptyset$ and $D = H_{\mathcal{C}}(\bar{B})$ for some $\mathcal{C} = (C_s)_{s \in V}$ such that $C_s \subseteq \bar{\an}(s) \setminus \bar{\sib}(s)$ for all $s \in V$. Then, for generic parameters
\begin{equation}
     \E\left(\gamma_v(D)^{K-1} \gamma_u(D)\right) \neq 0
\end{equation}
so that $u$ will not be pruned from $\widehat{\sib}(v)$ by Alg~\ref{alg:checkInd}. Furthermore, for any $C \subseteq V \setminus v$ such that $u \in C$, for generic parameters, there exists some $c \in C$ such that
\begin{equation}
\E\left(\gamma_c(D)^{K-1}\gamma_v(C, \Sigma, D)\right) \neq 0,
\end{equation}
so that $u$ will not be certified into $\widehat \pa(v)$.

\begin{proof}
We first show that $u$ will not be pruned out of $\widehat{\sib}(v)$ by Alg~\ref{alg:checkInd}. We subsequently show that $u$ will not be certified into $\widehat{\pa}(v)$.

\vspace{1.5em}
\textbf{Not pruned from $\widehat{\sib}(v)$}:
Let $q \in \irr(u) \cap \sib(\irr(v))$ and $w \in \sib(q) \cap \irr(v)$. Then there exists a directed path $l_1$ from $w$ to $v$ such that $l_1 \subseteq \irr(v)$ so that $C_{v} \cap l_1 = \emptyset$. As shown in Lemma~\ref{lem:nonZeroZeta}, this implies that $\gamma_v(D) = \zeta_{v,w} \varepsilon_w + \phi_v$ where $\zeta_{v,w} \neq 0$ for generic parameters and some term $\phi_v$ which does not include $\varepsilon_w$. A similar statement can be made for $q$ and $u$ so that $\gamma_u(D) = \zeta_{u,q}\varepsilon_q + \phi_u$. Thus,
\begin{equation}
\begin{aligned}
    \E\left(\gamma_v(D)^{K-1} \gamma_u(D)\right) &= \E\left(\left[\zeta_{v,w}\varepsilon_w + \phi_v\right]^{K-1} \left[\zeta_{u,q}\varepsilon_q + \phi_u\right]  \right)\\
    &= \E\left(\zeta_{v,w}^{K-1}\varepsilon_w^{K-1}\left[\zeta_{u,q}\varepsilon_q + \phi_u\right] +\left[\zeta_{u,q}\varepsilon_q + \phi_u\right]\sum_{k = 0}^{K-2}\zeta_{v,w}^{k}\varepsilon_w^{k} \phi_v^{K-k-1}  \right)\\
    &= \E\left(\zeta_{v,w}^{K-1}\varepsilon_w^{K-1}\zeta_{u,q}\varepsilon_u +\zeta_{v,w}^{K-1}\varepsilon_v^{K-1}\phi_q \right. \\ 
    & \quad \qquad +\left. \left[\zeta_{u,q}\varepsilon_q + \phi_u\right]\sum_{k = 0}^{K-2}\zeta_{v,w}^{k}\varepsilon_w^{k} \phi_v^{K-k-1}  \right).\\
\end{aligned}
\end{equation}
Since $\zeta_{v,w}^{K-1}\varepsilon_w^{K-1}\phi_u +\left[\zeta_{u,q}\varepsilon_q + \phi_u\right]\sum_{k = 0}^{K-2}\zeta_{v,w}^{k}\varepsilon_w^{k} \phi_v^{K-k-1} $ does not involve any terms with $\varepsilon_w^{K-1}\varepsilon_q$, we can select
\begin{equation}
    \E\left(\varepsilon_w^{K-1}\varepsilon_q\right) \neq - \frac{\E\left(\zeta_{v,w}^{K-1}\varepsilon_w^{K-1}\phi_u +\left[\zeta_{u,q}\varepsilon_q + \phi_u\right]\sum_{k = 0}^{K-2}\zeta_{v,w}^{k}\varepsilon_w^{k} \phi_v^{K-k-1}  \right)}{\zeta_{v,w}^{K-1}\zeta_{u,q}}    
\end{equation}
so that $\E\left(\gamma_v(D)^{K-1} \gamma_u(D)\right) \neq 0$. This implies that $\E\left(\gamma_v(D)^{K-1} \gamma_u(D)\right) \neq 0$ for generic parameters.

\vspace{1.5em}
\textbf{Not certified into $\widehat{\pa}(v)$}:
We now show that $u$ will not be certified into $\widehat{\pa}(v)$ so that $u$ will remain in $\widehat{\sib}(v)$. Specifically, we show that any set $C$ will not be certified if $u \in C$. If $u\in \sib(\irr(v))$ or if $C \cap \sib(\irr(v)) \neq \emptyset$, then Lemma~\ref{lem:nonZeroZeta} directly completes the proof. Thus, it remains to be shown that $C$ will not be certified even if $u \not \in \sib(\irr(v))$ and $C \cap \sib(\irr(v)) = \emptyset $. Without loss of generality, assume that $1, \ldots, p$ is a valid causal ordering of $V$.
We consider two cases: (1) $\irr(u) \cap \irr(v) \neq \emptyset$ and (2) $\irr(u) \cap \irr(v) = \emptyset$.

For the first case, let $w = \max(\irr(u) \cap \irr(v))$. Then there exist two directed paths $l_1$ and $l_2$ such that $l_1$ is a directed path from $w$ to $v$ with $l_1 \subseteq \irr(v)$, $l_2$ is a directed path from $w$ to $u$ with $l_2 \subseteq \irr(u)$, and $\{w\} = l_1 \cap l_2$. Let $s_0 = \min\{s \in C : C_s \cap l_2 = \emptyset\}$ so that $s_0$ is the most upstream node in $l_2$ whose currently estimated parent set, $C_s$, does not contain any other nodes in $l_2$. The set over which the $\min$ is taken is non-empty because $l_2 \subset \irr(u)$ so that $C_u \cap l_2 = \emptyset$. For each $s \in C$, let $m_s = \max(l_2 \cap C_s)$ where $m_s = 0$ if $l_2 \cap C_s$ is empty. Now set the directed edges on $l_1$ and all directed edges on $l_2$ before $s_0$ to $1/p^3$. Set $\omega_{v,v} = 1$ for all $v \in V$, and set all other directed and bidirected edgeweights to $0$. Finally, for any node $s$ in $l_2$, let $L(s)$ denote the position of $s$ in $l_2$; i.e., if $l = s_1 \rightarrow s_2 \rightarrow s_3 \ldots$ then $L(s_i) = i$.

Suppose $s \in C \cap l_2$ and $s < s_0$. Since the only non-zero directed edges are within $l_2$ and $m_s = \max(C_s \cap l_2)$, then all non-zero directed paths from $C_s$ to $s$ must pass through $m_s$. Thus, for $C_s$ the marginal direct effect is $0$ for any $t \neq m_s$ and the marginal direct effect of $m_s$ is $\pi_{s, m_s} = p^{-3(L(s) - L(m_s))}$. Thus, for $s < s_0$  
\begin{equation}
\begin{aligned}
    \gamma_{s}(D) &=  Y_{s} - D_{s, C_{s}}Y_{C_{s}} = Y_{s} - \pi_{s, m_s} Y_{m_s}\\
&= \varepsilon_{s} + \sum_{\substack{ s' \in l_2\\ s > s' > m_s}}\pi_{s,s'}\varepsilon_{s'} + \pi_{s, m_s}Y_{m_s} - \pi_{s, m_s} Y_{m_s}\\
&= \varepsilon_{s} + \sum_{\substack{ s' \in l_2\\ s > s' > m_s}}\pi_{s,s'}\varepsilon_{s'}
\end{aligned}
\end{equation}
and
\begin{equation}
Y_{s} =  \varepsilon_{s} + \sum_{\substack{ s' \in l_2\\ s' < s }}\pi_{s, s'}\varepsilon_{s'}.
\end{equation}
Because $C_{s_0} \cap l_2 = \emptyset$, no node in $C_{s_0}$ has a non-zero directed path to $s$. Thus, $D_{s_0, V}$ is the zero vector and  
\begin{equation}
\begin{aligned}
    \gamma_{s_0}(D) = Y_{s_0} =  \varepsilon_{s} + \pi_{s_0, w} \varepsilon_w +  \sum_{ s \in l_2 \setminus w}\pi_{s_0,s}\varepsilon_{s}.
\end{aligned}
\end{equation}

For all other $s \in C$, then either $s \not \in l_2$ or $s \in l_2$ but $s > s_0$. Then all directed paths into $s$ have weight $0$, so $D_{s, V} = 0$. Thus,
\begin{equation}
\begin{aligned}
    \gamma_{s}(D) &=  Y_{s} = \varepsilon_{s}.
\end{aligned}
\end{equation}
Notably, $\gamma_{s}(D)$ only contains a term with $\varepsilon_w$ if $s = s_0$. We now show that under these parameters, when checking the certificate for $C$, $\delta_{v,s_0}(C, D, \Sigma) \neq 0$.

Note that $(I-D)_{C,A} \Sigma_{A, C} = \E\left((Y_C - D_{C,A}Y_A) Y_C^T\right) = \E\left(\gamma_C Y_C^T\right)$ and $(I-D)_{C,A} \Sigma_{A, v} = \E\left((Y_C - D_{C,A}Y_A) Y_v \right) = \E\left(\gamma_C Y_v \right)$. Then, letting $M = (I-D)_{C,A} \Sigma_{A, C}$ for convenience, we first show that $M$ is diagonally dominant so that it is non-singular and $M_{V\setminus s_0, V \setminus s_0}$ is also non-singular.

For any $s \in C$, since all edgeweights are positive and $\E(\varepsilon_s \varepsilon_r) = 0$ for $s,r \in C$, we have 
\begin{equation}
    M_{s,s} = \E\left([\varepsilon_{s} + \sum_{\substack{ s' \in l_2\\ s > s' > m_s}}\pi_{s,s'}\varepsilon_{s'}][\varepsilon_{s} + \sum_{\substack{ s' \in l_2\\ s' < s }}\pi_{s, s'}\varepsilon_{s'}]\right) > \E(\varepsilon_s^2) = 1.
\end{equation}

If $s > s_0$ then $\gamma_s = \varepsilon_s$, but since all directed edges downstream of $s$ are set to $0$, then $\varepsilon_s$ does not appear in any other $Y_{C \setminus s}$ so $M_{s,C\setminus s} = 0$. Similarly, since $Y_s = \varepsilon_s$, then then $\varepsilon_s$ does not appear in any other $\gamma_{C \setminus s}$ so $M_{C\setminus s, s} = 0$. It then remains to characterize $M_{r,s}$ when $r,s M s_0$. In this case,
\begin{equation}\begin{aligned}
    M_{s,r} &= \E\left([\varepsilon_{s} + \sum_{\substack{ s' \in l_2\\ s > s' > m_s}}\pi_{s,s'}\varepsilon_{s'}][\varepsilon_{r} + \sum_{\substack{ s' \in l_2\\ s' < r }}\pi_{r, s'}\varepsilon_{s'}]\right)\\
    &= \sum_{\substack{ s' \in l_2\\ s' < r\\ s > s' > m_s}} \pi_{s,s'} \pi_{r,s'}\E\left(\varepsilon_{s'}^2\right)\\
    & < p \max_{j,k \in V}(\pi_{j,k}^2) < \frac{1}{p^2}.
\end{aligned}
\end{equation}

Thus, $\vert M_{s,r} \vert < \frac{1}{p^2}$ for any $s \neq r$ and $M_{s,s} \geq 1$. This implies that $M$ and $M_{V \setminus s_0, V \setminus s_0}$ are both diagonally dominant so that
\begin{equation}
    \vert (M^{-1})_{s_0,s_0}\vert = \left \vert\frac{\text{det}(M_{V \setminus s_0, V \setminus s_0})}{\text{det}(M)} \right \vert \neq 0.
\end{equation}

Letting $F = (I-D)_{C,A} \Sigma_{A, v}$, we have:
\begin{equation}
(F)_{s} = \E(\gamma_{s} Y_{v}) = \E\left(\left[\varepsilon_{s} + \sum_{\substack{ s' \in l_2\\ s > s' > m_s}}\pi_{s,s'}\varepsilon_{s'}\right] \left[\varepsilon_v + \sum_{s'' \in l_1} \pi_{v,s''} \varepsilon_{s''}\right] \right).
\end{equation}
Since $l_1 \cap l_2 = w$ and all covariances are set to $0$, then
$F_s \neq 0$ only if $\gamma_s$ contains $\varepsilon_w$. Thus, $F_s =
0$ for all $s \neq s_0$ and $F_{s_0} = \E(\pi_{s_0,w} \pi_{v,w}
\varepsilon_w^2) \neq 0$. Combining all the results, we then have that
the $s_0$ element of $\delta_{v,C}(C,D,\Sigma)$ is $M_{s_0,s_0}
F_{s_0} \neq 0$. Since $Y_{s_0}$ is the only $Y$ which has an
$\varepsilon_{s_0}$ term, it holds that
\begin{equation}
    \gamma_v(C,\Sigma, D) = Y_v - \delta_{v,C} Y_C = \delta_{v,s_0} \varepsilon_{s_0} + \phi_v,
\end{equation}
where $\phi_v$ does not involve $\varepsilon_{s_0}$. Similarly, we can write
\begin{equation}
    \gamma_{s_0}(D) = Y_{s_0} - D_{s_0,V} Y = \varepsilon_{s_0} + \phi_{s_0}, 
\end{equation}
where $\phi_{s_0}$ does not involve $\varepsilon_{s_0}$. Then, 
\begin{equation}
\begin{aligned}
    \E\left(\gamma_{s_0}(D)^{K-1} \gamma_v(C, \Sigma, D)\right) &= \E\left(\left[\varepsilon_{s_0} + \phi_{s_0}\right]^{K-1} \left[\delta_{v,s_0}\varepsilon_{s_0} + \phi_v\right]  \right)\\
    &= \E\left(\varepsilon_{s_0}^{K-1}\left[\delta_{v,{s_0}}\varepsilon_{s_0} + \phi_v\right] + \left[\delta_{v,{s_0}}\varepsilon_{s_0} + \phi_v\right]\sum_{k = 0}^{K-2}\varepsilon_{s_0}^{k} \phi_{s_0}^{K-k-1}  \right)\\
&= \E\left(\varepsilon_{s_0}^{K}\delta_{v,{s_0}} + \varepsilon_{s_0}^{K-1}\phi_v + \left[\delta_{v,{s_0}}\varepsilon_{s_0} + \phi_v\right]\sum_{k = 0}^{K-2}\varepsilon_{s_0}^{k} \phi_{s_0}^{K-k-1}  \right).
\end{aligned}
\end{equation}
Therefore, selecting
\begin{equation}
\E\left(\varepsilon_{s_0}^{K}\right) \neq -\frac{ \E\left(\varepsilon_{s_0}^{K}\delta_{v,{s_0}} + \varepsilon_{s_0}^{K-1}\phi_v + \left[\delta_{v,{s_0}}\varepsilon_{s_0} + \phi_v\right]\sum_{k = 0}^{K-2}\varepsilon_{s_0}^{k} \phi_{s_0}^{K-k-1}  \right)}{\delta_{v,{s_0}}}     
\end{equation}
implies that $ \E\left(\gamma_{s_0}(D)^{K-1} \gamma_v(C, \Sigma, D)\right) \neq 0$ and thus $ \E\left(\gamma_{s_0}(D)^{K-1} \gamma_v(C, \Sigma, D)\right) \neq 0$ for generic parameters.

Now we slightly modify the argument above to the case where $\irr(u) \cap \irr(v) = \emptyset$. Let $w \in \sib(\irr(v)) \cap \irr(u)$, and let $q \in \sib(w) \cap \irr(v)$. Select two paths, $l_1$ and $l_2$, such that $l_1 \cap l_2 = \emptyset$ where $l_1$ is be a path from $q$ to $v$ which only passes through $\irr(v)$ and $l_2$ to be a path from $w$ to $u$ which only passes through $\irr(u)$. Similar to before, let $s_0 = \min(\{s \in C : C_s \cap l_2 = \emptyset\})$ so that $s_0$ is the most upstream node in $l_2$ whose currently estimated parent set, $C_s$, does not contain any other nodes in $l_2$. The set over which the $\min$ is taken is non-empty because $l_2 \subset \irr(u)$ so that $C_u \cap l_2 = \emptyset$. For all $s \in C \cap l_2$ such that $s < s_0$, let $m_s = \max(l_2 \cap C_2)$ where $C_s$ is the set of sets in $\mathcal{C}$. Now consider the set of parameters where all directed edges on $l_1$ and all directed edges before $s_0$ on $l_2$ are set to $1/p^3$. Let $\E(\varepsilon_w \varepsilon_q) = 1$, and set all other directed and bidirected edgeweights to $0$. In addition, set $\omega_{v,v} = 1$ for all $v \in V$. 

Using the same argument as before, $M = (I-D)_{C,A} \Sigma_{A, C}$ is diagonally dominant so $(M^{-1})_{s_0, s_0 } \neq 0$. Furthermore, letting $F = (I-D)_{C,A} \Sigma_{A, v}$, we have:
\begin{equation}
(F)_{s} = \E(\gamma_{s} Y_{v}) = \E\left(\left[\varepsilon_{s} + \sum_{\substack{ s' \in l_2\\ s > s' > m_s}}\pi_{s,s'}\varepsilon_{s'}\right] \left[\varepsilon_v + \sum_{s \in l_1} \pi_{v,s} \varepsilon_s\right] \right).
\end{equation}
Since $l_1 \cap l_2 = \emptyset$ and the only errors with non-zero covariance are $w$ and $q$, then $(F)_{s} \neq 0$ only if $\gamma_{s}$ contains $\varepsilon_w$. By construction, only $\gamma_{s_0}$ contains $\varepsilon_w$ so $F$ is zero except for the element corresponding to $s_0$ and $F_{s_0} =\E(\pi_{s_0,w}\varepsilon_w \pi_{v,q} \varepsilon_q) = \pi_{s_0,w}\pi_{v,q}\E(\varepsilon_w \varepsilon_q) \neq 0 $. This implies that $\delta(C, \Sigma, D)_{s_0} = (M^{-1})_{s_0,s_0} F_{s_0} \neq 0$.

Since $Y_{s_0}$ is the only $Y$ which has an $\varepsilon_{s_0}$ term,
we obtain that
\begin{equation}
    \gamma_v(C,\Sigma, D) = Y_v - \delta_{v,C} Y_c = \delta_{v,s_0} \varepsilon_{s_0} + \phi_v, 
\end{equation}
where $\phi_v$ does not include any terms with $\varepsilon_{s_0}$. Similarly,  $\gamma_{s_0}(D) = \varepsilon_{s_0} + \phi_{s_0}$ where $\phi_{s_0}$ does not involve $\varepsilon_{s_0}$. Then, 
\begin{equation}
\begin{aligned}
    \E\left(\gamma_{s_0}(D)^{K-1} \gamma_v(C, \Sigma, D)\right) &= \E\left(\left[\varepsilon_{s_0} + \phi_{s_0}\right]^{K-1} \left[\delta_{v,s_0}\varepsilon_{s_0} + \phi_v\right]  \right)\\
    &= \E\left(\varepsilon_{s_0}^{K-1}\left[\delta_{v,{s_0}}\varepsilon_{s_0} + \phi_v\right] + \left[\delta_{v,{s_0}}\varepsilon_{s_0} + \phi_v\right]\sum_{k = 0}^{K-2}\varepsilon_{s_0}^{k} \phi_{s_0}^{K-k-1}  \right)\\
&= \E\left(\varepsilon_{s_0}^{K}\delta_{v,{s_0}} + \varepsilon_{s_0}^{K-1}\phi_v + \left[\delta_{v,{s_0}}\varepsilon_{s_0} + \phi_v\right]\sum_{k = 0}^{K-2}\varepsilon_{s_0}^{k} \phi_{s_0}^{K-k-1}  \right)\\
\end{aligned}
\end{equation}
Thus, selecting
\begin{equation}
\E\left(\varepsilon_{s_0}^{K}\right) \neq -\frac{ \E\left(\varepsilon_{s_0}^{K}\delta_{v,{s_0}} + \varepsilon_{s_0}^{K-1}\phi_v + \left[\delta_{v,{s_0}}\varepsilon_{s_0} + \phi_v\right]\sum_{k = 0}^{K-2}\varepsilon_{s_0}^{k} \phi_{s_0}^{K-k-1}  \right)}{\delta_{v,{s_0}}}.     
\end{equation}
implies that $ \E\left(\gamma_{s_0}(D)^{K-1} \gamma_v(C, \Sigma, D)\right) \neq 0$ and thus $ \E\left(\gamma_{s_0}(D)^{K-1} \gamma_v(C, \Sigma, D)\right) \neq 0$ for generic parameters.
\end{proof}

\end{document}